\tikzset{main node/.style={circle,fill=blue!40,draw,minimum size=1.5cm,inner sep=0pt},}
\tikzset{small dot/.style={fill=blue,circle,scale=1.5}}
\tikzset{small dot two/.style={fill=red,circle,scale=1.5}}
\tikzset{small dot blue/.style={fill=blue!40,circle,scale=1.5}}
\tikzset{small dot 3/.style={fill=blue!40,circle,scale=1.25}}
\tikzset{small dot new/.style={fill=blue!40,circle,scale=1.2}}
\tikzset{small dot new 2/.style={fill=black,circle,scale=1.2}}
\tikzset{small dot new 3/.style={fill=red!40,circle,scale=1.2}}
\tikzset{gas/.style={fill=blue!40,circle,scale=0.8}}
\tikzset{smaller dot new 2/.style={fill=black,circle,scale=0.3}}
\newtheorem{prop}{Proposition}
\newtheorem{coro}{Corollary}
\newtheorem{rem}{Remark}
\newtheorem{defi}{Definition}
\newtheorem{lemma}{Lemma}
\newtheorem{theo}{Theorem}
\newtheorem{conj}{Conjecture}
\title{Universality and classification of elementary thermal operations}
\author{Pedro Hack$^1$ and Christian B.~Mendl$^{1,2}$}
\date{\today\\[2ex]%
    {\small $^1$Technical University of Munich, School of CIT, Department of Computer Science, Boltzmannstra{\ss}e 3, 85748 Garching, Germany\\%
    $^2$Technical University of Munich, Institute for Advanced Study,\\ Lichtenbergstra{\ss}e 2a, 85748 Garching, Germany\\[2ex]%
    }}
\begin{document}

\maketitle

\begin{abstract}
Elementary thermal operations are thermal operations that act non-trivially on at most two energy levels of a system at the same time. They were recently introduced in order to bring thermal operations closer to experimental feasibility. A key question to address is whether any thermal operation could be realized via elementary ones, that is, whether elementary thermal operations are \emph{universal}. This was shown to be false in general, although the extent to which elementary thermal operations are universal remained unknown. Here, we characterize their universality in both the sense described above and a weaker one, where we do not require them to decompose any thermal operation, but to be able to reproduce any input-output pair connected via thermal operations. Moreover, we do so for the two variants of elementary thermal operations that have been proposed, one where only deterministic protocols are allowed and one where protocols can be conditioned via the realization of a random variable, and provide algorithms to emulate thermal operations whenever their elementary counterparts are (weakly or not) universal. Lastly, we show that non-deterministic protocols reproduce thermal operations better than deterministic ones in most scenarios, even when they are not universal. Along the way, we relate elementary thermal operations to random walks on graphs.
\end{abstract}

\section{Introduction}
The fundamental tools we are interested in here are the \emph{thermal operations} \cite{horodecki2013fundamental,lostaglio2018elementary,brandao2013resource}. Thermal operations are quantum channels that, given a system in state $\rho$ with Hamiltonian $H_S$ that is in contact with a heat bath with Hamiltonian $H_B$, 
take the form
\begin{equation}
\label{quan to}
    \mathcal E (\rho) = \text{Tr}_B \left[U \left(\rho \otimes \frac{e^{-\beta H_B}}{{\text{Tr}\left(e^{-\beta H_B}\right)}}\right) U^\dagger \right],
\end{equation}
where $U$ is an energy-preserving unitary 
($[U,H_S+H_B] = 0$), $\text{Tr}_B$ is the partial trace over the heat bath and $\beta$ is the inverse temperature.\footnote{The restriction to an energy-preserving unitary is justified via the notion of passivity \cite{lostaglio2018elementary,pusz1978passive,lenard1978thermodynamical}.}

The study of thermal operations has been pursued in a field known as \emph{resource theory}. The fundamental ideas in resource theory appeared in the chemical physics literature and are originally due to Ruch and collaborators \cite{ruch1975diagram,ruch1976principle,mead1977mixing,ruch1978mixing}, Alberti and Uhlmann \cite{alberti1981dissipative,alberti1982stochasticity} and Zylka \cite{zylka1985note}. Despite this, some of the basic tools were introduced outside the realm of physics and are still used in areas ranging from information theory to economics \cite{arnold2007majorization,arnold2018majorization}, with standard mathematical references in the field being the books by Marshall et al. \cite{marshall_inequalities:_2011} and Bhatia \cite{bhatia2013matrix}. Regarding physics, an overview of key results and areas of application can be found in the recent review by Lostaglio \cite{lostaglio2019introductory} and the book by Sagawa \cite{sagawa2022entropy}. The fundamental contributions to resource theory include \cite{horodecki2013fundamental,brandao2015second,gour2018quantum}, with some of the recent advances on the topic being \cite{alhambra2016fluctuating,mazurek2019thermal,shiraishi2020two,lostaglio2022continuous,surace2023state,czartowski2023thermal}. Regarding applications, algorithmic cooling should be highlighted \cite{lostaglio2018elementary,alhambra2019heat,korzekwa2022optimizing}.

The conditions under which some $\rho'$ can via achieved from $\rho$ via some thermal operation $\mathcal E$, $\rho'=\mathcal E (\rho)$, constitute a long-standing challenge in general. However, whenever $\rho$ or $\rho'$ are energy incoherent, the challenge simplifies and the transitions between $\rho$ and $\rho'$ are fully characterized in terms of their associated population vectors $p$ and $p'$. In particular, we can translate the question regarding transitions to the \emph{thermo-majorization} relation on probability distributions. More specifically, $\rho'$ can be achieved from $\rho$ provided $p'$ is thermo-majorized by $p$ \cite{horodecki2013fundamental,lostaglio2015quantum}. Moreover, the latter condition is known to be equivalent to the existence of some 
%In particular, we begin following \cite{lostaglio2018elementary} and consider all maps $\varepsilon$ on an $|\Omega|$-dimensional %system in state $\rho$ with Hamiltonian $H_S$ that is in contact with a heat bath with Hamiltonian $H_B$ and 
 %Taking $p, q \in \mathbb P_\Omega$ to be, respectively, the populations of two states of system $\rho, \sigma$, we have that
%\begin{equation}
 %   \varepsilon(\rho)=\sigma \text{ implies } G p = q, \text{ where } G_{i,j} \coloneqq \bra{i}\left(\varepsilon(\ket{j} \bra{j}\right) \ket{i}.
%\end{equation}
%In this scenario,
stochastic matrix $G$ such that $p'=Gp$ and $G g = g$, where
\begin{equation*}
    g = \frac{1}{Z} \left( e^{-\beta E_1}, \dots, e^{-\beta E_{|\Omega|}}\right)
\end{equation*}
is the Gibbs distribution associated to $H_S$ and $Z = \sum_{i=1}^{|\Omega|}  e^{-\beta E_i}$ its partition function.
That is, $G$ is a $g$-stochastic matrix that maps $p$ to $p'$.\footnote{In general, for $d \in \mathbb P_\Omega$, we denote $M \in \mathcal M_{|\Omega|, |\Omega|}(\mathbb R)$ as $d$-stochastic provided it is stochastic, i.e., 
\begin{equation*}
    \sum_{i=1}^{|\Omega|} M_{i,j}=1 \text{ for } 1 \leq j \leq |\Omega|,
\end{equation*}
and has $d$ as equilibrium distribution $M d =d$ \cite{marshall_inequalities:_2011}. Here, $\mathbb P_\Omega$ stands for the set of probability distributions over some finite set $\Omega$ and 
$\mathcal M_{|\Omega|, |\Omega|}(\mathbb R)$ for the set of $|\Omega| \times |\Omega|$ matrices with real entries.} Lastly, thermo-majorization relations provide fundamental constraints for general thermodynamic transitions since they characterize the population dynamics induced by arbitrary thermal operations \cite{lostaglio2018elementary,horodecki2013fundamental,korzekwa2016coherence}.

\begin{comment}
    If we assume that $E_i \geq E_{i+1}$ for $1 \leq i < |\Omega|$, then it is not difficult to see that, if we assume $G$ to act non-trivially on the levels $i,j$ with $1 \leq i < j \leq |\Omega|$, then there exists some $\lambda \in [0,1]$ such that 
%we can parameterize such a $G$ via some $\lambda \in [0,1]$ in the following form
\begin{equation}
\label{g eto}
    G= \begin{pmatrix}
1- \lambda e^{-\beta \left( E_i-E_j\right)} & \lambda \\
\lambda e^{-\beta \left( E_i-E_j\right)} & 1-\lambda
\end{pmatrix} \bigoplus \mathbb{I}_{\setminus (i,j)}.
\end{equation}
\end{comment}

In this context, a natural subset of the $g$-stochastic matrices considered in resource theory is the one where we only allow $G$ to act non-trivially on at most two elements of the state space, the so-called \emph{elementary thermal operations} \cite{lostaglio2018elementary}. Theoretically, this subset of stochastic matrices has already been considered in the study of the fundamental notion of uncertainty known as majorization for its simplicity \cite{marshall_inequalities:_2011,hardy1952inequalities,bhatia2013matrix}. However, from a resource-theoretic point of view, the recent interest in elementary thermal operations comes from experimental considerations. To draw an analogy, the situation is quite similar to the decomposition of unitary matrices in terms of circuits involving at most two-level gates \cite{reck1994experimental}, which was pursued given that experimental implementations of such gates via optical devices are well-known \cite{yurke19862}.\footnote{Regarding the formal relation between both questions, it is easy to see that the only stochastic matrices that are unitary are the permutation matrices. ($M \in \mathcal M_{|\Omega|, |\Omega|}(\mathbb R)$ is a \emph{permutation} matrix if it has a single one in each row and column with the remaining entries being zero \cite{marshall_inequalities:_2011}.)} 
In our resource-theoretic context, the main issue is that universal sets of thermal operations, like the so-called \emph{crude} operations \cite{perry2018sufficient}, often require control over the global system and heat bath. The fact that this issue is considerably improved upon when using elementary thermal operations, which can be well reproduced via the collision or the Jaynes-Cummings model \cite{lostaglio2018elementary,jaynes1963comparison,ziman2005description}, motivated their recent introduction in resource theory \cite{lostaglio2018elementary}. 

Following the analogy with \cite{reck1994experimental}, a fundamental question regarding elementary thermal operations is to what extent they can be used to implement thermal operations, that is, whether they are universal or not. A question analogous to that in \cite{reck1994experimental} would be whether every $g$-stochastic matrix can be decomposed as a product of elementary thermal operations. A more general question \cite{lostaglio2018elementary} would be whether the same holds if we take several sets of products of elementary thermal operations and  condition which one we use on the realization of some random variable (for instance, a coin toss). This is equivalent to asking whether every $g$-stochastic matrix can be decomposed as a convex combination of products of elementary thermal operations. (It should be noted that, while the incorporation of such random variable has the clear drawback that post-processing will be required, its advantages will become clear later on. For the moment, it suffices to say that its incorporation does not seem to add further restrictions regarding experimental applicability.)  

%A basic question raised in \cite{lostaglio2018elementary} concerns the universality of elementary thermal operations.

The fundamental questions above have not yet been fully answered, although some results are known. In this regard, \cite{lostaglio2018elementary} showed that elementary thermal operations (in the stronger form and, hence, in the weaker one without convex combinations), are not universal in general, that is, for any Gibbs distribution.
However, whenever the Gibbs distribution is uniform, it is well-known that they are universal \cite{muirhead1902some,birkhoff1946tres,hardy1952inequalities}. Despite these results, no characterization of the universality of elementary thermal operations in terms of $g \in \mathbb P_\Omega$ is known. Moreover, there is no characterization of universality in the weaker sense, where we do not necessarily require elementary thermal operations to decompose any thermal operation, but we ask whether we can find some elementary thermal operations connecting each pair $p,p' \in \mathbb P_\Omega$ that is connected by thermal operations. We will refer to this as \emph{weak universality}. Lastly, there is no characterization of the extent to which the incorporation of random variables (and, hence, post-processing) to elementary thermal operations offers an advantage regarding universality. These constitute our main questions. In summary, we ask:

\begin{enumerate}[label=(Q\arabic*)]
    \item When are elementary thermal operations \emph{weakly} universal?
    \item When are elementary thermal operations universal?
    \item When does the incorporation of  random variables to the elementary thermal operations offer an advantage regarding weak or non-weak universality?
\end{enumerate}
(A schematic representation of (Q2) can be found in Figure \ref{circuit diagram}. In particular, we represent the stronger framework where convex combinations are allowed.)

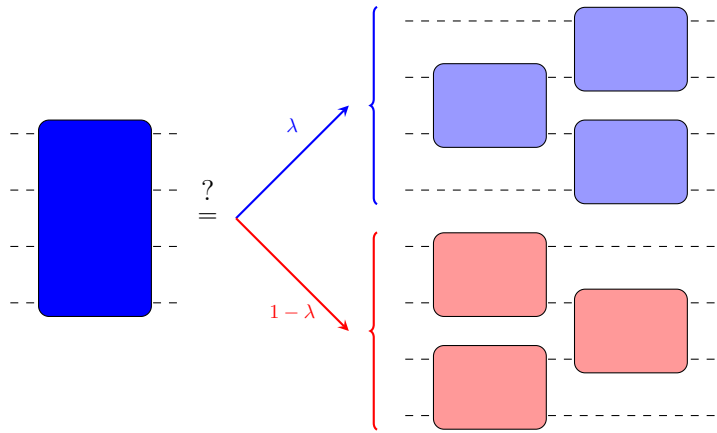
\begin{figure}[t]
\centering
\begin{tikzpicture} [scale=0.75, every node/.style={scale=0.75}, >=stealth]
\draw [dashed] (-0.5,2) -- (2.5,2);
\draw [dashed] (-0.5,1) -- (2.5,1);
\draw [dashed] (-0.5,0) -- (2.5,0);
\draw [dashed] (-0.5,-1) -- (2.5,-1);

\node[rounded corners, xshift=1cm, yshift=.5cm,, draw,fill=blue, text height = 3.25cm, minimum width = 2cm] {};

\node at (3, 0.5) {\Large $=$};
\node at (3, 1) {\Large $?$};

\draw [->, blue, thick] (3.5,0.5) -- node[ above =0.3cm, blue] {\textbf{$\lambda$}} (5.5,2.5);
\draw [->, red, thick] (3.5,0.5) -- node [below =0.3cm, red] {\textbf{$1-\lambda$}} (5.5,-1.5);

\draw [decorate,
    decoration = {brace},thick,blue] (6,0.75) --  (6,4.25);

\draw [decorate,
    decoration = {brace},thick,red] (6,-3.25) --  (6,0.25);

\draw [dashed] (6.5,4) -- (12,4);
\draw [dashed] (6.5,3) -- (12,3);
\draw [dashed] (6.5,2) -- (12,2);
\draw [dashed] (6.5,1) -- (12,1);

\node[rounded corners, xshift=8cm, yshift=2.5cm,, draw,fill=blue!40, text height = 1.25cm, minimum width = 2cm] {};

\node[rounded corners, xshift=10.5cm, yshift=3.5cm,, draw,fill=blue!40, text height = 1.25cm, minimum width = 2cm] {};

\node[rounded corners, xshift=10.5cm, yshift=1.5cm,, draw,fill=blue!40, text height = 1.25cm, minimum width = 2cm] {};

\draw [dashed] (6.5,0) -- (12,0);
\draw [dashed] (6.5,-1) -- (12,-1);
\draw [dashed] (6.5,-2) -- (12,-2);
\draw [dashed] (6.5,-3) -- (12,-3);

\node[rounded corners, xshift=8cm, yshift=-0.5cm,, draw,fill=red!40, text height = 1.25cm, minimum width = 2cm] {};

\node[rounded corners, xshift=8cm, yshift=-2.5cm,, draw,fill=red!40, text height = 1.25cm, minimum width = 2cm] {};

\node[rounded corners, xshift=10.5cm, yshift=-1.5cm,, draw,fill=red!40, text height = 1.25cm, minimum width = 2cm] {};
\end{tikzpicture}
\caption{Schematic representation of (Q2): When can thermal operations be decomposed as convex combinations of sequences of two-level thermal operations? In this representation, we ask whether a thermal operation acting on four energy levels (left of the equality) can be decomposed as a convex combination (consisting of two elements with weights $\lambda$ and $1-\lambda$ in the figure) of two sequences of thermal operations that act only on two energy levels simultaneously.}
\label{circuit diagram}
\end{figure}

As our main results, we answer (Q1) and (Q2) fully and make substantial progress on (Q3). First, we introduce the resource theories and polytopes we consider here in Section~\ref{to eto intro}. We then gather some known results regarding their relationship when the Gibbs distribution is uniform and some basic remarks for the following (Sections~\ref{sec: uniform} and \ref{sec: basics}). To gain some more intuition, we relate these resource theories to random walks on graphs in Section \ref{RW on graphs}. Question (Q1) is resolved in Section~\ref{rt relas}, and (Q2) in Section~\ref{poly relas}. Our progress regarding (Q3) can be found in Section \ref{advantage}. Lastly, we address the convexity of elementary thermal operations as a consequence of our work regarding (Q3), see Section~\ref{sec: convex weto}.

\section{Thermal and elementary thermal operations}
\label{to eto intro}

In this section, we introduce the thermal operations we are concerned with here. Note that, throughout this work, we consider $0 < d \in \mathbb P_\Omega$, the only physically relevant situation since, for $1 \leq i \leq |\Omega|$, $d_i=0$ implies $E_i$ is infinite. (The reader interested in the case $0 \leq d$ may find \cite{hartfiel1974study} useful.)
Moreover, some of the definitions in this section are usually written for the Gibbs distribution and, hence, do not include the partition function \cite{lostaglio2018elementary,lostaglio2019introductory}. We simply take some $0<d \in \mathbb P_\Omega$ and note this results in no meaningful difference. 

\subsection{Thermal operations}

We begin by introducing the well-known resource theory and polytope of thermal operations \cite{lostaglio2019introductory,lostaglio2018elementary}.

\begin{defi}[Resource theory and polytope of thermal operations (TO)]
If $0<d \in \mathbb P_\Omega$, then the resource theory of thermal operations allows all
transformations that can be realised by applying a $d$-stochastic matrix.

Analogously, the polytope of thermal operations $\mathcal P_{\text{TO}}(d)$ is the set of $d$-stochastic matrices $M \in \mathcal M_{|\Omega|, |\Omega|}(\mathbb R)$.
\end{defi}

Strictly speaking, a \emph{convex polytope} or \emph{polytope} for simplicity is the convex hull of a non-empty finite set \cite{brondsted2012introduction}. Although the polytope of thermal operations is indeed a polytope (see Section \ref{poly relas} or \cite{hartfiel1974study,jurkat1967term}), we will sometimes use the word \emph{polytope} loosely. We will return to this point later on.

We say that $q \in \mathbb P_\Omega$ can be \emph{achieved} from $p \in \mathbb P_\Omega$ via thermal operations (and analogously for the operations we will introduce later) if there exists a $d$-stochastic matrix $M \in \mathcal M_{|\Omega|, |\Omega|}(\mathbb R)$ such that $q=Mp$.

An easy way of determining whether a transition from $p$ to $q$ is possible is to use the
\emph{$d$-majorization curve}. For any $p \in \mathbb P_\Omega$, this curve can be constructed as follows:

\begin{enumerate}[label=(\alph*)]
    \item We consider a permutation $\Pi^d_p \in S_{|\Omega|}$ such that, for $1 \leq i < |\Omega|$,
    \begin{equation}
    \label{d-curve}
        \frac{p_{(\Pi^d_p)^{-1}(i)}}{d_{(\Pi^d_p)^{-1}(i)}} \geq \frac{p_{(\Pi^d_p)^{-1}(i+1)}}{d_{(\Pi^d_p)^{-1}(i+1)}}.
    \end{equation} 
    We call $\Pi^d_p$ the $d$-permutation of $p$. (This constitutes an abuse of language, since we should call it \emph{a} $d$-permutation whenever there is some equality in \eqref{d-curve}. However, we assume that $\Pi^d_p$ is unique in general and deal with the problematic scenario when needed, referring to it as \emph{uncertainty} in $\Pi^d_p$. Lastly, note that the most pathological case is $p=d$, since any permutation can be the $d$-permutation of $d$.) Moreover, we refer to $(\Pi^d_p)^{-1}$ as the $d$-order of $p$ and define 
    \begin{equation*}
        p^d \coloneqq \left( p_{(\Pi^d_p)^{-1}(i)} \right)_{i=1}^{|\Omega|}.
    \end{equation*}
    Furthermore, for $1 \leq i \leq |\Omega|$, we say $p_i$ is \emph{associated to $\alpha$} if $d_i=\alpha$ and extend this definition naturally to $p^d_i$.
    \item Plot the pairs of points 
    \begin{equation*}
        \left( \sum_{i=1}^k d_{(\Pi^d_p)^{-1}(i)} , \sum_{i=1}^k p_{(\Pi^d_p)^{-1}(i)} \right)_{k=1}^{|\Omega|}
    \end{equation*}
    together with $(0,0)$ and connect them piecewise linearly to form a concave curve. We call this curve the \emph{$d$-majorization curve} and denote it by $c_p^d$. It should be noted that the curve is also known as \emph{Lorenz $d$-curve} or \emph{thermo-majorization curve}.
\end{enumerate}

We say that $p$ \emph{$d$-majorizes} or \emph{thermo-majorizes} $q$, which we denote by $q \preceq_d p$, if $c_p^d$ is never below $c_q^d$. The reason we include these definitions is that it is well-known \cite{lostaglio2019introductory,lostaglio2018elementary,ruch1978mixing,ruch1980generalization} that, for any pair $p,q \in \mathbb P_\Omega$, there exists a $d$-stochastic matrix $M$ such that $q = M p$ if and only if $p$ $d$-majorizes $q$, $q \preceq_d p$. This will prove to be useful later on.

\subsection{Elementary thermal operations}

To introduce the resource theories and polytopes of elementary thermal operations, we need some more definitions. We will denote by $d^\downarrow$ the distribution that results from permuting the components of some $d \in \mathbb P_\Omega$ until they are arranged in non-increasing order.

It is not hard to see that, if $Q \in \mathcal M_{|\Omega|, |\Omega|}(\mathbb R)$ is a permutation matrix such that $d^\downarrow=Qd$ and $M\in \mathcal M_{|\Omega|, |\Omega|}(\mathbb R)$ is a $d$-stochastic matrix acting non-trivially on at most two levels, then
\begin{equation*}
    M^\downarrow \coloneqq Q M Q^T
\end{equation*}
is a $d^\downarrow$-stochastic matrix acting non-trivially on at most two levels.
%$M^\downarrow \in \mathcal M_{|\Omega|, |\Omega|}(\mathbb R)$ such that 
%\begin{equation*}
%    M^\downarrow = Q M Q^T.
%\end{equation*}
Moreover, by definition, there exist $i,j$ with $1 \leq i \leq j \leq |\Omega|$ and $x,y,z,t \in [0,1]$ such that 
\begin{equation}
\label{g eto}
    M^\downarrow = \begin{pmatrix}
x & y\\
z & t
\end{pmatrix} \bigoplus \mathbb{I}_{\setminus (i,j)}
= \begin{pmatrix}
1- \gamma_{i,j} y & y \\
\gamma_{i,j} y & 1-y
\end{pmatrix} \bigoplus \mathbb{I}_{\setminus (i,j)}
= (1-y) \mathbb I + y P^{d^\downarrow}(i,j),
\end{equation}
where $\gamma_{i,j} \coloneqq d^\downarrow_j/d^\downarrow_i$ and 
\begin{equation*}
   P^{d^\downarrow}(i,j) \coloneqq \begin{pmatrix}
1- \gamma_{i,j} & 1 \\
\gamma_{i,j} & 0
\end{pmatrix}
\bigoplus \mathbb{I}_{\setminus (i,j)}.
\end{equation*}
(Note that in \eqref{g eto} we mean that $ M^\downarrow$ is identity except for $ M^\downarrow_{k,\ell}$ with $k,\ell \in \{ i,j\}$, where it takes the values we specify on the left of $\bigoplus$.)

As a result, we have the following decomposition of $M$:
\begin{equation}
\label{decomp d-swap}
    M= (1-y) \mathbb I + y Q^T P^{d^\downarrow}(i,j) Q.
\end{equation}

The decomposition in \eqref{decomp d-swap} motivates the following definition, which will be key in order to introduce the elementary resource theories we consider here.

\begin{defi}[$d$-swap and $T^d$-transform]
\label{d-swap td-trans}
If $0<d \in \mathbb P_\Omega$, $Q \in \mathcal M_{|\Omega|, |\Omega|}(\mathbb R)$ is a permutation matrix such that $d^\downarrow=Q d$ and $1 \leq i \leq j \leq |\Omega|$, then we call $P^d(i,j) \in  \mathcal M_{|\Omega|, |\Omega|}(\mathbb R)$ a \emph{$d$-swap} provided
\begin{equation}
 \label{eto decom}
        P^d(i,j) = Q^T P^{d^\downarrow}(i,j) Q.
    \end{equation}
    In the same scenario, we call 
    $T^d_\lambda(i,j) \in \mathcal M_{|\Omega|, |\Omega|}(\mathbb R)$ a $T^d$-transform provided there exists some $0 \leq \lambda \leq 1$ such that
    \begin{equation}
    \label{td-trans def}
        T^d_\lambda(i,j)= (1-\lambda) \mathbb I + \lambda P^d(i,j).
    \end{equation}
\end{defi}

As shown in \eqref{decomp d-swap}, $T^d$-transforms constitute all $d$-stochastic matrices that act non-trivially on at most two levels. Moreover, \eqref{decomp d-swap} also shows that any such matrix can be decomposed as a convex combination of $d$-swaps. (This is why we include the identity in our definition of $d$-swaps.)
 Moreover, since it uses the Gibbs distribution notation for some inverse temperature $\beta$, $d$-swaps are called $\beta$-swaps in \cite{lostaglio2018elementary}. (In fact, in \cite{lostaglio2018elementary}, $\beta$-swaps were only introduced for $d=d^\downarrow$.) $T^d$ transforms are sometimes referred to as \emph{simple} $d$-stochastic matrices \cite{marshall_inequalities:_2011} or as \emph{elementary thermal operations} \cite{lostaglio2018elementary}. We prefer, however, to follow the notation in \cite{marshall_inequalities:_2011,hardy1952inequalities}, where, when dealing with the uniform case $d=u \coloneqq (1/|\Omega|,\dots,1/|\Omega|)$, they refer to what we call $T^u$-transforms as $T$-transforms. Another name for $T^u$ transforms is \emph{elementary doubly stochastic matrices} \cite{marcus1984products}.

\begin{comment}
\begin{defi}[$d$-swap and $T^d$-transform]
If $d \in \mathbb P_\Omega$ and $d=d^\downarrow$, then we call $\{ P^d(i,j) \}_{1 \leq i \leq j \leq |\Omega|}$ is the \emph{$d$-swaps}, where
\begin{equation}
 \label{eto decom}
        P^d(i,j) \coloneqq 
        \begin{pmatrix}
1- \frac{d_j}{d_i} & 1 \\
\frac{d_j}{d_i} & 0
\end{pmatrix} 
\bigoplus \mathbb{I}_{\setminus (i,j)}.
    \end{equation}
   If $d \in \mathbb P_\Omega$ and $T \in \mathcal M_{|\Omega|, |\Omega|}(\mathbb R)$, we say $T$ is a $T^d$-transform if there exists some $0 \leq \lambda \leq 1$ such that
    \begin{equation}
    \label{td-trans def}
        T= (1-\lambda) \mathbb I + \lambda P^d(i,j).
    \end{equation}
    We refer to the $T$ in \eqref{td-trans def}  as  $T^d_\lambda(i,j)$.
\end{defi}
\end{comment} 

 Definition \ref{d-swap td-trans} allows us to introduce two pairs of elementary thermal resource theories and polytopes, which we call \emph{strong} and \emph{weak}.

\begin{defi}[Resource theory and polytope of weak elementary thermal operations (WETO)]
If $0<d \in \mathbb P_\Omega$, then the resource theory of weak elementary thermal operations $\text{RT}_{\text{WETO}}(d)$ allows all
transformations that can realised by sequential application of $T^d$-transforms.

Analogously, the polytope of weak elementary thermal operations $\mathcal P_{\text{WETO}}(d)$ is the set of matrices $M \in \mathcal M_{|\Omega|, |\Omega|}(\mathbb R)$ that can be decomposed as a sequence of $T^d$-transforms.\footnote{This constitutes an abuse of language since the WETO polytope is not convex in  general. (We give an example after the proof of Theorem \ref{equi to weto poly} and characterize its convexity in Corollary \ref{convex WETO}.)}
\end{defi}

Sometimes the literature refers to the fact that some $q \in \mathbb P_\Omega$ can be achieved from some $p \in \mathbb P_\Omega$ via the resource theory of weak elementary thermal operations by saying that $q$ is \emph{simply $d$-majorized} by $p$ \cite[Definition 14.B.2.a]{marshall_inequalities:_2011}.

\begin{defi}[Resource theory and polytope of strong elementary thermal operations (ETO) {\cite[Definition 1]{lostaglio2018elementary}}]
\label{defi ETO}
If $0<d \in \mathbb P_\Omega$, then the resource theory of strong elementary thermal operations $ \text{RT}_{\text{ETO}}(d)$ allows all
transformations that can realised by sequential application of $d$-swaps and convex combinations thereof.

Analogously, the polytope of strong elementary thermal operations $\mathcal P_{\text{ETO}}(d)$ is the set of matrices $M \in \mathcal M_{|\Omega|, |\Omega|}(\mathbb R)$ that can be decomposed as a convex combination of sequences of $d$-swaps.
\end{defi}

It should be noted that Definition \ref{defi ETO} and \cite[Definition 1]{lostaglio2018elementary} are equivalent. In \cite{lostaglio2018elementary}, what we call the \emph{resource theory of strong elementary thermal operations} is referred to as the \emph{resource theory of elementary thermal operations} and, moreover, $d$-swaps are substituted by $T^d$ transforms (which they call ETOs). In any case, it is easy to see that they are equivalent \cite{lostaglio2018elementary}.\footnote{The fact that our definition is included in theirs holds by definition and, for the converse, one simply ought to develop the products of $T^d$-transforms for each sequence in the convex combination and note that the result is indeed a convex combination of products of $d$-swaps.}

If we only allow the sequences of $d$-swaps to have length one, then we refer to Definition \ref{defi ETO} as the \emph{length one} resource theory (and polytope) of strong elementary thermal operations. We will begin by analyzing this simplified scenario in Section \ref{RW on graphs}, where we also connect it with random walks on graphs.

Since the main aim of this work is to study the relation between the resource theories and polytopes introduced in this section, we begin by considering the known results for the case where $d=u$ is uniform in the following section. (See also \cite[Section 2.3]{lostaglio2018elementary}.)

\section{Universality and weak universality for uniform equilibrium distribution}
\label{sec: uniform}

The first thing we ought to remark is that all resource theories coincide provided $d=u$. This follows directly from the following theorem, which was originnally developed by Muirhead \cite{muirhead1902some} and was generalized by Hardy et al. \cite{hardy1952inequalities} to encompass probability distributions.

\begin{theo}[Equivalence TO and WETO resource theories for uniform $d$ {\cite{muirhead1902some,hardy1952inequalities}}]
\label{equi to weto uniform}
If $d \in \mathbb P_\Omega$ is the uniform distribution $u$, then
   % If $p,q \in \mathbb P_\Omega$, then there exists a doubly stochastic matrix $M$ such that $q=M p$ if and only if there exists a finite sequence of $T$-transforms $(T_i)_{i=1}^k$ such that 
   % \begin{equation*}
      %  q = T^d_k \cdots T^d_1 p.
   % \end{equation*}
    the thermal operations resource theory is equal to the weak elementary thermal operations 
resource theory
\begin{equation*}
    \text{RT}_{\text{TO}}(u) = \text{RT}_{\text{WETO}}(u).
\end{equation*}
\end{theo}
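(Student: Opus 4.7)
The plan is to prove the two inclusions separately and at the level of achievable transitions (resource theories, not polytopes, so we only need to match the sets $\{(p,p') : p' = Mp \text{ for some admissible } M\}$). The inclusion $\text{RT}_{\text{WETO}}(u) \subseteq \text{RT}_{\text{TO}}(u)$ is the easy direction. When $d = u$ we have $\gamma_{i,j} = u^\downarrow_j/u^\downarrow_i = 1$, so every $u$-swap $P^u(i,j)$ is an honest transposition permutation matrix and every $T^u$-transform $(1-\lambda)\mathbb I + \lambda P^u(i,j)$ is a convex combination of doubly stochastic matrices, hence doubly stochastic. Since products of doubly stochastic matrices remain doubly stochastic, any matrix produced by sequential application of $T^u$-transforms is $u$-stochastic, i.e.\ lies in $\mathcal P_{\text{TO}}(u)$.

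For the reverse inclusion I would work through the thermo-majorization characterization of TO-transitions recalled in Section~\ref{to eto intro}. Suppose $p' = Mp$ for some $u$-stochastic $M$. Then $p' \preceq_u p$, and for the uniform reference $d = u$ this coincides with classical majorization: the partial sums of $p^\downarrow$ dominate those of $(p')^\downarrow$. The Hardy--Littlewood--P\'olya theorem (in the form originally due to Muirhead) then guarantees that $p'$ can be obtained from $p$ by applying a finite sequence of $T$-transforms, which in our notation are exactly $T^u$-transforms. Hence the transition $p \to p'$ is admissible in $\text{RT}_{\text{WETO}}(u)$, and combined with the previous paragraph we conclude that the two resource theories have the same set of admissible transitions.

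The main obstacle is the constructive direction of Hardy--Littlewood--P\'olya. I would either cite \cite{muirhead1902some,hardy1952inequalities,marshall_inequalities:_2011} outright or reproduce the standard inductive argument: assuming without loss of generality that $p$ and $p'$ are sorted in non-increasing order (which, in the uniform case, is a legitimate choice of $u$-order), and that $p \neq p'$ with $p$ majorizing $p'$, one picks the smallest index $i$ with $p_i \neq p'_i$ (forcing $p_i > p'_i$) and the smallest $j > i$ with $p_j \neq p'_j$ (forcing $p_j < p'_j$). One then applies a $T^u$-transform on the coordinate pair $\{i,j\}$ with parameter $\lambda$ tuned so that the resulting $p'' := T^u_\lambda(i,j)\, p$ either satisfies $p''_i = p'_i$ or $p''_j = p'_j$; a short check shows $p''$ still majorizes $p'$ and agrees with it in strictly more coordinates than $p$ did. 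Finiteness of $|\Omega|$ forces termination after at most $|\Omega|-1$ iterations, yielding the desired decomposition into $T^u$-transforms and completing the argument.
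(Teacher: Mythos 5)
Your overall route is the same as the paper's: Theorem~\ref{equi to weto uniform} is stated there without proof, by citation to Muirhead and Hardy--Littlewood--P\'olya, and your two inclusions (products of $T^u$-transforms are doubly stochastic; classical majorization implies a finite chain of $T$-transforms) are exactly the content of that citation. The easy direction is correct as you state it, since $\gamma_{i,j}=1$ for $d=u$ makes every $u$-swap a transposition and every $T^u$-transform doubly stochastic.

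However, the inductive argument you sketch for the constructive direction has a genuine flaw in the choice of the second index. After picking $i$ as the smallest index with $p_i\neq p'_i$ (which does force $p_i>p'_i$), you take $j$ to be the smallest index $>i$ with $p_j\neq p'_j$ and claim this forces $p_j<p'_j$. That is false: majorization only gives $(p_i-p'_i)+(p_j-p'_j)\geq 0$, which is compatible with $p_j>p'_j$. Concretely, take $p=(0.4,\,0.38,\,0.12,\,0.1)$ and $p'=(0.35,\,0.3,\,0.2,\,0.15)$; then $p$ majorizes $p'$, your rule gives $i=1$, $j=2$, but $p_2=0.38>0.3=p'_2$, and moreover no $T^u$-transform on $\{1,2\}$ can achieve $p''_1=p'_1$ or $p''_2=p'_2$, since such a transform can only move $p_1$ and $p_2$ toward their average $0.39>p'_1>p'_2$. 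The standard fix (as in Lemma 2 of \cite[p.~47]{hardy1952inequalities} or \cite{marshall_inequalities:_2011}) is to take $j$ to be the smallest index $>i$ with $p_j<p'_j$ (strict inequality in that direction); such a $j$ exists because $\sum_k(p_k-p'_k)=0$ and $p_i-p'_i>0$ while $p_k=p'_k$ for $k<i$. With that choice the amount transferred, $\delta=\min(p_i-p'_i,\,p'_j-p_j)$, satisfies $2\delta\leq p_i-p_j$, so the required $\lambda$ lies in $[0,1/2]$, majorization of $p'$ is preserved, and the number of disagreeing coordinates strictly decreases. If you simply cite the classical result, as the paper does, none of this is an issue; if you reproduce the induction, the index choice must be corrected.
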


Moreover, since
\begin{equation*}
\text{RT}_{\text{WETO}}(d) \subseteq \text{RT}_{\text{ETO}}(d) \subseteq \text{RT}_{\text{TO}}(d)    
\end{equation*}
 for all $d \in \mathbb P_\Omega$ by definition, all thermal resource theories considered here are equivalent provided the equilibrium distribution is uniform.

When dealing with thermal operations and (strong) elementary thermal operations, the equivalence can be generalized to polytopes as a direct consequence of Birkhoff theorem \cite{birkhoff1946tres} (also sometimes attributed to Von Neumann \cite{von1953certain}) together with the fact that permutation matrices can be decomposed as products of transpositions.

\begin{theo}[Equivalence TO and ETO polytopes for uniform $d$ {\cite{birkhoff1946tres,von1953certain}}]
\label{birkhoff}
  If $d \in \mathbb P_\Omega$ is the uniform distribution $u$, then the polytope of thermal operations is equivalent to the polytope of elementary thermal operations
  \begin{equation*}
    \mathcal P_{\text{TO}}(u) = \mathcal P_{\text{ETO}}(u).
  \end{equation*}
\end{theo}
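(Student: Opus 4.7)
The plan is to unpack both sides of the equality explicitly and reduce the statement to a direct application of the Birkhoff--von Neumann theorem, which is the substantive ingredient being cited.

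First, I would identify $\mathcal P_{\text{TO}}(u)$ with the set of doubly stochastic matrices. By the paper's conventions, $M$ is $u$-stochastic iff $\sum_i M_{i,j}=1$ for all $j$ (column sums one) and $Mu=u$, and the latter amounts to $\sum_j M_{i,j}=1$ for all $i$ (row sums one) because $u$ is uniform. So $\mathcal P_{\text{TO}}(u)$ is precisely the Birkhoff polytope of doubly stochastic matrices.

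Next, I would compute $u$-swaps for the uniform equilibrium. Since $u^\downarrow = u$, we may take $Q = \mathbb I$ in Definition~\ref{d-swap td-trans}, and the factor $\gamma_{i,j} = u_j/u_i = 1$. Hence
\begin{equation*}
    P^u(i,j) = \begin{pmatrix} 0 & 1 \\ 1 & 0 \end{pmatrix} \bigoplus \mathbb I_{\setminus (i,j)},
\end{equation*}
which is exactly the transposition permutation matrix swapping coordinates $i$ and $j$ (with $P^u(i,i) = \mathbb I$ when $i=j$). Therefore a sequence of $u$-swaps is a product of transposition matrices, and since every permutation decomposes as a product of transpositions (and conversely any product of transpositions is a permutation), the set of sequences of $u$-swaps is exactly the set $S_{|\Omega|}$ of permutation matrices. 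By Definition~\ref{defi ETO}, $\mathcal P_{\text{ETO}}(u)$ is then the convex hull of $S_{|\Omega|}$.

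Finally, I would invoke the Birkhoff--von Neumann theorem \cite{birkhoff1946tres,von1953certain}, which states that the convex hull of permutation matrices equals the set of doubly stochastic matrices. Combining the two identifications yields $\mathcal P_{\text{ETO}}(u) = \mathcal P_{\text{TO}}(u)$. There is essentially no genuine obstacle in this argument: the entire content sits in Birkhoff--von Neumann, and the only thing to verify carefully is the two unpackings above (that $u$-stochasticity coincides with double stochasticity, and that $u$-swaps coincide with transpositions, so that ``convex combinations of sequences of $d$-swaps'' literally becomes ``convex combinations of permutation matrices'').
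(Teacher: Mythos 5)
Your proposal is correct and takes essentially the same route as the paper, which states this result as a direct consequence of the Birkhoff--von Neumann theorem combined with the decomposition of permutation matrices into transpositions; your only added value is the explicit verification that $u$-stochasticity is double stochasticity and that $P^u(i,j)$ is the transposition $(i\,j)$, both of which the paper leaves implicit.
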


The equivalence at the polytope level, however, does not include weak elementary thermal operations, as shown by Marcus et al. \cite{marcus1984products}. (See also \cite{marshall_inequalities:_2011}.)

\begin{theo}[Equivalence TO and WETO polytopes for uniform $d$ {\cite[Theorem 1]{marcus1984products}}]
\label{uniform weto = to}
    If $d \in \mathbb P_\Omega$ is the uniform distribution $u$, then the following statements are equivalent:
     \begin{enumerate}[label=(\alph*)]
        \item The weak elementary thermal operations polytope is equivalent to the thermal operations polytope
         \begin{equation*}
    \mathcal P_{\text{TO}}(u) = \mathcal P_{\text{WETO}}(u).
\end{equation*}
        \item $|\Omega| = 2$.
        \end{enumerate}
\end{theo}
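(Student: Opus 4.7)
The plan is to treat the two directions of the equivalence separately; the implication $(b) \Rightarrow (a)$ is immediate from the definitions, while $(a) \Rightarrow (b)$ is proved in contrapositive form and reduces to the core difficulty: exhibiting, for each $|\Omega| \geq 3$, a doubly stochastic matrix that cannot be expressed as a finite product of $T^u$-transforms.

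For $(b) \Rightarrow (a)$, I would first record the trivial inclusion $\mathcal P_{\text{WETO}}(u) \subseteq \mathcal P_{\text{TO}}(u)$, which always holds since $T^u$-transforms are themselves $u$-stochastic and the set of $u$-stochastic matrices in $\mathcal M_{|\Omega|,|\Omega|}(\mathbb R)$ is closed under matrix multiplication. When $|\Omega| = 2$, any $u$-stochastic $M \in \mathcal M_{2,2}(\mathbb R)$ has all its entries determined by $M_{1,2}$ alone through double stochasticity, and setting $\lambda = M_{1,2}$ one verifies by direct comparison that $M = T^u_\lambda(1,2)$. This places $M$ in $\mathcal P_{\text{WETO}}(u)$, so the two polytopes coincide.

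For $(a) \Rightarrow (b)$, I would argue the contrapositive: $|\Omega| \geq 3$ forces strict containment. The case $|\Omega| = 3$ is the content of Theorem~1 of Marcus et al., which I would invoke directly: there a specific $3 \times 3$ doubly stochastic matrix is exhibited together with an invariant preserved under left (equivalently right) multiplication by any $T^u$-transform that this candidate matrix violates, thereby witnessing its non-membership in $\mathcal P_{\text{WETO}}(u)$. For $|\Omega| > 3$, my plan is to lift this witness by appending the identity on the additional levels and then argue that the lifted matrix is still not a product of $T^u$-transforms on the enlarged space; a candidate argument is to observe that, since the bottom-right block is the identity and any $T^u$-transform acting nontrivially on a level outside $\{1,2,3\}$ must be balanced by a later one reversing its effect on that level, one can effectively project any hypothetical decomposition back to a three-level decomposition and contradict the $|\Omega|=3$ witness.

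The main obstacle is the three-level counterexample itself, which is precisely the identification of an invariant preserved by every left or right $T^u$-transform multiplication yet failing for the candidate matrix. This invariant is non-trivial and constitutes the substance of Marcus et al.'s argument, so I would cite their construction rather than reprove it; the easy direction and the lift to larger $|\Omega|$ are essentially bookkeeping once the $|\Omega| = 3$ witness is in hand.
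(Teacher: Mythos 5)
The paper does not prove this statement at all: it is imported verbatim as Theorem~1 of Marcus et al.\ \cite{marcus1984products}, so there is no internal proof to match your proposal against. Your easy direction ($|\Omega|=2$, via $M = T^u_{M_{1,2}}(1,2)$) is correct and is essentially Lemma~\ref{equiv dim 2}. Deferring the hard direction to the cited invariant argument is also what the paper does. The one substantive issue is your reduction from $|\Omega|>3$ to $|\Omega|=3$, which is both unnecessary and, as sketched, gappy. It is unnecessary because Marcus et al.'s Theorem~1 already exhibits, for every $n\geq 3$, an $n\times n$ witness (a doubly stochastic matrix with zero diagonal and strictly positive off-diagonal entries) that is not a product of $T^u$-transforms; no lifting is required, and indeed the present paper uses exactly that $n$-dimensional witness when it recycles the result inside the proof of Theorem~\ref{equi to weto poly}.

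The gap in the lifting argument itself: you claim that any $T^u$-transform acting non-trivially on a level outside $\{1,2,3\}$ "must be balanced by a later one reversing its effect," so that a hypothetical decomposition of $M_0\oplus\mathbb I$ projects to a three-level decomposition of $M_0$. In the uniform case this does not follow. A $T^u$-transform with $\lambda=1$ is a transposition, so a product can shuttle probability mass out to an auxiliary level, act there, and bring it back (e.g.\ conjugating by $P^u(1,4)$ turns a transform on levels $\{4,2,3\}$ into one on $\{1,2,3\}$), and nothing in your sketch rules out more elaborate versions of this that are not merely conjugations. The analogous reduction in the paper's proof of Theorem~\ref{equi to weto poly}, case (A), goes through only because $d_{|\Omega|}<d_0$ there: once a $T^d$-transform with positive weight touches the last level, the $(|\Omega|,|\Omega|)$ entry drops strictly below $1$ by a factor $\gamma=d_{|\Omega|}/d_{i}<1$ and can never return to $1$, forcing all such transforms to be trivial. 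With $d$ uniform one has $\gamma=1$ and that monotonicity argument collapses, so the projection step needs a genuinely different justification --- or, more simply, should be replaced by a direct appeal to the $n$-dimensional counterexample of \cite{marcus1984products}.
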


In the following, we address the generalization of Theorems \ref{equi to weto uniform}, \ref{birkhoff} and \ref{uniform weto = to} to arbitrary equilibrium distributions $0<d \in \mathbb P_\Omega$.
Before we start showing our main results, we make some basic remarks that will be useful in the future.

\section{Basic observations for the general case}
\label{sec: basics}

Recall that we only deal with the case where $0 < d \in \mathbb P_\Omega$. As a first remark, we note that, given an arbitrary $0 <d \in \mathbb P_\Omega$, it is equivalent to show the relation between the different resource theories and polytopes for $d^\downarrow$ than to do so for $d$. In order to show this, we use the following terminology
 \begin{equation*}
    \begin{split}
    \mathcal P(p) &\coloneqq \{\mathcal P_{\text{TO}}(p), \mathcal P_{\text{ETO}}(p), \mathcal P_{\text{WETO}}(p)\}, \\
     \text{RT}(p) &\coloneqq \{\text{RT}_{\text{TO}}(p), \text{RT}_{\text{ETO}}(p), \text{RT}_{\text{WETO}}(p)\}
    \end{split}
    \end{equation*}
for all $p \in \mathbb P_\Omega$.

\begin{lemma}[Equivalence under permutation of $d$]
\label{ordered d}
    Consider $0<d \in \mathbb P_\Omega$. If we take a pair of polytopes $A(d),B(d) \in \mathcal P(d)$ or resource theories $A(d),B(d) \in \text{RT}(d)$ of $d$ and the corresponding polytopes $A(d^\downarrow),B(d^\downarrow) \in \mathcal P(d^\downarrow)$ or resource theories $A(d^\downarrow),B(d^\downarrow) \in \text{RT}(d^\downarrow)$ of $d^\downarrow$, then the following statements are equivalent:
    \begin{enumerate}[label=(\alph*)]
        \item $A(d) \subseteq B(d)$.
        \item $A(d^\downarrow) \subseteq B(d^\downarrow)$.
        \end{enumerate}
\end{lemma}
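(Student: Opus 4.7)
The plan is to exploit the bijection given by conjugation with a fixed permutation matrix $Q$ such that $d^\downarrow = Qd$. Explicitly, define $\phi: \mathcal M_{|\Omega|,|\Omega|}(\mathbb R) \to \mathcal M_{|\Omega|,|\Omega|}(\mathbb R)$ by $\phi(M) \coloneqq Q M Q^T$. Since $Q Q^T = \mathbb I$, the map $\phi$ is a linear bijection with inverse $\phi^{-1}(N) = Q^T N Q$ that preserves matrix products, convex combinations, and the identity. My strategy is to show that $\phi$ restricts to a bijection between each object in $\mathcal P(d)$ (or $\text{RT}(d)$) and its counterpart in $\mathcal P(d^\downarrow)$ (or $\text{RT}(d^\downarrow)$), after which the equivalence of (a) and (b) follows instantly from the fact that bijections preserve set inclusion.

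First I would handle the TO case. A matrix $M$ is stochastic iff $\phi(M)$ is stochastic (permutation conjugation just relabels rows and columns, preserving nonnegativity and column sums). Moreover, $\phi(M) d^\downarrow = Q M Q^T Q d = Q M d = Q d = d^\downarrow$ precisely when $Md = d$, so $\phi$ bijects $d$-stochastic matrices with $d^\downarrow$-stochastic ones, i.e. $\phi(\mathcal P_{\text{TO}}(d)) = \mathcal P_{\text{TO}}(d^\downarrow)$. Next, for the ETO and WETO cases, the definition of a $d$-swap in \eqref{eto decom} yields directly
\begin{equation*}
    \phi(P^d(i,j)) = Q Q^T P^{d^\downarrow}(i,j) Q Q^T = P^{d^\downarrow}(i,j),
\end{equation*}
so $\phi$ maps $d$-swaps to $d^\downarrow$-swaps and, by linearity, $T^d$-transforms to $T^{d^\downarrow}$-transforms via \eqref{td-trans def}. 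Since $\phi$ preserves products and convex combinations, it bijects sequences (resp.\ convex combinations of sequences) of $d$-swaps or $T^d$-transforms onto the corresponding objects for $d^\downarrow$, yielding $\phi(\mathcal P_{\text{ETO}}(d)) = \mathcal P_{\text{ETO}}(d^\downarrow)$ and $\phi(\mathcal P_{\text{WETO}}(d)) = \mathcal P_{\text{WETO}}(d^\downarrow)$.

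For the resource theories, it suffices to observe that each $\text{RT}_X(d)$ is determined by the set of stochastic matrices one is allowed to apply, and this set coincides with $\mathcal P_X(d)$; equivalently, a transition $p \to p'$ is achievable in $\text{RT}_X(d)$ iff $Qp \to Qp'$ is achievable in $\text{RT}_X(d^\downarrow)$, via $\phi$ applied to the witnessing matrix. With the bijections $\phi(A(d)) = A(d^\downarrow)$ and $\phi(B(d)) = B(d^\downarrow)$ established, the equivalence is immediate: $A(d) \subseteq B(d)$ iff $\phi(A(d)) \subseteq \phi(B(d))$ iff $A(d^\downarrow) \subseteq B(d^\downarrow)$.

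There is no real obstacle here beyond a careful bookkeeping of the conjugation: the nontrivial step, if any, is verifying that $\phi$ does send $d$-swaps to $d^\downarrow$-swaps, which is precisely the content of the defining identity \eqref{eto decom}. Everything else reduces to the observation that conjugation by a permutation matrix is a structure-preserving automorphism of the relevant matrix algebra.
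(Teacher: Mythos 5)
Your proof is correct and follows essentially the same route as the paper: both hinge on conjugation by the permutation matrix $Q$ with $d^\downarrow = Qd$ and on the defining identity $P^d(i,j) = Q^T P^{d^\downarrow}(i,j)\, Q$, which makes $d$-swaps, $T^d$-transforms, and $d$-stochasticity transport cleanly between $d$ and $d^\downarrow$. Your packaging of this as a single structure-preserving bijection $\phi$ is a slightly more uniform presentation than the paper's case-by-case treatment, but the underlying argument is identical.
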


\begin{proof}
Take $Q\in \mathcal M_{|\Omega|, |\Omega|}(\mathbb R)$ the permutation matrix such that $d^\downarrow = Q d$ and note that
    it suffices to show that $(b)$ implies $(a)$ since the converse follows analogously. Moreover,  for simplicity, we only consider two cases:
    \begin{enumerate}[label=(\Alph*)]
        \item $A(d^\downarrow)=\mathcal P_{\text{TO}}(d^\downarrow)$, $B(d^\downarrow)=\mathcal P_{\text{ETO}}(d^\downarrow)$ and we take the corresponding $A(d)$ and $B(d)$. If $M\in \mathcal M_{|\Omega|, |\Omega|}(\mathbb R)$ is a $d$-stochastic matrix,  then $M^\downarrow$ is a  $d^\downarrow$-stochastic matrix. Hence, by assumption, 
        \begin{equation*}
        \begin{split}
        M^\downarrow &= Q M Q^T = \sum_{k=0}^{k_0} \lambda_k \prod_{\ell=0}^{\ell_0} P^{d^\downarrow}(i_{k,\ell},j_{k,\ell}) \text{, and} \\
        M &= \sum_{k=0}^{k_0} \lambda_k \prod_{\ell=0}^{\ell_0} Q^T P^{d^\downarrow}(i_{k,\ell},j_{k,\ell}) Q,
        \end{split}
        \end{equation*}
        where $\sum_{k=0}^{k_0} \lambda_k=1$, and $\lambda_k>0$ and $1 \leq i_{k,\ell}<j_{j,\ell} \leq |\Omega|$ for $0 \leq k \leq k_0$ and $0 \leq \ell \leq \ell_0$.
        \item $A(d^\downarrow)=\text{RT}_{\text{TO}}(d^\downarrow)$, $B(d^\downarrow)=\text{RT}_{\text{ETO}}(d^\downarrow)$ and we take the corresponding $A(d)$ and $B(d)$. Given a pair $p, q \in \mathbb P_\Omega$ such that $q= Mp$ for some $d$-stochastic matrix $M\in \mathcal M_{|\Omega|, |\Omega|}(\mathbb R)$, we have that $Q q = Q M Q^T Q p$. By assumption, since $Q M Q^T$ is a $d^\downarrow$-stochastic matrix, we have
        \begin{equation*}
         \begin{split}
            Q q &= \left(\sum_{k=0}^{k_0} \lambda_k \prod_{\ell=0}^{\ell_0}  P^{d^\downarrow}(i_{k,\ell},j_{k,\ell}) \right) Q p \text{, and}\\
            q &= \left(\sum_{k=0}^{k_0} \lambda_k \prod_{\ell=0}^{\ell_0} Q^T  P^{d^\downarrow}(i_{k,\ell},j_{k,\ell}) Q \right) p,
            \end{split}
        \end{equation*}
        where $\sum_{k=0}^{k_0} \lambda_k=1$, and $\lambda_k>0$ and $1 \leq i_{k,\ell}<j_{j,\ell} \leq |\Omega|$ for $0 \leq k \leq k_0$ and $0 \leq \ell \leq \ell_0$.
        \end{enumerate}
        Since the other cases follow analogously, this concludes the proof.
\end{proof}

%As a result of Lemma \ref{ordered d}, we will   it is equivalent to study the relation between thermal and (strong and weak) elementary thermal operations as either resource theories or polytopes for $d$ than to study it for $d^\downarrow \in \mathbb P_\Omega$.
Note that Lemma \ref{ordered d} works more in general for any permutation of $d$, but we focus on $d^\downarrow$ since it is the one we will use later on.
%a permutation of $d$.
%In particular, in the following, we will take $d'=d^\downarrow$ for simplicity. (
%Instead of our convention, where we consider $d^\downarrow$ the non-increasing permutation of $d \in \mathbb P_\Omega$, sometimes $d^\uparrow$, the non-decreasing permutation of $d$ is used \cite{lostaglio2018elementary}. (This is the case since, if $d$ is the Gibbs distribution, then $d^\uparrow$ has non-increasing energy.)

\begin{comment}
\begin{lemma}
\label{ordered d}
    Take $d,d' \in \mathbb P_\Omega$ and $Q \in \mathcal M_{|\Omega|,|\Omega|}(\mathbb R)$ a permutation matrix such that $d'=Q d$. The following statements hold:
    \begin{enumerate}[label=(\alph*)]
        \item $\mathcal P_{\text{TO}}(d')=Q \mathcal P_{\text{TO}}(d) Q^T$.
        \item $\mathcal S(d')=Q \mathcal S(d) Q^T$.
    \end{enumerate}
\end{lemma}
\end{comment}

As a second remark, it is straightforward to relate the different resource theories and polytopes for the smallest dimension, i.e. for $|\Omega|=2$. (When $|\Omega|=1$, then $|\mathbb P_\Omega|=1$.)  In particular, as a consequence of \eqref{decomp d-swap},
\begin{comment}
if $|\Omega|=2$ and $M \in \mathcal M_{|\Omega|, |\Omega|}(\mathbb R)$ is a $d$-stochastic matrix, then we have 
\begin{equation}
\label{dim 2}
    M= \begin{pmatrix}
x & y \\
z & t
\end{pmatrix} =
\begin{pmatrix}
1- \gamma y & y \\
\gamma y & 1-y
\end{pmatrix} =
(1-y) \mathbb I + y P^d(1,2)
\end{equation}
with $y \in [0,1]$ by definition. Hence,
\end{comment}
we have the following lemma.

\begin{lemma}
[Equivalence TO, ETO and WETO polytopes for $|\Omega| = 2$]
\label{equiv dim 2}
    If $0<d \in \mathbb P_\Omega$ and $|\Omega|=2$, then the polytopes of thermal and (weak and strong) elementary thermal operations are equivalent
    \begin{equation*}
        \mathcal P_{\text{TO}}(d) = \mathcal P_{\text{ETO}}(d)=\mathcal P_{\text{WETO}}(d).
    \end{equation*}
\end{lemma}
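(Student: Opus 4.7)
The plan is to establish the equalities by exploiting the inclusions $\mathcal P_{\text{WETO}}(d) \subseteq \mathcal P_{\text{ETO}}(d) \subseteq \mathcal P_{\text{TO}}(d)$, which hold for every $0 < d \in \mathbb P_\Omega$ directly from the definitions (a sequence of $T^d$-transforms is trivially a convex combination of sequences of $d$-swaps since every $T^d$-transform is, by \eqref{td-trans def}, a convex combination of $\mathbb I$ and a $d$-swap; and every $d$-swap is a $d$-stochastic matrix). Hence the only thing to show is $\mathcal P_{\text{TO}}(d) \subseteq \mathcal P_{\text{WETO}}(d)$ in the special case $|\Omega|=2$.

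First I would reduce to the case $d=d^\downarrow$. By Lemma~\ref{ordered d}, it is equivalent to prove the inclusion $\mathcal P_{\text{TO}}(d) \subseteq \mathcal P_{\text{WETO}}(d)$ for $d$ or for $d^\downarrow$, so we may assume without loss of generality that $d=d^\downarrow$ and in particular that the permutation matrix $Q$ appearing in \eqref{decomp d-swap} is the identity.

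Next, I would observe that when $|\Omega|=2$ any $d$-stochastic matrix $M \in \mathcal M_{2,2}(\mathbb R)$ acts non-trivially on at most two levels simply because there are only two levels available. Consequently, the decomposition in \eqref{decomp d-swap}, which was derived precisely for $d$-stochastic matrices acting non-trivially on at most two levels, applies to $M$ with $i=1$, $j=2$, and (after the reduction above) $Q=\mathbb I$. This yields
\begin{equation*}
    M = (1-y)\,\mathbb I + y\, P^{d}(1,2) = T^{d}_{y}(1,2)
\end{equation*}
for some $y \in [0,1]$, which exhibits $M$ as a single $T^{d}$-transform and hence as an element of $\mathcal P_{\text{WETO}}(d)$.

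There is essentially no hard step here: the argument is a direct application of \eqref{decomp d-swap} in the minimal-dimension case, and the only potential subtlety is bookkeeping the permutation $Q$, which Lemma~\ref{ordered d} handles for us. Combining $\mathcal P_{\text{TO}}(d) \subseteq \mathcal P_{\text{WETO}}(d)$ with the chain $\mathcal P_{\text{WETO}}(d) \subseteq \mathcal P_{\text{ETO}}(d) \subseteq \mathcal P_{\text{TO}}(d)$ then yields the three-way equality asserted in the lemma.
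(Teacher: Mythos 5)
Your proposal is correct and follows essentially the same route as the paper: the paper derives this lemma directly as a consequence of the decomposition \eqref{decomp d-swap}, observing that for $|\Omega|=2$ every $d$-stochastic matrix is automatically a single $T^d$-transform $(1-y)\,\mathbb I + y\,P^d(1,2)$, which combined with the definitional inclusions $\mathcal P_{\text{WETO}}(d) \subseteq \mathcal P_{\text{ETO}}(d) \subseteq \mathcal P_{\text{TO}}(d)$ gives the three-way equality. Your explicit bookkeeping of the permutation $Q$ via Lemma~\ref{ordered d} is a harmless (and slightly more careful) addition, but the substance is identical.
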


As a direct consequence of Lemma \ref{equiv dim 2}, all resource theories considered here coincide $\text{RT}_{\text{TO}}(d) = \text{RT}_{\text{ETO}}(d) = \text{RT}_{\text{WETO}}(d)$ provided $|\Omega|=2$.

As a last remark, the following straightforward property of elementary thermal operations will prove to be useful in the following, specially when justifying no-go results. 

\begin{lemma}[Monotonicity of support under ETO {\cite[Lemma 4]{lostaglio2018elementary}}]
\label{monot eto}
    If $0<d \in \mathbb P_\Omega$ and $p,q \in \mathbb P_\Omega$ such that $q \in \mathcal C_d^{ETO}(p)$, then $|\text{supp}(p)| \leq |\text{supp}(q)|$, where $\text{supp}(p) \coloneqq \{ i \in \Omega | p_i>0 \}$
    for all $p \in \mathbb P_\Omega$.
\end{lemma}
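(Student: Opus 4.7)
The plan is to show that a single $d$-swap never decreases the support of a non-negative vector, and then to lift this to products and convex combinations. By Definition~\ref{defi ETO}, $q \in \mathcal{C}_d^{ETO}(p)$ means we can write $q = \sum_k \lambda_k M_k p$ with $\lambda_k > 0$, $\sum_k \lambda_k = 1$, and each $M_k$ a finite product of $d$-swaps $P^d(i_{k,\ell}, j_{k,\ell})$. Because coordinate permutations preserve support size and \eqref{eto decom} gives $P^d(i,j) = Q^T P^{d^\downarrow}(i,j) Q$ with $Q$ the permutation taking $d$ to $d^\downarrow$, it is enough to carry out the analysis for $d^\downarrow$-swaps acting on an arbitrary non-negative vector.

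First I would handle the single-swap step. Fix $r \in \mathbb{P}_\Omega$ and $1 \le i \le j \le |\Omega|$, and set $\gamma = d^\downarrow_j / d^\downarrow_i$; the standing assumption $d > 0$, combined with the monotonicity of $d^\downarrow$, gives $\gamma \in (0,1]$. Since $P^{d^\downarrow}(i,j)$ acts as the identity outside coordinates $i, j$, only the pair $(r_i, r_j)$ is affected, being transformed into $\bigl((1-\gamma) r_i + r_j,\; \gamma r_i\bigr)$. A brief case analysis on which of $r_i, r_j$ vanish then shows the number of positive entries in the pair cannot decrease: if $r_i > 0$ then $\gamma r_i > 0$, so the second coordinate stays positive, while the first is $\ge r_j \ge 0$; if $r_i = 0$ and $r_j > 0$ the pair becomes $(r_j, 0)$, which merely relabels the supported coordinate; and if both vanish, nothing changes. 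Consequently $|\text{supp}(P^{d^\downarrow}(i,j) r)| \ge |\text{supp}(r)|$.

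Iterating along any product then yields $|\text{supp}(M_k p)| \ge |\text{supp}(p)|$ for every $k$. Finally, since all $M_k$ are entrywise non-negative and the weights $\lambda_k$ are strictly positive, no cancellation can occur in $q = \sum_k \lambda_k M_k p$, so $\text{supp}(q) = \bigcup_k \text{supp}(M_k p)$; picking any one $k$ gives $|\text{supp}(q)| \ge |\text{supp}(M_k p)| \ge |\text{supp}(p)|$. The only point at which the argument could fail is if the $(j,i)$-entry $\gamma$ of a swap were allowed to vanish, which would let a swap destroy the $j$-th coordinate; this is exactly the place where the hypothesis $d > 0$ is indispensable, and so it is the step I would flag as the one requiring care.
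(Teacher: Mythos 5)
Your proof is correct. Note, however, that the paper does not prove this lemma at all — it imports it directly from \cite[Lemma 4]{lostaglio2018elementary} as a "straightforward property" — so there is no in-paper argument to compare against; your single-swap case analysis on $(r_i,r_j)\mapsto((1-\gamma)r_i+r_j,\;\gamma r_i)$ with $\gamma\in(0,1]$, followed by iteration over products and the no-cancellation observation for the convex combination, is exactly the standard argument the citation is standing in for, and your closing remark correctly identifies $d>0$ (hence $\gamma>0$) as the hypothesis that makes the swap unable to annihilate the $j$-th coordinate.
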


 Before addressing our main questions, and to gain some intuition regarding ETO, we consider a simpler scenario where they appear in the next section. In particular, we relate them to random walks on graphs.

\section{Random walks on complete graphs and the length one ETO polytope} 
\label{RW on graphs}

In the simpler scenario where we only allow sequences of $d$-swaps with length one, the resource theory of strong elementary thermal operations can be interpreted in terms of random walks on complete graphs. In this section, we introduce a new definition of random walk on a complete graph in Section \ref{new rw}, characterize the transition matrices that fulfill this definition (equivalently, we characterize the ETO polytope when we only allow sequences of $d$-swaps with length one) in Section \ref{charac rw} and relate our definition to the one usually used in the literature in Section \ref{usual rw}.  

\subsection{ETO random walks on complete graphs}
\label{new rw}

Random walks on graphs have been extensively studied \cite{levin2017markov,lovasz1993random,motwani1995randomized,xia2019random} and have become of increasing interest given their close connection to quantum walks \cite{szegedy2004quantum,childs2009universal}. Instead of focusing on the general case, we will only consider complete graphs. Take, hence, a complete undirected graph $G=(V_G,E_G)$ without loops, where $V_G$ are the vertices and $E_G$ the edges.\footnote{A graph is \emph{complete} provided there is an edge between any pair of distinct vertices. Moreover, a loop is an edge from a vertex to itself. Lastly, a graph is \emph{undirected} provided its edges have no direction. These definition are taken from \cite{levin2017markov}.}
%and we assume there is a single self-loop for each vertex $v \in V_G$.
Furthermore, assume we have some substance that is distributed among the vertices $V_G$ and think of the edges $E_G$ as \emph{channels} through which the substance can be redistributed among the vertices. As an example, when $|V_G|=3$, we can consider a gas in a \emph{toric} box (see Figure \ref{gas graph}). Lastly, we assume that the substance may prefer being in certain vertices over others and model this by incorporating some reference distribution over the vertices $d \in \mathbb P_{V_G}$ that may give preference to some subsets of $V_G$ over others. We can think of these preferences among vertices as signaling the existence of some \emph{marked} vertices, as one usually encounters in the study of decision trees \cite{farhi1998quantum}. The existence of preferences introduces asymmetries in some channels, as one can see in Figure \ref{complete graph}. In the context of the gas in a box, we can think of the introduction of preferences as surrounding it with a heat bath, with the preferences following a Gibbs distribution.

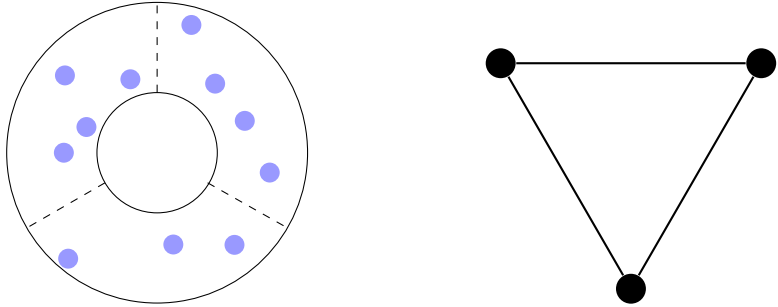
\begin{figure}[t]
\centering
\begin{subfigure}{0.4\textwidth}
  \centering
  \begin{tikzpicture}[scale=0.8]
    % First TikZ picture
    \draw (0,0) circle (2.5cm);
    \draw (0,0) circle (1cm);
    \draw [dashed] (0,1) -- (0,2.5);
    \draw [dashed] ({cos(210)},{sin(210)}) -- ({2.5*cos(210)},{2.5*sin(210)});
    \draw [dashed] ({cos( 33)},{sin(330)}) -- ({2.5*cos(330)},{2.5*sin(330)});
    \node[gas] at ({1.3*cos(110)},{1.3*sin(110)}) {};
    \node[gas] at ({2*cos(140)},{2*sin(140)}) {};
    \node[gas] at ({1.55*cos(180)},{1.55*sin(180)}) {};
    \node[gas] at ({1.25*cos(160)},{1.25*sin(160)}) {};
     
    \node[gas] at ({2.2*cos(75)},{2.2*sin(75)}) {};
    \node[gas] at ({1.5*cos(50)},{1.5*sin(50)}) {};
    \node[gas] at ({1.55*cos(20)},{1.55*sin(20)}) {};
    \node[gas] at ({1.9*cos(350)},{1.9*sin(350)}) {};

    \node[gas] at ({2.3*cos(230)},{2.3*sin(230)}) {};
    \node[gas] at ({1.55*cos(280)},{1.55*sin(280)}) {};
    \node[gas] at ({2*cos(310)},{2*sin(310)}) {};
  \end{tikzpicture}
  %\caption{$|\Omega|=2$.}
\end{subfigure}
\hspace{0.1\textwidth}
\begin{subfigure}{0.4\textwidth}
  \centering
  \begin{tikzpicture}[scale=0.8]
    % Second TikZ picture
   \node[small dot new 2] at (0,-2.5) (3) {};
   \node[small dot new 2] at ({2.5*cos(150)},{2.5*sin(150)}) (2) {};
   \node[small dot new 2] at ({2.5*cos(30)},{2.5*sin(30)}) (1) {};

   \path[draw,thick,-]
    (1) edge node {} (2)
    (2) edge node {} (3)
    (3) edge node {} (1);
  \end{tikzpicture}
  %\caption{$d^\downarrow=(d_0,\dots,d_0,d_1)$.}
\end{subfigure}
\caption{Gas in a toric box (left) and its associated complete graph (right).}
\label{gas graph}
\end{figure}

\begin{figure}[!t]
\centering
\begin{tikzpicture}[scale=0.8]
\node[small dot new 3,label={[yshift=0.01cm, thick,red!40, font=\fontsize{11}{11}\selectfont, thick]270:$d_0$}] at (-2,0) (1) {};
\node[small dot new 3,label={[yshift=0.01cm, thick,red!40, font=\fontsize{11}{11}\selectfont, thick]270:$d_0$}] at (2,0) (2) {};
\node[small dot new 2,label={[xshift=0.01cm, thick, font=\fontsize{11}{11}\selectfont, thick]180:$d_1$}] at (-4,2) (3) {};
\node[small dot new 2,label={[xshift=0.01cm, thick, font=\fontsize{11}{11}\selectfont, thick]0:$d_1$}] at (4,2) (4) {};
\node[small dot new 2,label={[yshift=0.01cm, thick, font=\fontsize{11}{11}\selectfont, thick]90:$d_1$}] at (-2,4) (5) {};
\node[small dot new 2,label={[yshift=0.01cm, thick, font=\fontsize{11}{11}\selectfont, thick]90:$d_1$}] at (2,4) (6) {};
%\node[small dot new,label={[anchor=north,above=1mm, thick, font=\fontsize{11}{11}\selectfont, thick]90:$A_2 p$}] at (0,4) (7) {};

\path[draw,thick,-]
    (3) edge node {} (6)
    (3) edge node {} (4)
    (3) edge node {} (5)
    (4) edge node {} (6)
    (5) edge node {} (6)
    (5) edge node {} (4)
    ;
\path[draw,thick,blue!60,-]
    (1) edge node {} (3)
    (2) edge node {} (4)
    (3) edge node {} (2)
    (5) edge node {} (2)
    (5) edge node {} (1)
    (6) edge node {} (2)
    (1) edge node {} (6)
    (1) edge node {} (4)
    ;
    \path[draw,thick,red!40,-]
    (1) edge node {} (2)
    ;
    \node[rounded corners, draw, dashed, red!40, text height = 1cm,minimum  width=6cm,yshift=-0.2cm]  (main) {};
\end{tikzpicture}
\caption{Complete $d$-graph $G$ with $|V_G|=6$ and $d^\downarrow=(d_0,d_0,d_1,\dots,d_1) \in \mathbb P_V$ with $d_1<d_0$. The black and red edges represent symmetric channels, and the blue edges asymmetric ones.}
\label{complete graph}
\end{figure}
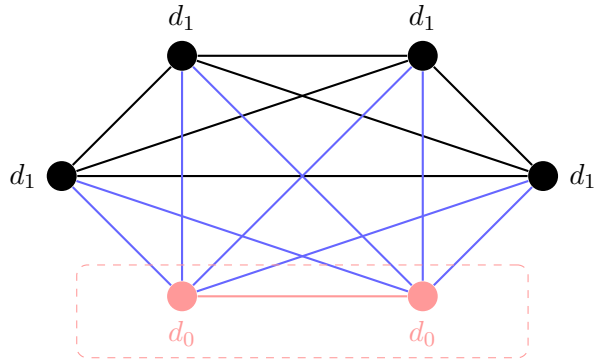

Our aim, and the reason why walks on graphs were introduced \cite{levin2017markov}, is to associate to $G$ a transition matrix $M$ such that, if we assume time to be discrete and consider some initial distribution of the substance over the edges $p_0 \in \mathbb P_{V_G}$, then $p_1 = M p_0 \in \mathbb P_{V_G}$ is a distribution that could have been obtained by allowing the substance to redistribute itself along the edges of $G$ after one time step. 
With this in mind, we define the following random walk on complete graphs.

\begin{defi}[ETO random walk on a complete graph]
\label{eto random walk}
If $G=(V_G,E_G)$ is a complete graph with some preferences among its vertices given by a distribution $d \in \mathbb P_{V_G}$, then an ETO random walk on $G$ is a matrix $M \in \mathcal M_{|\Omega|, |\Omega|}(\mathbb R)$ that can be decomposed as a convex combination of $d$-swaps. 
\end{defi}

Note that Definition \ref{eto random walk} coincides with the length one ETO polytope and, moreover, instead of assuming that the redistribution of substance takes place among all edges, it allows any distribution over the different edges in $E_G$ (including those in which some of them are not used). Furthermore, the identity can also be given a non-zero weight.

Before we continue, let us make a remark regarding the classical example of a gas in a box. In particular, we consider $d$ to be the uniform distribution and $|V_G| \geq 3$, as illustrated in Figure \ref{class gas} via its associated graph. Although this is not covered by ETO random walks (since the associated graph is not complete), one could treat this (or any non-complete graph) in an analogous way by taking into account the topology of the specific problem. (The only drawback being the potential lack of symmetry compared to the case of complete graphs.) It should be noted that one cannot take $M$ to be any $d$-stochastic matrix. In fact, in this scenario, permuting the gas concentration between non-adjacent compartments while leaving the rest unchanged would be allowed by doubly stochastic matrices. Such considerations led to the introduction of the \emph{molecular diffusion ordering} in \cite{hack2022majorization}, which corresponds to the transitions associated to non-homogeneous Markov chains whose constituent parts are precisely ETO random walks.

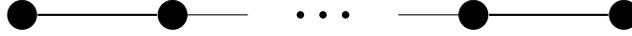
\begin{figure}
  \centering
  \begin{tikzpicture}
    % Second TikZ picture
    \node[small dot new 2] at (0,0) (1) {};
    \node[small dot new 2] at (2,0) (2) {};
    \node[small dot new 2] at (6,0) (3) {};
    \node[small dot new 2] at (8,0) (4) {};

    \node[smaller dot new 2] at (3.7,0)  {};
    \node[smaller dot new 2] at (4,0)  {};
    \node[smaller dot new 2] at (4.3,0)  {};

    \draw  (2,0) -- (3,0);
    \draw  (5,0) -- (6,0);

    \path[draw,thick,-]
      (1) edge node {} (2)
      (3) edge node {} (4);
  \end{tikzpicture}
  \caption{Graph associated with the classical thermodynamic example of a gas in a box.}
  \label{class gas}
\end{figure}

\subsection{Characterizing the length one ETO polytope}
\label{charac rw}

As a first result, in the following theorem, we characterize the $d$-stochastic matrices $M$ that belong to the length one ETO polytope and, equivalently, the set of ETO random walks on a complete graph. Note that we say a $d$-stochastic matrix $M \in \mathcal M_{|\Omega|, |\Omega|}(\mathbb R)$ satisfies \emph{detailed balance} provided $M_{i,j}d_j=M_{j,i}d_i$ for $1 \leq i,j \leq |\Omega|$ \cite{levin2017markov}.

\begin{theo}[Characterization of the length one ETO polytope]
\label{length one ETO}
If $0<d \in \mathbb P_\Omega$ and $M \in \mathcal M_{|\Omega|, |\Omega|}(\mathbb R)$, 
%and $Q\in \mathcal M_{|\Omega|, |\Omega|}(\mathbb R)$ is a permutation matrix such that $d^\downarrow = Q d$,
then the following statements are equivalent:
\begin{enumerate}[label=(\alph*)]
        \item $M$ can be decomposed as a convex combination of $d$-swaps.
        \item $M^\downarrow$ is stochastic, it satisfies detailed balance and there exists some $\lambda \in [0,1]$ such that
\begin{equation}
\label{final}
    M^\downarrow_{i,i} = \lambda + \sum_{i<j \leq |\Omega|} \left(1-\frac{d^\downarrow_j}{d^\downarrow_i}\right) M^\downarrow_{i,j} + \sum_{\substack{1 \leq k<j \leq |\Omega|\\ k,j \neq i}} M^\downarrow_{k,j}
\end{equation} 
for $1 \leq i \leq |\Omega|$.
        \end{enumerate}
\end{theo}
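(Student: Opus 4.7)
The plan is to reduce to the case $d = d^\downarrow$ and then match matrix entries directly. By Lemma~\ref{ordered d}, applied at the level of length-one decompositions (the argument of case~(A) of its proof uses only conjugation by the permutation matrix $Q$, and conjugation sends a single $d^\downarrow$-swap to a single $d$-swap), statement (a) holds for $M$ with equilibrium $d$ if and only if it holds for $M^\downarrow$ with equilibrium $d^\downarrow$; since (b) is already phrased in terms of $M^\downarrow$ and $d^\downarrow$, I may assume $d = d^\downarrow$ throughout and drop the arrow from the notation.

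For (a) $\Rightarrow$ (b): write $M = \lambda \mathbb{I} + \sum_{i<j}\mu_{i,j}\, P^d(i,j)$ with $\lambda,\mu_{i,j} \geq 0$ summing to one. Reading off the $(i,j)$ block of each $d$-swap gives $M_{i,j} = \mu_{i,j}$ and $M_{j,i} = (d_j/d_i)\,\mu_{i,j}$ for $i<j$, which is detailed balance; stochasticity is inherited from the convex combination of stochastic matrices. For the diagonal entry $M_{i,i}$, $P^d(i,\ell)$ with $\ell>i$ contributes $1 - d_\ell/d_i$, $P^d(k,i)$ with $k<i$ contributes $0$, and $P^d(k,\ell)$ with $k,\ell \neq i$ contributes $1$. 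Substituting $\mu_{i,\ell} = M_{i,\ell}$ and $\mu_{k,\ell} = M_{k,\ell}$ back in produces exactly \eqref{final}.

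For (b) $\Rightarrow$ (a): set $\mu_{i,j} \coloneqq M_{i,j}$ for $i<j$ and take $\lambda$ as in \eqref{final}. These weights are non-negative by hypothesis. The matrix $N \coloneqq \lambda \mathbb{I} + \sum_{i<j}\mu_{i,j}\, P^d(i,j)$ agrees with $M$ off-diagonally by construction (the $(i,j)$-entries with $i<j$) and by detailed balance (the $(j,i)$-entries), and agrees on the diagonal by \eqref{final}. It remains to check that $\lambda + \sum_{i<j}\mu_{i,j} = 1$. Using stochasticity of column $j$, substitute \eqref{final} for $M_{j,j}$ into $M_{j,j} = 1 - \sum_{i \neq j} M_{i,j}$; split $\sum_{i\neq j}M_{i,j}$ into its $i<j$ and $i>j$ parts and use detailed balance to rewrite $\sum_{i>j}M_{i,j} = \sum_{\ell>j}(d_\ell/d_j)\,M_{j,\ell}$. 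The term $\sum_{\ell>j}(1-d_\ell/d_j)M_{j,\ell}$ coming from \eqref{final} then cancels the $d_\ell/d_j$ correction, and the three remaining sums combine into $\sum_{k<\ell}M_{k,\ell}$, yielding $1 = \lambda + \sum_{k<\ell}M_{k,\ell}$ as required.

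I expect the main obstacle to be purely bookkeeping: identity \eqref{final} mixes three distinct index families (the identity weight, the same-row upper-triangular entries weighted by $1 - d_\ell/d_i$, and the bystander upper-triangular entries), and verifying the sum-to-one condition requires carefully recombining them through detailed balance. Conceptually, however, the argument is just matching coefficients: detailed balance encodes the $(i,j)$--$(j,i)$ coupling forced by a single $d$-swap, while \eqref{final} encodes that every diagonal entry sees the same residual identity weight $\lambda$.
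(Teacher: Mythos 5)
Your proposal is correct and follows essentially the same route as the paper: read off the off-diagonal entries to get detailed balance, match the diagonal contributions of each $d$-swap to obtain \eqref{final}, and for the converse use the weights $M_{i,j}$ themselves together with detailed balance and stochasticity to verify the decomposition and the normalization. The only (immaterial) differences are that you merge repeated swaps into a single weight per pair and verify the sum-to-one condition from a general column $j$ rather than from the first row of \eqref{final} as the paper does.
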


\begin{proof}
We assume for simplicity that $d=d^\downarrow$ and, hence, $M=M^\downarrow$ throughout the proof.

    We begin by showing that $(a)$ implies $(b)$. In order to do so, we start by simply fixing a decomposition of $M \in \mathcal M_{|\Omega|, |\Omega|}(\mathbb R)$ is terms of $d$-swaps
    \begin{equation*}
    M = \lambda \mathbb I + \sum_{s=1}^{s_0} \lambda_s P^d(i_s,j_s),    
    \end{equation*}
    where $\lambda + \sum_{s=1}^{s_0} \lambda_s =1$, and, for $1 \leq s \leq s_0$, $\lambda, \lambda_s \geq 0$ and $1 \leq i_s<j_s \leq |\Omega|$. In this scenario, we immediately have that $M$ is stochastic and, moreover, it fulfills detailed balance since $P^d(i_s,j_s)$ does and it is the only $d$-swap that contributes to $M_{i_s,j_s}$ for all $1 \leq s \leq s_0$. Furthermore, it is easy to see that, for $1 \leq s \leq s_0$ and $1 \leq i \leq |\Omega|$, we have that 
    \begin{equation}
    \label{intermediate}
    P^d(i_s,j_s)_{i,i}= \sum_{i<j \leq |\Omega|} \left(1-\frac{d_j}{d_i}\right) P^d(i_s,j_s)_{i,j} + \sum_{\substack{1 \leq k<j \leq |\Omega|\\ k,j \neq i}} P^d(i_s,j_s)_{k,j}.   
    \end{equation}
 To conclude, it is easy to show that \eqref{final} holds using \eqref{intermediate}. We have
    \begin{equation*}
    \begin{split}
    (M)_{i,i}
    &= \lambda + \sum_{s=1}^{s_0} \lambda_s \left( \sum_{ i<j \leq |\Omega|} \left(1-\frac{d_j}{d_i}\right) P^d(i_s,j_s)_{i,j} + \sum_{\substack{1 \leq k<j \leq |\Omega|\\ k,j \neq i}} P^d(i_s,j_s)_{k,j} \right)\\
    &= \lambda + \sum_{i<j \leq |\Omega|} \left(1-\frac{d_j}{d_i}\right) \sum_{s=1}^{s_0} \lambda_s P^d(i_s,j_s)_{i,j} + \sum_{\substack{1 \leq k<j \leq |\Omega|\\ k,j \neq i}} \sum_{s=1}^{s_0} \lambda_s P^d(i_s,j_s)_{k,j}\\
    &= \lambda + \sum_{i<j \leq |\Omega|} \left(1-\frac{d_j}{d_i}\right) M_{i,j} + \sum_{\substack{1 \leq k<j \leq |\Omega|\\ k,j \neq i}} M_{k,j},
    \end{split}
\end{equation*}
 for $1 \leq i \leq |\Omega|$.
  
  We conclude by showing that $(b)$ implies $(a)$. In order to do so, it suffices to notice that any 
  stochastic 
  matrix $M \in \mathcal M_{|\Omega|, |\Omega|}(\mathbb R)$ fulfilling detailed balance and \eqref{final} can be decomposed in the following way:
  \begin{equation}
    \label{necessity}
        M= \lambda \mathbb I + \sum_{1 \leq k < j \leq |\Omega|} M_{k,j} P^d(k,j).
    \end{equation}
    This constitutes a convex combination of $d$-swaps since
\begin{equation*}
\begin{split}
        \lambda + \sum_{1 \leq k < j \leq |\Omega|} M_{k,j} &= \lambda + \sum_{2 \leq k < j \leq |\Omega|} M_{k,j} + \sum_{1 < j \leq |\Omega|} M_{1,j} \\
        &= M_{1,1} - \sum_{1<j \leq |\Omega|} \left(1-\frac{d_j}{d_1}\right) M_{1,j}
        + \sum_{1 < j \leq |\Omega|} M_{1,j} \\
        &= M_{1,1} + \sum_{1<j \leq |\Omega|} \frac{d_j}{d_1} M_{1,j}\\
        &= M_{1,1} + \sum_{1<j \leq |\Omega|} M_{j,1}\\
        &=1,
        \end{split}
    \end{equation*}
where we applied \eqref{final} in the second equality, detailed balance in the fourth, and the fact $M$ is stochastic in the last.

To conclude, we verify that \eqref{necessity} holds. This is the case since the following relations hold for $1 \leq k < j \leq |\Omega|$: $(A)$ $(M_{k,j} P^d(k,j))_{k,j}=M_{k,j}$ by definition, $(B)$ $(M_{k,j} P^d(k,j))_{j,k}=M_{j,k}$ by detailed balance, and $(C)$
 \begin{equation*}
    (M_{k,j} P^d(k,j))_{\ell,\ell} = \begin{cases}
    M_{k,j} &\text {if } \ell \neq k,j, \\
   \left(1- \frac{d_j}{d_k} \right) M_{k,j} &\text {if } \ell=k,\\
   0 &\text {if } \ell=j,
    \end{cases}
\end{equation*}
by definition. Hence, applying \eqref{final}, we have that \eqref{necessity} holds.
\end{proof}

\begin{rem}
    Note that Theorem \ref{length one ETO} provides a simple algorithm to determine whether some matrix belongs to the length one ETO polytope and, in case it does, it also returns the weights to decompose such a matrix in terms of $d$-swaps. Up to permutations in $d$, the algorithm can be summarized as follows:
    \begin{enumerate}[label=(\alph*)]
        \item Input $0<d \in \mathbb P_\Omega$ and $M \in \mathcal M_{|\Omega|, |\Omega|}(\mathbb R)$.
        \item Check $M$ is stochastic and satisfies detailed balance.
        \item Calculate $\lambda_i$ for $1 \leq i \leq |\Omega|$ following \eqref{final} and check $\lambda_i=\lambda$ for $1 \leq i \leq |\Omega|$ and $0 \leq \lambda \leq 1$.
        \item Output the decomposition of $M$ in terms of $d$-swaps: $\lambda$ times the identity plus $M_{i,j}$ times the $d$-swap acting non-trivially on the components $i$ and $j$ for all $i<j$. 
    \end{enumerate}
\end{rem}

If $|\Omega|=2$, note that any $d$-stochastic matrix $M$ satisfies detailed balance and \eqref{final} always holds with $\lambda=1-M_{1,2}$. Moreover,
in case $d \in \mathbb P_\Omega$ is the uniform distribution and for arbitrary $\Omega$, then detailed balance reduces to $M$ being symmetric and \eqref{final} to 
\begin{equation}
\label{simple thm 3}
    M_{i,i} = \lambda + \sum_{\substack{1 \leq k<j \leq |\Omega|\\ k,j \neq i}} M_{k,j}
\end{equation} 
for $1 \leq i \leq |\Omega|$ and $\lambda \in [0,1]$.

\subsection{Random walks on complete graphs}
\label{usual rw}

Although \emph{weighted random walks} have also been considered in the literature \cite[Chapter 9]{levin2017markov}, we relate here Definition \ref{eto random walk} with the usual definition on random walk only in the case when the reference distribution $d \in \mathbb P_{V_G}$ is uniform.
In this scenario, $M_0$ is the \emph{simple random walk} on (a complete graph) $G$ provided we have, for all $1 \leq i,j \leq |V_G|$, that
\begin{equation}
\label{simple RW}
    (M_0)_{i,j} = \begin{cases}
    \frac{1}{|V_G|-1}, &\text {if } i \neq j,\\
   0,&\text{if } i=j.
    \end{cases}
\end{equation}
Moreover, $M_1$ is the \emph{lazy random walk} on $G$ provided
\begin{equation}
    M_1= \frac{1}{2} \left( \mathbb I + M_0\right)
\end{equation}
for some simple random walk $M_0$.

Hence, the simple random walk assumes that the substance coming from one vertex is redistributed equally among all the vertices it is connected to (in our case, all of them) and the lazy random walk that half of it stays in the original vertex and the rest is redistributed equally among all the vertices it is connected to.

%If we start with some distribution of the substance among the edges $p \in \mathbb P_{V_G}$ and we assume that the redistribution of the substance is performed in discrete steps, we would like to associate a Markov chain $(M_t)_{t \geq 1}$ such that $M_{t_0} \cdots M_1 p$ constitutes the distribution of the substance after $t_0$ time steps. (We refer to $M_t$ as a \emph{one-step transition matrix for all $t \geq 1$.}) In the spirit of a \emph{random walk on a graph} \cite{lovasz1993random,motwani1995randomized,levin2017markov}, the first way in which one may do this consists of taking $M_t=M_1$ for all $t \geq 1$ such that there is a uniform probability of each channel in the graph transporting substance at each time step. (We include here as \emph{channel} the trivial case where there is no transportation at all.)

We can use Theorem \ref{length one ETO} to directly establish the relation between simple, lazy and ETO random walks on graphs, as we state in the following corollary.

\begin{coro}[Relation between random walks on complete graphs]
If $G=(V_G,E_G)$ is a complete undirected graph without loops, $d \in \mathbb P_{V_G}$ is the uniform distribution and $M\in \mathcal M_{|V_G|, |V_G|}(\mathbb R)$, then the following statements hold:
 \begin{enumerate}[label=(\alph*)]
\item If $M$ is the simple random walk on $G$, then $M$ is an ETO random walk on $G$ if and only if $|V_G| = 2$. 
\item If $M$ is the lazy random walk on $G$, then $M$ is an ETO random walk on $G$ if and only if $|V_G| \leq 4$.
\end{enumerate}
\end{coro}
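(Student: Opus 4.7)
The plan is to apply Theorem \ref{length one ETO} directly in its uniform form \eqref{simple thm 3}, which reduces the question to: is $M$ symmetric, and is there a constant $\lambda\in[0,1]$ such that $M_{i,i}=\lambda+\sum_{k<j,\; k,j\neq i} M_{k,j}$ for every $i$? Both the simple and lazy random walks on a complete graph are manifestly symmetric (the diagonal is constant and all off-diagonal entries coincide), so detailed balance and the independence of the sum on $i$ are automatic. What remains is a one-line computation of $\lambda$ in each case.

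For part (a), using $M_{i,i}=0$ and $M_{k,j}=1/(|V_G|-1)$ for $k\neq j$, the number of unordered pairs $\{k,j\}$ not involving a given $i$ is $\binom{|V_G|-1}{2}$, so
\begin{equation*}
    \sum_{\substack{1\leq k<j\leq|V_G|\\ k,j\neq i}} M_{k,j}
    = \binom{|V_G|-1}{2}\cdot\frac{1}{|V_G|-1}
    = \frac{|V_G|-2}{2}.
\end{equation*}
Equation \eqref{simple thm 3} then forces $\lambda=-(|V_G|-2)/2$, which lies in $[0,1]$ iff $|V_G|=2$.

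For part (b), the diagonal becomes $M_{i,i}=1/2$ and the off-diagonal entries $1/(2(|V_G|-1))$, so the same counting gives
\begin{equation*}
    \sum_{\substack{1\leq k<j\leq|V_G|\\ k,j\neq i}} M_{k,j}
    = \frac{|V_G|-2}{4},
\end{equation*}
and \eqref{simple thm 3} yields $\lambda=(4-|V_G|)/4$. The condition $\lambda\in[0,1]$ reduces to $0\leq|V_G|\leq 4$, i.e., $|V_G|\leq 4$ (the upper bound $\lambda\leq 1$ is automatic).

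There is no real obstacle here: the only subtlety is checking that detailed balance and the structural hypothesis of Theorem \ref{length one ETO} are both satisfied for free because $M$ is symmetric with equal off-diagonal entries, so the theorem reduces to a single scalar equation whose solvability with $\lambda\in[0,1]$ gives exactly the stated thresholds $|V_G|=2$ and $|V_G|\leq 4$.
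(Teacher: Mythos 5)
Your proposal is correct and follows essentially the same route as the paper: both reduce the claim to the uniform-$d$ specialization \eqref{simple thm 3} of Theorem \ref{length one ETO}, compute $\sum_{k<j,\,k,j\neq i}M_{k,j}=(|V_G|-2)/2$ (resp.\ $(|V_G|-2)/4$), and read off when a $\lambda\in[0,1]$ exists. The only cosmetic difference is that you solve for $\lambda$ explicitly while the paper states the equivalent inequality $M_{i,i}<\sum_{k<j,\,k,j\neq i}M_{k,j}$.
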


\begin{proof}
   Both statements follow directly from \eqref{simple thm 3}. The first follows since we have
   %If $|V_G|=2$, then $M=P^u(1,2)$. Otherwise 
    \begin{equation*}
    M_{i,i}=0<\frac{1}{2}|V_G|-1= \sum_{\substack{1 \leq k<j \leq |V_G|\\ k,j \neq i}} M_{k,j} \iff 2 < |V_G|   
    \end{equation*}
    for $1 \leq i \leq |\Omega|$.
    The second follows since we have
    \begin{equation*}
    M_{i,i}=\frac{1}{2}<\frac{1}{2} \left(\frac{1}{2}|V_G|-1 \right)= \sum_{\substack{1 \leq k<j \leq |V_G|\\ k,j \neq i}} M_{k,j} \iff 4 < |V_G|   
    \end{equation*}
    for $1 \leq i \leq |\Omega|$.
\end{proof}

As a last remark, note that weighted random walks consider asymmetries between the different edges in the graph (with the motivation coming from electric networks), while ETO random walks make distinctions at the vertex level.

In the following section, we return to our main question of how the different thermal resource theories are related to each other.

\section{Weak universality of elementary thermal operations}
\label{rt relas}

In this section, we answer (Q1) by considering first non-determinisitic protocols in Section \ref{weak uni ETO} and then determinisitc ones in Section \ref{weak uni WETO}.

\subsection{Weak universality of strong elementary thermal operations}
\label{weak uni ETO}

The first question we address is the relation between the thermal and elementary thermal resource theories. We begin by extending \cite[Corollary 5]{lostaglio2018elementary}, where it was shown that they do not always coincide.\footnote{The impossibility of extending Theorem \ref{birkhoff} in general was already mentioned by Veinott in \cite[p. 2]{veinott1971least}, which points to a manuscript called \emph{On $d$-majorization and $d$-Schur convexity} by the same author. The latter was not published according to \cite{marshall_inequalities:_2011}.}

\begin{prop}[Difference TO and ETO resource theories]
\label{difference to eto}
If $0<d \in \mathbb P_\Omega$, then 
%any transition achievable by thermal operations is achievable by elementary thermal operations
the thermal operations resource theory is equal to the elementary thermal operations 
resource theory
\begin{equation*}
    \text{RT}_{\text{TO}}(d) = \text{RT}_{\text{ETO}}(d)
\end{equation*}
only if 
$d^\downarrow = (d_0,\dots,d_0,d_1)$.
%up to permutations, where $d_0 \geq d_1$.
\end{prop}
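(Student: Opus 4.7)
My plan is to prove the contrapositive: assuming $d^\downarrow \ne (d_0, \ldots, d_0, d_1)$, I would exhibit $p, q \in \mathbb P_\Omega$ with $q \preceq_d p$ yet $q \notin \mathcal C_d^{ETO}(p)$. By Lemma \ref{ordered d} I may assume $d = d^\downarrow$. Let $m$ denote the multiplicity of the maximum $d_1$ in $d$; the hypothesis becomes $m \leq |\Omega| - 2$. I would split along whether the two smallest entries of $d$ fit under $d_1$, and observe that the two cases exhaust all possibilities, since the complementary inequalities produce the precise estimates that power each construction.

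In the \emph{first} case $d_{|\Omega|-1} + d_{|\Omega|} \leq d_1$, I would set $q = (1, 0, \ldots, 0)$ and choose any $p$ supported on the last two indices. The $d$-majorization curve of $p$ is concave and reaches $1$ at $x = d_{|\Omega|-1} + d_{|\Omega|}$, so it lies above the chord from the origin to $(d_{|\Omega|-1} + d_{|\Omega|}, 1)$; that chord has slope $\geq 1/d_1$ and therefore dominates the curve of $q$, giving $q \preceq_d p$. Since $|\mathrm{supp}(p)| = 2 > 1 = |\mathrm{supp}(q)|$, Lemma \ref{monot eto} rules out ETO.

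In the \emph{second} case $d_{|\Omega|-1} + d_{|\Omega|} > d_1$, one has $\sum_{i > m} d_i > d_1$, equivalently $d_1 < 1/(m+1)$. I would build a $d$-stochastic matrix $M$ with $M_{i1} = 0$ for $i \leq m$ by setting $M_{i1} = d_i/(1 - m d_1)$ for $i > m$ and completing the remaining columns consistently with $Md = d$ (feasible because each such $M_{i1} \in [0, 1]$ under $d_1 < 1/(m+1)$). Setting $p = (1, 0, \ldots, 0)$ and $q = Mp$ yields $q \preceq_d p$ (by comparing the explicit curves) together with $q_i = 0$ for all $i \leq m$. To rule out ETO I would define $\mu(r) := \sum_{i \leq m} r_i$ and prove by induction on the length of a product of $d$-swaps applied to $p$ that $\mu$ stays strictly positive: intra-block swaps (both indices $\leq m$ or both $> m$) leave $\mu$ unchanged, while a cross-swap $P^d(i, j)$ with $i \leq m < j$ updates $\mu$ by $r_j - (d_j/d_1) r_i$, and because $r_i \leq \mu$ and $d_j < d_1$, this satisfies $\mu' \geq \mu - (d_j/d_1) r_i > 0$. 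Convex combinations of such products inherit the strict positivity entrywise, so $\mu(Np) > 0$ for every ETO $N$, contradicting $\mu(q) = 0$.

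The step I expect to require the most care is the inductive cross-swap argument in the second case, whose strictness depends on the two concurrent bounds $r_i \leq \mu$ and $d_j < d_1$: the first says a single large-block entry cannot exceed the total large-block mass, and the second is guaranteed by $j$ belonging to the small block. Everything else reduces to a routine check of $d$-stochasticity for $M$ and of the two majorization curves, and the dichotomy between Cases~1 and~2 is automatic since $\sum_{i>m} d_i \geq d_{|\Omega|-1}+d_{|\Omega|}$, so whenever Case~1 fails, the inequality needed for Case~2 holds.
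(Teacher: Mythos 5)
Your proposal is correct and follows essentially the same route as the paper's proof: the same contrapositive, the same dichotomy on whether two sub-maximal entries of $d$ fit under the maximum, the support argument via Lemma \ref{monot eto} in the first case, and, in the second case, the observation that elementary operations applied to $p=(1,0,\dots,0)$ cannot empty the block of indices on which $d$ attains its maximum. Your explicit functional $\mu$ together with the inductive cross-swap estimate makes rigorous a step the paper only asserts informally (``the action of any elementary thermal operation on $p$ will leave at least one non-zero component in the entries associated with $d_\alpha$''), which is a welcome tightening rather than a genuinely different argument.
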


\begin{proof}
By Lemma \ref{ordered d}, it suffices to assume that $d=d^\downarrow$ throughout the proof.

    %We begin showing sufficiency.
    We show the result by contrapositive, that is, we assume that there is no pair $d_0,d_1 \in \mathbb R$ with $0<d_1 \leq d_0$ such that $d = (d_0,\dots,d_0,d_1)$.
    %Since the polytope of elementary thermal operations is contained insider that of thermal operations by definition, we simply ought to show that the inclusion is strict to conclude sufficiency.
    We split the proof in two cases where, for each of them, we construct a pair of distributions $p,q \in \mathbb P_\Omega$ such that $q$ can be attained from $p$ by means of thermal operations but not by elementary thermal operations. We consider $d_\alpha$ the largest component in $d$ and notice that, by assumption, there exist two entries of $d$, which we name for simplicity $d_\beta$ and $d_\gamma$, such that $d_\gamma \leq d_\beta < d_\alpha$.  (For simplicity, and w.l.o.g., we will assume that $d_\beta$ and $d_\gamma$ correspond, respectively, to the second and third entries of $d$. If that were not the case, we could follow the argument below taking the appropriate components of both $p$ and $q$.) Consider, hence, the following two cases:
\begin{enumerate}[label=(\Alph*)]
    \item $d_\alpha \geq d_\beta+d_\gamma$. In this case, we can follow the idea in \cite[Corollary 5]{lostaglio2018elementary}. In particular, we can take $q=(1,0,\dots,0)$ and $p=(0,a,b,0,\dots,0)$ with $d_\gamma/d_\alpha \leq b \leq (d_\alpha-d_\beta)/d_\alpha$ (this can be done since $d_\alpha \geq d_\beta+d_\gamma$ by assumption) and $a=1-b$. By Lemma \ref{monot eto}, $q$ cannot be achieved from $p$ via elementary thermal operations since it has a smaller support. However, it is easy to check that $q \preceq_d p$. 
\item $d_\alpha < d_\beta+d_\gamma$. In this case, we can take $q=(0,a,b,0,\dots,0)$ with $(d_\alpha-d_\beta)/d_\alpha \leq b \leq d_\gamma/d_\alpha$ (this can be done since $d_\alpha < d_\beta+d_\gamma$ by assumption) and $a=1-b$, and $p=(1,0,\dots,0)$. $q$ cannot be achieved from $p$ via elementary thermal operations since, given that $d_\alpha > d_i$ for any $i \neq \alpha$ by assumption, the action of any elementary thermal operation on $p$
%that acts non-trivially on the first component
will result in a non-zero first component.
%when applied to $p$.
(In case we had multiple entries in $d$ equal to $d_\alpha$, the action of any elementary thermal operation on $p$
%that acts non-trivially on the first component
will leave
%, when applied to $p$,
, at least, one non-zero component in the entries associated with those for which $d$ takes the value $d_\alpha$.) However, it is easy to check that $q \preceq_d p$. 
\end{enumerate}
This concludes the proof.
\end{proof}

\begin{rem}
    As a result of Proposition \ref{difference to eto}, even if we condition our experimental protocols using random variables, elementary thermal operations are not weakly universal (hence not universal) provided $d^\downarrow \neq (d_0,\dots,d_0,d_1)$.
\end{rem}

Note that, in agreement with Lemma \ref{equiv dim 2}, $d^\downarrow=(d_0,d_1)$ for any $d \in \mathbb P_\Omega$ if $|\Omega|=2$. Moreover, note that Proposition \ref{difference to eto} shows discrepancies between thermal and elementary thermal resource theories whenever $d^\downarrow = (d_0,d_1,\dots,d_1)$ with $d_0>d_1$. The counterexamples, however, do not extend to the case where $d^\downarrow = (d_0,\dots,d_0,d_1)$ with $d_0>d_1$. This follows from the asymmetry in the $d$-swaps whenever the the energy levels that are non-trivially affected are different.

Given that $\mathcal C^{ETO}_d(p)$ is a closed set for all $p,d \in \mathcal P_\Omega$ with $0<d$ \cite[Theorem 6]{lostaglio2018elementary}, Proposition \ref{difference to eto} actually shows that there exists some $q \in \mathbb P_\Omega$ such that there is no sequence of distributions $(q_\varepsilon)_\varepsilon \subseteq \mathbb P_\Omega$ which are $\varepsilon$-close to $q$ and achievable from $p$ via elementary thermal operations $q_\varepsilon \in \mathcal C^{ETO}_d(p)$.

Since it seems we cannot extend Proposition \ref{difference to eto} further, let us establish the equivalence between both resource theories for a low dimensional $\Omega$, where we can calculate everything explicitly. In fact, we can directly show the equivalence at the polytope level, as we do in the following proposition.

\begin{prop}[Equivalence TO and ETO polytopes for $|\Omega| = 3$]
\label{eto vs to}
If $0<d \in \mathbb P_\Omega$ and $|\Omega| = 3$, then
the following statements are equivalent:
\begin{enumerate}[label=(\alph*)]
     \item The thermal operations polytope is equal to the elementary thermal operations 
polytope
\begin{equation*}
    \mathcal P_{\text{TO}}(d) = \mathcal P_{\text{ETO}}(d).
\end{equation*}
    \item $d^\downarrow = (d_0,d_0,d_1)$.
    %up to permutations, where $d_0 \geq d_1$.
    %and the number of $d_0$ equal to $|\Omega|-1$.
 \end{enumerate}
\end{prop}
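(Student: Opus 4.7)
My plan is to prove the two directions of the equivalence separately, with the bulk of the work going into the forward construction.

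The direction $(a) \Rightarrow (b)$ is immediate from Proposition \ref{difference to eto}. If the polytopes coincide as sets of matrices, then the induced resource theories (defined by the action of matrices on distributions) coincide as well, and Proposition \ref{difference to eto} forces $d^\downarrow = (d_0, \dots, d_0, d_1)$, which for $|\Omega|=3$ reads $d^\downarrow = (d_0, d_0, d_1)$. For the substantive direction $(b) \Rightarrow (a)$, I would invoke Lemma \ref{ordered d} to assume $d = d^\downarrow = (d_0, d_0, d_1)$ with $d_0 \geq d_1 > 0$, and set $\gamma := d_1/d_0 \in (0, 1]$. Since $\mathcal P_{\text{ETO}}(d) \subseteq \mathcal P_{\text{TO}}(d)$ always, only the reverse inclusion needs proof. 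Under this shape of $d$, the $d$-swaps take a particularly manageable form: $P^d(1,2)$ is the plain transposition $\sigma_{12}$ of levels $1$ and $2$ (since $d_1=d_2$ forces $\gamma_{1,2}=1$), and the remaining ones are $P^d(1,3)$ together with $P^d(2,3) = \sigma_{12}\, P^d(1,3)\, \sigma_{12}$.

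My strategy for the inclusion is to characterize the extreme points of the convex polytope $\mathcal P_{\text{TO}}(d)$ and exhibit each one explicitly as a product of $d$-swaps, so that convexity yields $\mathcal P_{\text{TO}}(d) \subseteq \mathcal P_{\text{ETO}}(d)$. The set of $d$-stochastic $3 \times 3$ matrices is a $4$-dimensional convex polytope inside $\mathcal M_{3,3}(\mathbb R)$ (nine entries minus three column-sum constraints minus two independent equilibrium constraints, one of the three being redundant), so each extreme point must carry at least four tight non-negativity constraints, i.e.\ four or more zero entries, with no row or column identically zero (since that would contradict stochasticity or equilibrium). I would enumerate the admissible zero patterns, solve the resulting small linear systems for the remaining entries, and match each extreme matrix to a product among the three $d$-swaps and their short compositions such as $P^d(1,3)\,P^d(2,3)$, $P^d(2,3)\,P^d(1,3)$, $\sigma_{12}\,P^d(1,3)$, $P^d(1,3)\,\sigma_{12}$, and $P^d(1,3)\,P^d(2,3)\,P^d(1,3)$. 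The $\sigma_{12}$-invariance of $d$ halves the number of cases to check, since conjugation by $\sigma_{12}$ maps extreme points of $\mathcal P_{\text{TO}}(d)$ to extreme points and products of $d$-swaps to products of $d$-swaps.

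The main obstacle I anticipate is the bookkeeping in this case analysis: confirming that the enumeration is exhaustive and that each extreme matrix is indeed realized by one of the listed products. The uniform sub-case $\gamma = 1$ is immediate from Theorem \ref{birkhoff}, since every extreme point is then a permutation matrix, which is a product of the transpositions $\sigma_{12}$, $P^d(1,3)|_{\gamma=1}$, and $P^d(2,3)|_{\gamma=1}$. The substantive content therefore concerns $\gamma < 1$, where the asymmetry of the thermal swaps prevents a reduction to a pure Birkhoff statement but still yields, through the $4$-dimensional geometry and the $\sigma_{12}$ symmetry, a finite and tractable list of extreme matrices to match.
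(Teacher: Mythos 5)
Your proposal is correct and follows essentially the same route as the paper: both directions are handled identically, with $(a)\Rightarrow(b)$ reduced to Proposition \ref{difference to eto} and $(b)\Rightarrow(a)$ reduced, via convexity, to writing each extreme point of $\mathcal P_{\text{TO}}(d)$ as a product of $d$-swaps. The only difference is that you derive the vertex list by hand (dimension count, zero-pattern enumeration, $\sigma_{12}$ symmetry) whereas the paper reads it off from the Jurkat--Ryser/Hartfiel characterization of the extremes of the $d$-stochastic polytope; both yield the same ten matrices and the same final matching.
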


\begin{proof} 
By Lemma \ref{ordered d}, it suffices to assume that $d=d^\downarrow$ throughout the proof.

Sufficiency follows from Proposition \ref{difference to eto}
%, we can proceed by contrapositive follow the proof of Theorem \ref{polytopes relation}.
by contradiction. Assume both polytopes are equivalent for some $d \neq (d_0,d_0,d_1)$. By Proposition \ref{difference to eto}, there exists a pair $p,q \in \mathbb P_\Omega$ such that $q \preceq_d p$ and $q$ is not achievable from $p$ via elementary thermal operations. Since $q \preceq_d p$, there exists some $d$-stochastic matrix $M$ such that $q=Mp$ and, since the polytopes are equivalent,
\begin{equation*}
    q= \left(\sum_{k=0}^{k_0} \lambda_k \prod_{\ell=0}^{\ell_0} P^d(i_{k,\ell},j_{k,\ell}) \right) p,
\end{equation*}
with $\sum_{k=0}^{k_0} \lambda_k=1$, and $\lambda_k>0$ and $1 \leq i_{k,\ell}<j_{j,\ell} \leq |\Omega|$ for $0 \leq k \leq k_0$ and $0 \leq \ell \leq \ell_0$. This contradicts Proposition \ref{difference to eto}.

To prove necessity, since the polytope of elementary thermal operations is always contained inside that of thermal operations by definition, it suffices to note that all the extremal points of the polytope of thermal operations can be obtained as a product of elementary thermal operations. In particular, taking $\gamma = d_1/d_0$, the set of extremes of the polytope of thermal operations is $\{\mathbb I, A_1, \dots, A_9\}$, where 
      \begin{equation}
    \label{extremes poly dim 3}
  \begin{alignedat}{3}
&A_1=\begin{pmatrix}
 1-\gamma & 0 & 1 \\
\gamma & 1-\gamma & 0 \\
0 & \gamma & 0 
\end{pmatrix}, \text{ }&&A_2= \begin{pmatrix}
 1-\gamma & 0 & 1 \\
0 & 1 & 0 \\
\gamma & 0 & 0 
\end{pmatrix}, \text{ }&& A_3= \begin{pmatrix}
 0 & 1 & 0 \\
1-\gamma & 0 & 1 \\
\gamma & 0 & 0 
\end{pmatrix}, \\ \\
 &A_4=\begin{pmatrix}
 1-\gamma & \gamma & 0 \\
0 & 1-\gamma & 1 \\
\gamma & 0 & 0 
\end{pmatrix}, \text{ }&& A_5= \begin{pmatrix}
 1 & 0 & 0 \\
0 & 1-\gamma & 1 \\
0 & \gamma & 0 
\end{pmatrix}, \text{ }&& A_6=\begin{pmatrix}
 0 & 1 & 0 \\
1 & 0 & 0 \\
0 & 0 & 1 
\end{pmatrix},\\\\
 &A_7=\begin{pmatrix}
 \gamma & 1-\gamma & 0 \\
1-\gamma & 0 & 1 \\
0 & \gamma & 0 
\end{pmatrix}, \text{ }&& A_8=\begin{pmatrix}
 0 & 1-\gamma & 1 \\
1 & 0 & 0 \\
0 & \gamma & 0 
\end{pmatrix}, \text{ } &&  A_9=\begin{pmatrix}
 0 & 1-\gamma & 1 \\
1-\gamma & \gamma & 0 \\
\gamma & 0 & 0 
\end{pmatrix}.
  \end{alignedat}
\end{equation}
(This can be calculated using \cite{jurkat1967term} and \cite{hartfiel1974study}. The reader interested in how this is done can check Section \ref{poly relas}.)

To conclude, we simply notice that, aside from the identity and the elementary thermal operations acting on two levels, we have 
\begin{equation*}
     \begin{split}
     &A_1= P^d(2,3) P^d(1,3), \\
        &A_3= P^d(2,3) P^d(1,2), \\
& A_4= P^d(1,3) P^d(2,3),\\
& A_7=P^d(1,3) P^d(1,2) P^d(1,3),\\
&A_8=  P^d(1,2) P^d(2,3),\\
&A_9= P^d(2,3) P^d(1,2) P^d(2,3). 
 \end{split}
     \end{equation*}
     This concludes the proof.
\end{proof}

\begin{rem}
    As a result of Proposition \ref{eto vs to}, if $|\Omega|=3$, elementary thermal operations with conditional protocols are universal (and, as one can easily see, also weakly universal) if and only if $d^\downarrow = (d_0,\dots,d_0,d_1)$.
\end{rem}

 Proposition \ref{eto vs to} has a straightforward proof since we can explicitly calculate the extremal points in the TO polytope. As we will show in Theorem \ref{equi to eto poly}, we can use the tools in \cite{jurkat1967term,hartfiel1974study} and follow a similar strategy for arbitrary dimensions. However, we can use a simpler approach provided we are only interested in the relation between resource theories. To do so, we rely on the following lemma, where $S_{|\Omega|}$ denotes the set of permutations over $\Omega$.

\begin{lemma}[Extremes thermal cone {\cite[Lemma 12]{lostaglio2018elementary}}]
\label{losta lemma}
If $0<d \in \mathbb P_\Omega$ and $p,q \in \mathbb P_\Omega$ with $q \preceq_d p$, then $q$ can be written as a convex combination of elements in the set
$\{p^\Pi\}_{\Pi \in S_{|\Omega|}}$, where
\begin{enumerate}[label=(\alph*)]
    \item $x_i^\Pi \coloneqq \sum_{j=1}^i d_{\Pi^{-1}(j)}$, and $y_i^\Pi \coloneqq c_p^d(x_i^\Pi)$, and
    \item $p^\Pi_i \coloneqq y^{\Pi}_{\Pi(i)} - y^{\Pi}_{\Pi(i)-1} $, with $y_{0}\coloneqq0$,
\end{enumerate}
for $\Pi \in S_{|\Omega|}$ and $1\leq i \leq |\Omega|$.
\end{lemma}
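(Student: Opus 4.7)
The plan is to recognize the thermal cone $\mathcal{C}^{\mathrm{TO}}_d(p) := \{q \in \mathbb P_\Omega : q \preceq_d p\}$ as a convex polytope whose extreme points are exactly the family $\{p^\Pi\}_{\Pi \in S_{|\Omega|}}$; since a convex polytope equals the convex hull of its vertices, the lemma will then follow at once. The first step is to reformulate $d$-majorization as a finite system of linear inequalities: $q \preceq_d p$ is equivalent to $\sum_{j=1}^{i} q_{\sigma^{-1}(j)} \leq y_i^\sigma$ for every $\sigma \in S_{|\Omega|}$ and $1 \leq i \leq |\Omega|$. The forward implication uses $y_i^\sigma = c_p^d(x_i^\sigma) \geq c_q^d(x_i^\sigma) \geq \sum_{j=1}^i q_{\sigma^{-1}(j)}$, where the last bound is that $c_q^d$ dominates every partial-sum polygon; the converse uses that $c_q^d$ is determined by its values at the breakpoints $x_i^\tau$ for $\tau$ the $d$-order of $q$, together with concavity of $c_p^d$. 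Consequently, $\mathcal{C}^{\mathrm{TO}}_d(p)$ is the intersection of the simplex with finitely many half-spaces and is indeed a polytope.

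Next, I would verify that $p^\Pi \in \mathcal{C}^{\mathrm{TO}}_d(p)$ for every $\Pi$. By construction the sampled points $(x_i^\Pi, y_i^\Pi)$ lie on $c_p^d$, and the ratio $p^\Pi_i / d_i$ equals the slope of $c_p^d$ on the interval $[x^\Pi_{\Pi(i)-1}, x^\Pi_{\Pi(i)}]$. Concavity of $c_p^d$ makes these slopes non-increasing in $\Pi(i)$, so $\Pi^{-1}$ serves as a $d$-order for $p^\Pi$ and $c_{p^\Pi}^d$ coincides with the piecewise linear chord interpolating $c_p^d$ through those sampled points; concavity once more puts this chord weakly below $c_p^d$, so $p^\Pi \preceq_d p$.

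The main step is showing that every extreme point of $\mathcal{C}^{\mathrm{TO}}_d(p)$ has the form $p^\Pi$. Let $q^*$ be extreme with $d$-order $\sigma$, and suppose towards a contradiction that the vertex $(x_i^\sigma, \sum_{j=1}^{i} q^*_{\sigma^{-1}(j)})$ of $c_{q^*}^d$ lies strictly below $c_p^d$. I would introduce the perturbation $\delta := e_{\sigma^{-1}(i)} - e_{\sigma^{-1}(i+1)}$, which preserves $\sum_j q_j = 1$: in the generic situation where all coordinates of $q^*$ are positive, the ratios $q^*_j / d_j$ are distinct, and no other active constraint is touched, both $q^* \pm \varepsilon \delta$ stay in $\mathcal{C}^{\mathrm{TO}}_d(p)$ for sufficiently small $\varepsilon > 0$ (only the $i$-th partial sum along $\sigma$ is perturbed, and the assumed slack at $x_i^\sigma$ absorbs it), contradicting extremality of $q^*$. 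Hence every vertex of $c_{q^*}^d$ must touch $c_p^d$: $\sum_{j=1}^{i} q^*_{\sigma^{-1}(j)} = y_i^\sigma$ for all $i$, and telescoping yields $q^*_{\sigma^{-1}(i)} = y_i^\sigma - y_{i-1}^\sigma$, i.e., $q^* = p^\sigma$. The main obstacle is the degenerate case where the $d$-order is ambiguous, some entry of $q^*$ vanishes, or the perturbation direction happens to tighten another inequality; then one must either select a $d$-order $\sigma$ and a pair of indices along which the perturbation is feasible, or approximate $q^*$ from the relative interior of $\mathcal{C}^{\mathrm{TO}}_d(p)$ and pass to a limit. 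Once the extremes are identified with $\{p^\Pi\}_{\Pi \in S_{|\Omega|}}$, any $q \preceq_d p$ admits the claimed convex decomposition.
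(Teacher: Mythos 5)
The paper does not prove this lemma itself --- it imports it verbatim from \cite[Lemma~12]{lostaglio2018elementary} --- so there is no in-paper argument to compare against; I can only assess your proof on its own terms. Your strategy is sound and is essentially the standard one: the condition $q \preceq_d p$ is equivalent to the finite system $\sum_{j \in S} q_j \leq c_p^d\bigl(\sum_{j \in S} d_j\bigr)$ over all subsets $S$ (your reformulation over pairs $(\sigma,i)$ is the same thing), so $\mathcal C_d(p)$ is a compact polytope; your verification that each $p^\Pi$ lies in it (slopes of $c_p^d$ sampled along $\Pi$ are non-increasing by concavity, so $\Pi$ is a valid $d$-order of $p^\Pi$ and its Lorenz curve is an inscribed chord polygon) is correct; and the identification $q^* = p^\sigma$ by telescoping the saturated prefix constraints is correct. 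In modern language you are observing that $\mathcal C_d(p)$ is the base polytope of the submodular function $S \mapsto c_p^d(\sum_{j\in S} d_j)$ and that the $p^\Pi$ are its greedy vertices.

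The one genuine soft spot is the degenerate case, which you flag but do not close, and one of your two proposed fixes does not work as stated. Approximating the extreme point $q^*$ from the relative interior of $\mathcal C_d(p)$ and ``passing to a limit'' cannot identify $q^*$ with some $p^\Pi$: interior points tell you nothing about which vertex you are sitting at. The limit argument that does work perturbs $p$ rather than $q^*$: replace $p$ by some $p_\varepsilon$ with $p \preceq_d p_\varepsilon$, $p_\varepsilon \to p$, and all slopes of $c_{p_\varepsilon}^d$ distinct (raise the interior breakpoints of any maximal run of equal-slope segments slightly, preserving concavity); then $q \in \mathcal C_d(p) \subseteq \mathcal C_d(p_\varepsilon) = \mathrm{conv}\{p_\varepsilon^\Pi\}$ by your generic argument, and one extracts a convergent subsequence of the convex weights to get $q \in \mathrm{conv}\{p^\Pi\}$. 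Alternatively, the tie-breaking route can be completed directly by exploiting that the family of tight subsets at $q^*$ is closed under unions and intersections (submodularity of $S \mapsto c_p^d(\sum_{j\in S}d_j)$ versus modularity of $S \mapsto \sum_{j\in S}q^*_j$), which is exactly what guarantees that a feasible perturbation direction exists whenever the tight sets do not form a complete chain; and note that vanishing coordinates are not a separate worry, since $q_j \geq 0$ is already implied by the constraint for $S = \Omega\setminus\{j\}$ together with normalization and monotonicity of $c_p^d$. With either of these repairs spelled out, your proof is complete.
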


% Note that Lemma \ref{losta lemma} was originally written for Gibbs distributions in \cite{lostaglio2018elementary}. However, this does not introduce any meaningful difference.

We are now in position to characterize the equivalence between TO and ETO resource theories in general. To show this, we use the notation
\begin{equation}
\label{def future}
    \mathcal C_d(p) \coloneqq \{ q \in \mathbb P_\Omega | q \preceq_d p\}
\end{equation}
for any $p,d \in \mathbb P_\Omega$, $0<d$. (Later on, we will use the notation $\mathcal C^{ETO}_d(p)$ and $\mathcal C^{WETO}_d(p)$ for the natural extensions of \eqref{def future}.)

\begin{theo}[Relation TO and ETO resource theories]
\label{polytopes relation}
If $0<d \in \mathbb P_\Omega$,
then the following statements are equivalent:
\begin{enumerate}[label=(\alph*)]
     \item The thermal operations resource theory is equal to the elementary thermal operations 
resource theory
\begin{equation*}
    \text{RT}_{\text{TO}}(d) = \text{RT}_{\text{ETO}}(d).
\end{equation*}
    \item $d^\downarrow = (d_0,\dots,d_0,d_1)$.
    %up to permutations, where $d_0 \geq d_1$.
    %and the number of $d_0$ equal to $|\Omega|-1$.
 \end{enumerate} 
\end{theo}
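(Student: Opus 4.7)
By Lemma \ref{ordered d} we may assume $d = d^\downarrow = (d_0, \dots, d_0, d_1)$ with $d_0 \geq d_1 > 0$. The implication $(a) \Rightarrow (b)$ is exactly the content of Proposition \ref{difference to eto}, so my plan concentrates on $(b) \Rightarrow (a)$. Since the inclusion $\text{RT}_{\text{ETO}}(d) \subseteq \text{RT}_{\text{TO}}(d)$ is immediate from the definitions, the task reduces to proving that whenever $q \preceq_d p$ there is a convex combination of products of $d$-swaps that maps $p$ to $q$.

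The key reduction is Lemma \ref{losta lemma}, which writes $q$ as a convex combination of the extremes $\{p^\Pi\}_{\Pi \in S_{|\Omega|}}$. Because ETO is closed under convex combinations of sequences of $d$-swaps, it is enough to show that every such $p^\Pi$ can be reached from $p$ by a single product of $d$-swaps. The special form $d = (d_0, \dots, d_0, d_1)$ makes each extreme highly transparent: setting $\sigma = (\Pi^d_p)^{-1}$, $k^* = \sigma^{-1}(|\Omega|)$ (the position of the $d_1$-entry in the $d$-order), and $\ell = \Pi(|\Omega|)$ (the position of the $d_1$-width in the reading sequence defining $p^\Pi$), a direct computation with the majorization curve $c_p^d$ shows that the $y$-increments $\Delta_j = y^\Pi_j - y^\Pi_{j-1}$ satisfy $\Delta_j = p_{\sigma(j)}$ outside a \emph{mixing interval} between $k^*$ and $\ell$, and take the explicit form $\gamma\, p_{\sigma(j)} + (1-\gamma)\, p_{\sigma(j+1)}$ (with $\gamma = d_1/d_0$) inside it, with boundary values $\Delta_{k^*} = p_{|\Omega|} + (1-\gamma)\, p_{\sigma(k^*+1)}$ and $\Delta_\ell = \gamma\, p_{\sigma(\ell)}$ when $\ell > k^*$ (and symmetric formulas when $\ell < k^*$).

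These formulas lead directly to an explicit construction. When $\ell = k^*$ the extreme $p^\Pi$ is merely a permutation of $p$ that fixes the special entry and is produced by composing transpositions $P^d(i,j)$ with $i, j < |\Omega|$. When $\ell > k^*$ I apply to $p$ the sequence
\[
P^d(\sigma(\ell), |\Omega|) \circ P^d(\sigma(\ell-1), |\Omega|) \circ \cdots \circ P^d(\sigma(k^*+1), |\Omega|);
\]
a one-line induction on the action $(p_i, p_{|\Omega|}) \mapsto ((1-\gamma) p_i + p_{|\Omega|},\, \gamma\, p_i)$ shows that after all $\ell - k^*$ swaps the value $\Delta_j$ occupies position $\sigma(j+1)$ for each $k^* \leq j < \ell$, the value $\Delta_\ell$ occupies position $|\Omega|$, and every other $\Delta_j$ sits at its original position $\sigma(j)$. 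A single permutation of $\{1, \dots, |\Omega|-1\}$, realizable as a product of $d$-swap transpositions, then moves each $\Delta_j$ to its prescribed target $\Pi^{-1}(j)$. The case $\ell < k^*$ is handled symmetrically by chaining $P^d(\sigma(k^*-1), |\Omega|), \dots, P^d(\sigma(\ell), |\Omega|)$.

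The main obstacle will be the final bookkeeping: I must verify that the positions holding the $\Delta_j$'s after the mixing chain and the target positions prescribed by $\Pi$ each form exactly the set $\{1, \dots, |\Omega|-1\}$, so that the closing permutation fixes $|\Omega|$ and is therefore realizable by transpositions that are $d$-swaps. This reduces to the identities $\sigma(k^*) = |\Omega|$ and $\Pi^{-1}(\ell) = |\Omega|$, after which the composition of the mixing swaps and the closing transpositions is a product of $d$-swaps carrying $p$ to $p^\Pi$, and the theorem follows.
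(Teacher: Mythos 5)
Your proposal is correct and follows essentially the same route as the paper: reduce to the extremes $p^\Pi$ via Lemma \ref{losta lemma}, split into the three cases determined by the relative position of the $d_1$-level in the $d$-orders of $p$ and $p^\Pi$ (your $k^*$ and $\ell$ are the paper's $m$ and $s$), and realize each extreme by the same chain of $d$-swaps against the last level followed by a permutation of the first $|\Omega|-1$ levels. The only difference is cosmetic: the paper pre-sorts the first $|\Omega|-1$ components of $p$ to absorb your closing permutation into a w.l.o.g., whereas you track $\sigma$ explicitly; the final bookkeeping you flag indeed reduces to $\Pi^{-1}(\ell)=|\Omega|$ and goes through.
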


\begin{comment}
\begin{theo}[Relation TO and ETO resource theories]
\label{polytopes relation}
If $d \in \mathbb P_\Omega$,
then the resource theory of elementary thermal operations is equal to the resource theory of thermal operations if and only if $d = (d_0,\dots,d_0,d_1)$ up to permutations. 
\end{theo}
\end{comment}

\begin{proof}
By Lemma \ref{ordered d}, it suffices to assume that $d=d^\downarrow$ throughout the proof. For simplicity, we fix $n = |\Omega|$ and $\gamma = d_1/d_0$ throughout the proof.

Since sufficiency follows from Proposition \ref{difference to eto}, we only ought to prove necessity.
In order to do so, we will show that, for any pair $p,q \in \mathbb P_\Omega$, $q$ can be obtained from $p$ by elementary thermal operations provided the same holds for thermal operations. Instead of showing it directly, we can profit from Lemma \ref{losta lemma}, which provides
%where it was shown that there exists
a finite set of distributions $\{p^\Pi | \Pi \in S_n\}$ such that, if $q$ can be obtained from $p$ by thermal operations, then $q$ can be decomposed as a convex combination of distributions in that set (which we denote by \emph{conv}),
\begin{equation}
    \mathcal C_d(p) = \text{conv} \{p^\Pi | \Pi \in S_n\}.
\end{equation}
Thus, it suffices to show that each distribution in the set is attainable by performing a sequence of elementary thermal operations on $p$ to conclude the proof.

Before we start the argument, note that we can assume w.l.o.g. that $p_i \geq p_{i+1}$ for $i=1,\dots,n-2$. (This is the case since $p$ and such a rearrangement of $p$ are equivalent in the $\preceq_d$ sense, and we can go from one to the other using permutations on the first $n-1$ components, which belong to the elementary thermal operations for $d=(d_0,\dots,d_0,d_1)$.)

To conclude the proof,
we take $m= \Pi^d_p(n)$ (in case of ambiguity, take the highest value possible)
%some $m$, $1 \leq m \leq n$, such that $p_i/d_0 \geq p_n/d_1$ for all $i<m$ and $p_i/d_0 < p_n/d_1$ for all $i>m$.
and
we show that, for any permutation $\Pi \in S_n$, $p^\Pi$ is attainable from $p$ by elementary thermal operations. Fix, hence, such a permutation $\Pi_0 \in S_n$,
%: \{1,\dots,n\} \to \{1,\dots,n\}$
define $s = \Pi_0(n)$, and note that the relation between $s$ and $m$ determines the entries of $p^{\Pi_0}$. We distinguish, in particular, three cases:
   \begin{enumerate}[label=(\Alph*)]
    \item $m=s$. In this case, we have that 
    \begin{equation*}
    y_i^{\Pi_0} = \begin{cases}
    \sum_{j=1}^{i} p_j &\text {if } 1 \leq i<m ,\\
    p_n+\sum_{j=1}^{i-1} p_j &\text {if } m \leq i \leq n.
    \end{cases}
\end{equation*}
By definition of $s$, we have that $p^{\Pi_0}_n=y_s^{\Pi_0}-y_{s-1}^{\Pi_0}=p_n$. The rest of the components of $p^{\Pi_0}$ are $p_i$ for some $1 \leq i <n$ by definition. The order in which these components are presented is not important, since we can attain any arrangement from another one by elementary thermal operations. Hence, we can assume w.l.o.g. that $p^{\Pi_0}=p$ and, thus, $p^{\Pi_0}$ is attainable from $p$ via elementary thermal operations.

    \item $m<s$. In this case, we have that 
    \begin{equation*}
    y_i^{\Pi_0} = \begin{cases}
    \sum_{j=1}^{i} p_j &\text {if } 1 \leq i<m ,\\
    p_n+ (1-\gamma) p_i+\sum_{j=1}^{i-1} p_j &\text {if } m \leq i < s,\\
    p_n+\sum_{j=1}^{i-1} p_j &\text {if } s \leq i \leq n.
    \end{cases}
\end{equation*}
By definition of $s$, we have that $p^{\Pi_0}_n=y_s^{\Pi_0}-y_{s-1}^{\Pi_0}=\gamma p_{s-1}$. Following the argument for the case $m=s$, we can assume w.l.o.g. that
\begin{equation*}
\begin{split}
    p^{\Pi_0} = ( &p_1,\dots,p_{m-1},(1-\gamma)p_{m}+ p_n,(1-\gamma)p_{m+1}+ \gamma p_m, \dots,\\
    &(1-\gamma)p_{s-1}+ \gamma p_{s-2},p_s,\dots,p_{n-1}, \gamma p_{s-1}).
    \end{split}
\end{equation*}
We conclude the proof of this case noticing that 
\begin{equation*}
    p^{\Pi_0} = \left( \prod_{k=m}^{s-1} P^d(k,n) \right) p.
\end{equation*}

    \item $s<m$. In this case, we have that 
    \begin{equation*}
    y_i^{\Pi_0} = \begin{cases}
    \sum_{j=1}^{i} p_j &\text {if } 1 \leq i<s ,\\
    \gamma p_i+\sum_{j=1}^{i-1} p_j &\text {if } s \leq i < m,\\
    p_n+\sum_{j=1}^{i-1} p_j &\text {if } m \leq i \leq n.
    \end{cases}
    \end{equation*}
    By definition of $s$, we have that $p^{\Pi_0}_n=y_s^{\Pi_0}-y_{s-1}^{\Pi_0}= \gamma p_s$. Following the argument for the case $m=s$, we can assume w.l.o.g. that
\begin{equation*}
\begin{split}
    p^{\Pi_0} = ( &p_1,\dots,p_{s-1},(1-\gamma)p_{s}+ \gamma p_{s+1}, \dots,(1-\gamma)p_{m-2}+ \gamma p_{m-1},\\
    &p_n + (1-\gamma) p_{m-1},p_m,\dots,p_{n-1}, \gamma p_s).
    \end{split}
\end{equation*}
We conclude the proof of this case noticing that 
\begin{equation*}
    p^{\Pi_0} = \left( \prod_{k=1}^{m-s} P^d(m-k,n) \right) p.
\end{equation*}
    \end{enumerate}
    This concludes the proof.
\end{proof}

\begin{rem}
As a result of Theorem \ref{polytopes relation}, experimental protocols conditioned by a random variable are weakly universal if and only if $d^\downarrow = (d_0,\dots,d_0,d_1)$.
Furthermore, if $p,q \in \mathbb P_\Omega$ and $q \preceq_d p$, then Theorem \ref{polytopes relation} together with linear programming (which allows us to find the weights decomposing a point that belongs to the convex hull of some finite set \cite{boyd2004convex}) gives us an algorithm to achieve $q$ via (strong) elementary thermal operations on $p$ provided $d^\downarrow =(d_0,\dots,d_0,d_1)$. Up to permutations in $d$, the algorithm can be summarized as follows:
    \begin{enumerate}[label=(\alph*)]
        \item Input $0<d =(d_0,\dots,d_0,d_1) \in \mathbb P_\Omega$ and $p,q \in \mathbb P_\Omega$ such that $q \preceq_d p$.
        \item Calculate the decomposition of $q$ in terms of the extremes of $\mathcal C_d (p)$ using linear programming.
        \item Calculate $m$ and $s$ as in the proof of Theorem \ref{polytopes relation} for each extreme obtained in (b) and use them, following again Theorem \ref{polytopes relation}, to obtain a sequence of $d$-swaps that (up to permutations) yield the extreme when applied to $p$.    
        \item Find the sequence of swaps that are lacking in (c) in order to achieve each extreme via $d$-swaps. 
        \item Output a convex combination of products of $d$-swaps that yield $q$ when applied to $p$. 
    \end{enumerate}
\end{rem} 

 Theorem \ref{polytopes relation} illustrates that the general bounds on the number of $d$-swaps required to reach the extreme distributions in the ETO resource theory can be substantially improved whenever $d^\downarrow=(d_0,\dots,d_0,d_1)$. In particular, we have shown that any extreme of $\mathcal C_d(p)$ can be obtained using $|\Omega|$ $d$-swaps. This contrasts with the known general bounds, which escalate like $|\Omega|!$ \cite[Theorem 6]{lostaglio2018elementary}. Moreover, it should be noted that the decompositions in Theorem \ref{polytopes relation} coincide with those in Proposition \ref{eto vs to} when $|\Omega|=3$. This is the case since we have that $P^d(1,3) P^d(1,2) = P^d(1,2) P^d(2,3)$ and, hence, we can locate the sequence of swaps at the start.

\subsection{Weak universality of weak elementary thermal operations}
\label{weak uni WETO}

In order to study the relation between the the thermal and weak elementary thermal resource theories, we start again by characterizing the low-dimensional case, where we can calculate everything explicitly. 

\begin{prop}
    [Equivalence TO and WETO resource theories for $|\Omega| = 3$]
\label{Td transfrom low dim}
    If $0<d \in \mathbb P_\Omega$ and $|\Omega| = 3$, then the following statements are equivalent:
 \begin{enumerate}[label=(\alph*)]
     \item The thermal operations resource theory is equal to the weak elementary thermal operations 
resource theory
\begin{equation*}
    \text{RT}_{\text{TO}}(d) = \text{RT}_{\text{WETO}}(d).
\end{equation*}
    \item $d^\downarrow = (d_0,d_0,d_1)$.
    %up to permutations, where $d_0 \geq d_1$.
    %and the number of $d_0$ equal to $|\Omega|-1$.
 \end{enumerate}
\end{prop}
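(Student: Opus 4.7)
The plan is to prove each implication separately, reusing the corresponding results already established for ETO. For (a) $\Rightarrow$ (b), I invoke the general chain of inclusions
\begin{equation*}
\text{RT}_{\text{WETO}}(d) \subseteq \text{RT}_{\text{ETO}}(d) \subseteq \text{RT}_{\text{TO}}(d),
\end{equation*}
valid for every $0 < d \in \mathbb{P}_\Omega$ and noted in Section \ref{to eto intro}. If (a) holds, the two outer terms coincide and force $\text{RT}_{\text{TO}}(d) = \text{RT}_{\text{ETO}}(d)$. Applying the $|\Omega|=3$ specialization of Theorem \ref{polytopes relation} then yields $d^\downarrow = (d_0, d_0, d_1)$, which is (b).

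For (b) $\Rightarrow$ (a), I first use Lemma \ref{ordered d} to reduce to the case $d = d^\downarrow = (d_0, d_0, d_1)$, and set $\gamma := d_1/d_0$. Given any pair $p, q \in \mathbb{P}_\Omega$ with $q \preceq_d p$, the task is to construct a sequence of $T^d$-transforms that sends $p$ to $q$, with the added difficulty (compared to the ETO setting) that convex combinations of sequences are forbidden. Since $d_1 = d_2 = d_0$, the $d$-swap $P^d(1,2)$ is the ordinary transposition of the first two entries, so a single application of $T^d_1(1,2)$ lets me assume without loss of generality that $p_1 \geq p_2$. The proposal is then to search for a sequence of the fixed form
\begin{equation*}
T^d_{\nu}(1,2) \, T^d_{\mu}(2,3) \, T^d_{\lambda}(1,3) \, p \;=\; q,
\end{equation*}
whose left-hand side is an explicit polynomial in the parameters $\lambda, \mu, \nu \in [0,1]$ and $p, \gamma$, and to solve for these parameters.

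To show that this three-parameter family covers all of $\mathcal{C}_d(p)$, I case-split according to the $d$-order of $q$ (equivalently, which of the extreme distributions $p^\Pi$ from Lemma \ref{losta lemma} is the closest to $q$). Each extreme $p^\Pi$ is reachable by a short product of $d$-swaps by the argument of Theorem \ref{polytopes relation}, so the new content is ensuring that the convex combinations interior to $\mathcal{C}_d(p)$ are likewise realizable by a single sequence. Boundary configurations, where $q$ lies on a face of $\mathcal{C}_d(p)$ or several extremes coincide, correspond to one or more of the $T^d$-transforms degenerating to $\mathbb{I}$ or to a pure $d$-swap, and must be handled separately.

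The main obstacle is the last algebraic verification: showing that in every case the linear system determining $(\lambda, \mu, \nu)$ admits a solution in $[0,1]^3$ precisely when $q \preceq_d p$. This is where the special structure $d = (d_0, d_0, d_1)$ is essential. The equality $d_1 = d_2$ provides a free transposition $P^d(1,2)$ that handles relabeling at no cost, while the two genuine $\beta$-swaps $P^d(1,3)$ and $P^d(2,3)$ share a common ratio $\gamma$, producing a linear system with enough degeneracy to be solvable under $d$-majorization. For a more generic $d$ the analogous system is over-determined, consistent with the failure of the equivalence proved in Proposition \ref{difference to eto}.
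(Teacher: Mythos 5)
Your direction (a) $\Rightarrow$ (b) is fine and matches the paper's logic: since $\text{RT}_{\text{WETO}}(d) \subseteq \text{RT}_{\text{ETO}}(d) \subseteq \text{RT}_{\text{TO}}(d)$, equality of the outer two forces $\text{RT}_{\text{TO}}(d) = \text{RT}_{\text{ETO}}(d)$, and Proposition \ref{difference to eto} (or Theorem \ref{polytopes relation}) gives $d^\downarrow=(d_0,d_0,d_1)$. The problem is in (b) $\Rightarrow$ (a): the single fixed-order ansatz $T^d_{\nu}(1,2)\,T^d_{\mu}(2,3)\,T^d_{\lambda}(1,3)\,p$ does not cover all of $\mathcal C_d(p)$, so the "main obstacle" you defer is not merely an algebraic verification left to do --- it genuinely fails. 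Concretely, take $\gamma=1/2$, i.e.\ $d=(0.4,0.4,0.2)$, and $p=(0.6,0.3,0.1)$ (so $\gamma a > \gamma b > c$, the paper's case (A)). The extreme point $q=P^d(1,3)P^d(2,3)p=(0.45,0.25,0.3)$ satisfies $q \preceq_d p$ but is unreachable in your family: its third component equals $\gamma a = 0.3$, the maximum attainable third component, and since $T^d_{\nu}(1,2)$ does not touch it, you are forced to $\mu=0$ and $\lambda=1$; but then $T^d_{\nu}(1,2)P^d(1,3)p = T^d_{\nu}(1,2)(0.4,0.3,0.3)$ has first component in $[0.3,0.4]$, which never reaches $0.45$. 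The point is that the operation on levels $(2,3)$ must be applied \emph{before} the one on $(1,3)$ (and as a full swap) to reach this vertex; a fixed ordering of the three $T^d$-transforms is insufficient, whatever the parameters.

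This is exactly why the paper's proof of the (b) $\Rightarrow$ (a) direction splits $\mathcal C_d(p)$ into two quadrilaterals (separated by the dashed line in Figures \ref{covex hull 1}--\ref{covex hull 3}) and uses a \emph{different} sequence on each region, with the sequence for the "far" region beginning with a mandatory full $d$-swap (e.g.\ $T^d(1,2)T^d(1,3)P^d(2,3)$ in case (A)); the case split itself depends on the ordering of $\gamma a$, $\gamma b$, $c$. To repair your argument you would need to replace the single three-parameter family by such a region-dependent choice of orderings, at which point you essentially recover the paper's proof. Your closing heuristic about the system being "over-determined for generic $d$" is also not what drives Proposition \ref{difference to eto} --- the obstruction there is support monotonicity and the asymmetry of $d$-swaps, not a parameter count.
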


\begin{comment}
\begin{prop}[Equivalence TO and WETO resource theories for $|\Omega| = 3$]
\label{Td transfrom low dim}
    If $|\Omega| = 3$ and $d,p,q \in \mathbb P_\Omega$, then the following statements are equivalent:
 \begin{enumerate}[label=(\alph*)]
     \item There exists a $d$-stochastic matrix $M$ such that $q=M p$ if and only if there exists a finite sequence of $T^d$-transforms $(T^d_i)_{i=1}^k$ such that 
    \begin{equation*}
        q = T^d_k \cdots T^d_1 p.
    \end{equation*}
    \item $d = (d_0,d_0,d_1)$ up to permutations.
    %and the number of $d_0$ equal to $|\Omega|-1$.
 \end{enumerate}
\end{prop}
\end{comment}

\begin{proof}
By Lemma \ref{ordered d}, it suffices to assume that $d=d^\downarrow$ throughout the proof.

The fact that $(a)$ implies $(b)$ holds as a direct consequence of Proposition \ref{difference to eto} for any finite $\Omega$. We show this by contrapositive: If the implication was false, then we could reach any distribution achievable by TO using finite sequences of $T^d$-transforms for some $d \neq (d_0,\dots,d_0,d_1)$. However, this would imply that any process achievable by TO is also achievable by ETO for such $d$, which contradicts Proposition \ref{difference to eto}.

   We show now that $(b)$ implies $(a)$. We take $\gamma = d_1/d_0$ and $p=(a,b,c)$, noting that we can assume $a \geq b$ w.l.o.g. (otherwise, we simply apply $P^d(1,2)$ to $p$ first and then follow the argument below), and $q \in \mathbb P_\Omega$ such that $q \preceq_d p$. Since $q \preceq_d p$ and $d = (d_0,d_0,d_1)$ with $0<d_1 \leq d_0$, we know from Proposition \ref{eto vs to} that $q$ is contained in the convex hull of $\{A_0 p, \dots, A_9 p\}$,
   \begin{equation*}
       q \in \mathcal C_d (p) = \text{conv} \{A_0 p, \dots, A_9 p\},
   \end{equation*}
    where $A_0 = \mathbb I$ and the others are defined as in Proposition \ref{eto vs to}. To conclude the proof, we will give a sequence of $T^d$-transforms that, when applied to $p$, yield $q$. We consider three cases (the cases where some equality holds follow easily from these):
   \begin{enumerate}[label=(\Alph*)]
\item $\gamma a > \gamma b > c$. In this case, $\mathcal C_d (p)$ is (roughly) given by Figure \ref{covex hull 1},
with $q$ being achievable by a sequence $T^d(1,2)T^d(2,3)$ if it lies below the dashed line and by $T^d(1,2)T^d(1,3)P^d(2,3)$ if it lies above.
%with $q$ being achievable by a sequence $T^d(1,3)T^d(2,3)$ if it lies on the left of the dashed line and by $T^d(2,3)T^d(1,3)T^d(1,2)$ if it lies on the right.
\end{enumerate}

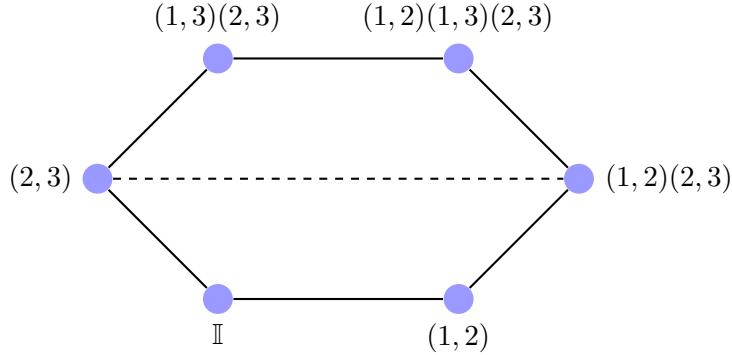
\begin{figure}[ht]
\centering
\begin{tikzpicture}[scale=0.8]
\node[small dot new,label={[yshift=0.01cm, thick, font=\fontsize{11}{11}\selectfont, thick]270:$\mathbb I$}] at (-2,0) (1) {};
\node[small dot new,label={[yshift=0.01cm, thick, font=\fontsize{11}{11}\selectfont, thick]270:$(1,2)$}] at (2,0) (2) {};
\node[small dot new,label={[xshift=0.01cm, thick, font=\fontsize{11}{11}\selectfont, thick]180:$(2,3)$}] at (-4,2) (3) {};
\node[small dot new,label={[xshift=0.01cm, thick, font=\fontsize{11}{11}\selectfont, thick]0:$(1,2) (2,3)$}] at (4,2) (4) {};
\node[small dot new,label={[yshift=0.01cm, thick, font=\fontsize{11}{11}\selectfont, thick]90:$(1,3) (2,3)$}] at (-2,4) (5) {};
\node[small dot new,label={[yshift=0.01cm, thick, font=\fontsize{11}{11}\selectfont, thick]90:$(1,2) (1,3) (2,3)$}] at (2,4) (6) {};
%\node[small dot new,label={[anchor=north,above=1mm, thick, font=\fontsize{11}{11}\selectfont, thick]90:$A_2 p$}] at (0,4) (7) {};
\path[draw,thick,-]
    (1) edge node {} (2)
    (1) edge node {} (3)
    (2) edge node {} (4)
    (3) edge node {} (5)
    (4) edge node {} (6)
    (5) edge node {} (6)
    ;
    \path[draw,dashed,thick,-]
    (3) edge node {} (4)
    ;
\end{tikzpicture}
\caption{Rough representation of the set $\mathcal C_d(p)$
%$\{q \in \mathbb P_\Omega | q \preceq_d p\}$
for $d=(d_0,d_0,d_1)$ with $0<d_1 \leq d_0$ and $p=(a,b,c)$ with $\gamma a \geq \gamma b \geq c$. Note that we label the vertices by the (ordered) sequence of elementary thermal operations that we apply to $p$ to achieve them and that we substitute $P^d(i,j)$ by $(i,j)$.}
\label{covex hull 1}
\end{figure}

\begin{enumerate}[label=(\Alph*)]
\setcounter{enumi}{1}
\item $\gamma a > c >  \gamma b$. In this case, $\mathcal C_d (p)$ is (roughly) given by Figure \ref{covex hull 2}, with $q$ being achievable by a sequence $T^d(1,2)T^d(1,3)$ if it lies above the dashed line and by $T^d(1,2)T^d(2,3)$ if it lies below.
\end{enumerate}

\begin{figure}[ht]
\centering
\begin{tikzpicture}[scale=0.8]
\node[small dot new,label={[yshift=0.01cm, thick, font=\fontsize{11}{11}\selectfont, thick]270:$(2,3)$}] at (-2,0) (1) {};
\node[small dot new,label={[yshift=0.01cm, thick, font=\fontsize{11}{11}\selectfont, thick]270:$(1,2) (2,3)$}] at (2,0) (2) {};
\node[small dot new,label={[xshift=0.01cm, thick, font=\fontsize{11}{11}\selectfont, thick]180:$\mathbb I$}] at (-4,2) (3) {};
\node[small dot new,label={[xshift=0.01cm, thick, font=\fontsize{11}{11}\selectfont, thick]0:$(1,2)$}] at (4,2) (4) {};
\node[small dot new,label={[yshift=0.01cm, thick, font=\fontsize{11}{11}\selectfont, thick]90:$(1,3)$}] at (-2,4) (5) {};
\node[small dot new,label={[yshift=0.01cm, thick, font=\fontsize{11}{11}\selectfont, thick]90:$(1,2) (1,3)$}] at (2,4) (6) {};
\path[draw,thick,-]
    (1) edge node {} (2)
    (1) edge node {} (3)
    (2) edge node {} (4)
    (3) edge node {} (5)
    (4) edge node {} (6)
    (5) edge node {} (6)
    ;
     \path[draw,dashed,thick,-]
    (3) edge node {} (4)
    ;
\end{tikzpicture}
\caption{Rough representation of the set $\mathcal C_d(p)$
%$\{q \in \mathbb P_\Omega | q \preceq_d p\}$
for $d=(d_0,d_0,d_1)$ with $0<d_1 \leq d_0$ and $p=(a,b,c)$ with $\gamma a \geq c \geq  \gamma b$. Note that we use the notation in Figure \ref{covex hull 1}.}
\label{covex hull 2}
\end{figure}
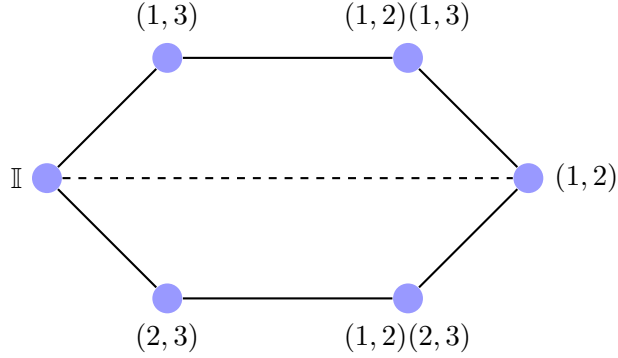

\begin{enumerate}[label=(\Alph*)]
\setcounter{enumi}{2}
\item $c > \gamma a > \gamma b$. In this case, $\mathcal C_d (p)$ is (roughly) given by Figure \ref{covex hull 3},
with $q$ being achievable by a sequence $T^d(1,2)T^d(1,3)$ if it lies above the dashed line and by $T^d(1,2)T^d(2,3)P^d(1,3)$ if it lies below.
   \end{enumerate}

   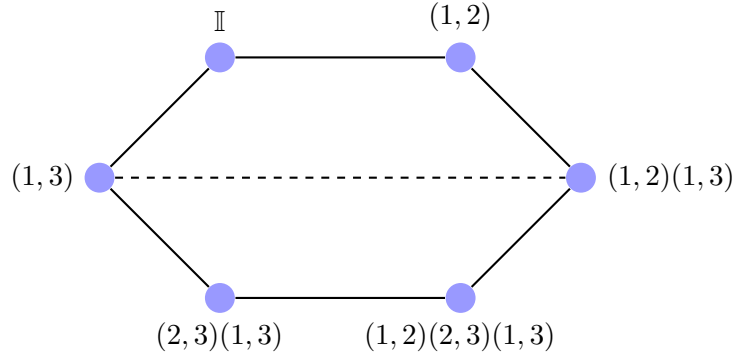
\begin{figure}[ht]
\centering
\begin{tikzpicture}[scale=0.8]
\node[small dot new,label={[yshift=0.01cm, thick, font=\fontsize{11}{11}\selectfont, thick]270:$(2,3) (1,3)$}] at (-2,0) (1) {};
\node[small dot new,label={[yshift=0.01cm, thick, font=\fontsize{11}{11}\selectfont, thick]270:$(1,2) (2,3) (1,3)$}] at (2,0) (2) {};
\node[small dot new,label={[xshift=0.01cm, thick, font=\fontsize{11}{11}\selectfont, thick]180:$(1,3)$}] at (-4,2) (3) {};
\node[small dot new,label={[xshift=0.01cm, thick, font=\fontsize{11}{11}\selectfont, thick]0:$(1,2) (1,3)$}] at (4,2) (4) {};
\node[small dot new,label={[yshift=0.01cm, thick, font=\fontsize{11}{11}\selectfont, thick]90:$\mathbb I$}] at (-2,4) (5) {};
\node[small dot new,label={[yshift=0.01cm, thick, font=\fontsize{11}{11}\selectfont, thick]90:$(1,2)$}] at (2,4) (6) {};
\path[draw,thick,-]
    (1) edge node {} (2)
    (1) edge node {} (3)
    (2) edge node {} (4)
    (3) edge node {} (5)
    (4) edge node {} (6)
    (5) edge node {} (6)
    ;
     \path[draw,dashed,thick,-]
    (3) edge node {} (4)
    ;
\end{tikzpicture}
\caption{Rough representation of the set $\mathcal C_d(p)$
%$\{q \in \mathbb P_\Omega | q \preceq_d p\}$
for $d=(d_0,d_0,d_1)$ with $0<d_1 \leq d_0$ and $p=(a,b,c)$ with $c \geq \gamma a \geq \gamma b$. Note that we use the notation in Figure \ref{covex hull 1}.}
\label{covex hull 3}
\end{figure}
 This concludes the proof.  
%To conclude, note that $(ii)$ follows exactly like $(a)$ from $(b)$.
\end{proof}

\begin{rem}
    As a result of Proposition \ref{Td transfrom low dim}, if $|\Omega|=3$, elementary thermal operations with deterministic protocols are weakly universal if and only if $d^\downarrow = (d_0,d_0,d_1)$.
\end{rem}

Note that we deal with resource theories in Proposition \ref{Td transfrom low dim} instead of polytopes as in Proposition \ref{eto vs to}. The reason for this will become clear later on.

The characterization in Proposition \ref{Td transfrom low dim} holds actually in general, as the following theorem shows.

\begin{theo}[Equivalence TO and WETO resource theories]
\label{wETO equiv}
    If $0<d \in \mathbb P_\Omega$,
then the following statements are equivalent:
 \begin{enumerate}[label=(\alph*)]
     \item The thermal operations resource theory is equal to the weak elementary thermal operations 
resource theory
\begin{equation*}
    \text{RT}_{\text{TO}}(d) = \text{RT}_{\text{WETO}}(d).
\end{equation*}
    \item $d^\downarrow = (d_0,\dots,d_0,d_1)$.
    %up to permutations, where $d_0 \geq d_1$.
    \end{enumerate}
\end{theo}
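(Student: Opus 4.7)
The direction $(a) \Rightarrow (b)$ I would settle immediately: since $\text{RT}_{\text{WETO}}(d) \subseteq \text{RT}_{\text{ETO}}(d) \subseteq \text{RT}_{\text{TO}}(d)$, an equality between the outer resource theories forces an equality between $\text{RT}_{\text{TO}}(d)$ and $\text{RT}_{\text{ETO}}(d)$, so Proposition \ref{difference to eto} forces $d^\downarrow = (d_0,\dots,d_0,d_1)$.

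For $(b) \Rightarrow (a)$, I would first use Lemma \ref{ordered d} to reduce to $d = (d_0,\dots,d_0,d_1)$ with $\gamma := d_1/d_0 \in (0,1]$, and then proceed by induction on $n := |\Omega|$; the base case $n = 2$ is supplied by Lemma \ref{equiv dim 2}. The structural fact powering the inductive step is that, for this particular $d$, every $T^d$-transform $T^d_\lambda(i,j)$ with $i<j<n$ satisfies $\gamma_{ij}=1$ and hence reduces to $(1-\lambda)\mathbb{I} + \lambda\,\pi_{ij}$ for the transposition $\pi_{ij}$; restricted to the first $n-1$ coordinates (the $n$-th being untouched) these are precisely the $T^u$-transforms of the uniform distribution on $\{1,\dots,n-1\}$, so by Theorem \ref{equi to weto uniform} a finite sequence of them can realise any TO-admissible transformation of the first $n-1$ components while preserving the $n$-th.

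Given $q \preceq_d p$, the inductive step would thus reduce to the following sub-claim: a finite sequence of operations of the form $T^d_\lambda(i,n)$, interleaved with transpositions $\sigma \in S_{n-1}$ (themselves $T^d$-transforms with $\lambda=1$), produces an intermediate $r$ with $r_n = q_n$ and $q \preceq_d r$. Granted this, $(q_1,\dots,q_{n-1})$ and $(r_1,\dots,r_{n-1})$ share the common sum $1-q_n$, the last segments of $c_r^d$ and $c_q^d$ coincide (both running from $(1-d_1,1-q_n)$ to $(1,1)$ with slope $q_n/d_1$), and $q \preceq_d r$ restricts, after rescaling, to a uniform majorization of the normalised first $n-1$ components, which Theorem \ref{equi to weto uniform} lets me realise by a sequence of $T^u$-transforms, i.e.\ the $T^d$-transforms with indices $<n$ identified above.

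The hard part will be the sub-claim. A single $T^d_\lambda(i,n)$ shifts the $n$-th coordinate along the segment $r_n = \lambda \gamma\,(\sigma p)_i + (1-\lambda)\, p_n$, sweeping the interval with endpoints $p_n$ and $\gamma (\sigma p)_i$ as $\lambda$ varies in $[0,1]$; composing with an arbitrary transposition in the first $n-1$ coordinates enlarges the one-step reachable set for $r_n$ to $[\gamma \min_{k<n} p_k,\, \max(p_n, \gamma \max_{k<n} p_k)]$, and iteration (with further transpositions between steps) should exhaust all values consistent with $r \preceq_d p$. I anticipate a case analysis on the position of $q_n$ relative to $p_n$ and the values $\gamma p_k$, paralleling the three cases ($m=s$, $m<s$, $s<m$) in the proof of Theorem \ref{polytopes relation} and the regional decomposition of Proposition \ref{Td transfrom low dim} illustrated in Figures \ref{covex hull 1}--\ref{covex hull 3}; the main technicality at each stage will be to verify that the intermediate $r$ still satisfies $q \preceq_d r$ by comparing the first $n-1$ segments of $c_r^d$ and $c_q^d$.
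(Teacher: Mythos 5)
Your direction $(a)\Rightarrow(b)$ is fine and matches the paper, and your overall architecture for $(b)\Rightarrow(a)$ --- fix the last coordinate first, then finish on the first $n-1$ levels using the uniform-case Theorem \ref{equi to weto uniform}, since $T^d_\lambda(i,j)$ with $i<j<n$ are exactly $T^u$-transforms there --- is a legitimate and genuinely different route from the paper's. But the proof has a real gap: the sub-claim that a finite sequence of $T^d_\lambda(i,n)$'s (interleaved with permutations of the first $n-1$ levels) produces $r$ with $r_n=q_n$ \emph{and} $q\preceq_d r$ is precisely the hard content of the theorem, and you only assert it ("should exhaust all values", "I anticipate a case analysis"). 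Note that hitting $r_n=q_n$ is not the issue; the difficulty is doing so without destroying $q\preceq_d r$, which constrains the entire vector $r$. The paper resolves exactly this by inducting on $h_0(p,q)=|\Pi^d_p(n)-\Pi^d_q(n)|$: at each step it applies $T^d_\lambda(i_0,n)$ to the component $i_0$ adjacent to $n$ in the $d$-order, with $\lambda$ chosen to satisfy two explicit inequalities --- one forcing the position of level $n$ in the $d$-order to move by exactly one, the other keeping the Lorenz $d$-curve of $p'$ above that of $q$ --- and verifies both right-hand sides are at most $1$ using $q\preceq_d p$. Without an argument of this kind your sketch is not a proof.

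A second, smaller problem: your justification for the reduction to uniform majorization is wrong as stated. You claim the last segments of $c_r^d$ and $c_q^d$ both run from $(1-d_1,1-q_n)$ to $(1,1)$; this presupposes that level $n$ sits last in both $d$-orders, which fails whenever $q_n/d_1$ exceeds some $q_i/d_0$. The conclusion you want --- that $r_n=q_n$ and $q\preceq_d r$ imply $(q_1,\dots,q_{n-1})\preceq(r_1,\dots,r_{n-1})$ uniformly --- is in fact true, but it needs a four-case comparison of the curves at the breakpoints $kd_0$ versus $kd_0+d_1$, depending on whether level $n$ has already been inserted into the top-$k$ block of $r$'s order, of $q$'s order, both, or neither; the mixed cases close only because a component preceding (resp. following) level $n$ in the $d$-order satisfies $\gamma q_i^\downarrow\geq q_n$ (resp. $\leq q_n$). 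For comparison, the paper avoids this reduction entirely: its Lemma \ref{same d-order} finishes the $h_0=0$ case by a Hardy--Littlewood--P\'olya-style induction on the number of differing entries, treating the $(d_0,d_0)$, $(d_0,d_1)$ and $(d_1,d_0)$ pairings separately. If you supply the missing sub-claim and the corrected curve comparison, your route would give a slightly cleaner endgame by outsourcing the final stage to Muirhead's theorem, at the price of front-loading all the difficulty into the single "fix $r_n$" step.
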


\begin{comment}
\begin{theo}[Equivalence TO and WETO resource theories]
\label{wETO equiv}
    If $d,p,q \in \mathbb P_\Omega$,
then the following statements are equivalent:
 \begin{enumerate}[label=(\alph*)]
     \item There exists a $d$-stochastic matrix $M$ such that $q=M p$ if and only if there exists a finite sequence of $T^d$-transforms $(T^d_i)_{i=1}^k$ such that 
    \begin{equation*}
        q = T^d_k \cdots T^d_1 p.
    \end{equation*}
    \item $d = (d_0,\dots,d_0,d_1)$ up to permutations.
    \end{enumerate}
\end{theo}
\end{comment}

\begin{proof}
By Lemma \ref{ordered d}, it suffices to assume that $d=d^\downarrow$ throughout the proof.

    The fact that $(a)$ implies $(b)$ holds analogously to its counterpart in Proposition \ref{Td transfrom low dim}, i.e., as a direct consequence of Proposition \ref{difference to eto}.

    To show that $(b)$ implies $(a)$, we begin taking $p,q \in \mathbb P_\Omega$ such that $q \preceq_d p$. We also take $n=|\Omega|$ and $\gamma=d_1/d_0$ for simplicity and note that we can assume w.l.o.g. that
    \begin{equation}
    \label{wlog}
        p_1 \geq p_2 \geq \cdots \geq p_{n-1} \text{ and } q_1 \geq q_2 \geq \cdots \geq q_{n-1}.
    \end{equation}
   (If that were not the case, we can first apply a sequence of $P^d(i,j)$ with $1 \leq i,j < n$ to reach the desired order for $p$, follow the argument below to reach $q$ with the desired order, and finally apply another sequence of $P^d(i,j)$ with $1 \leq i,j < n$ until we reach $q$.)
    
   Assuming, hence, the desired order for the components of $p$ and $q$, we will prove that $q$ can be achieved from $p$ via $T^d$-transforms by induction on
   \begin{alignat*}{3}
    h_0: \text{ } \mathbb P_\Omega \times& \mathbb P_\Omega &&\rightarrow &&
    \text{ }\{0,1,\dots,|\Omega|-1\}\\
    (p ,&q) &&\mapsto &&\text{ }|\Pi^d_p(n)-\Pi^d_q(n)|.
\end{alignat*}
   %$h_0: \mathbb P_\Omega \times \mathbb P_\Omega \to \{0,1,\dots,|\Omega|-1\}$, $(p,q) \mapsto |\Pi^d_p(n)-\Pi^d_q(n)|$.
   (For simplicity, whenever there is uncertainty in either $\Pi^d_p(n)$ or $\Pi^d_q(n)$, we assume they take the values that minimize $h_0(p,q)$.) %adopt the convention that if two components are equivalent in the $d$-ordering, then we put the one with $d_1$ first)
We deal with the case $h_0(p,q)=0$ in the following lemma. (It should be noted that a more general version of Lemma \ref{same d-order} was obtained in \cite[Theorem 12]{perry2018sufficient} for partial level thermalizations, a subset of the weak elementary thermal operations.)

\begin{lemma}
\label{same d-order}
If $d \in \mathbb P_\Omega$ with $d^\downarrow=(d_0,\dots,d_0,d_1)$,
%up to permutations,
$p,q \in \mathbb P_\Omega$ with $q \preceq_d p$ and  $h_0(p,q)=0$, then there exists a finite sequence of $T^d$-transforms $(T^d_{\lambda_k}(i_k,j_k))_{k=1}^{k_0}$ such that 
    \begin{equation*}
        q = \left(\prod_{k=1}^{k_0} T^d_{\lambda_k}(i_k,j_k) \right)p,
    \end{equation*}
    where $0 \leq \lambda_k \leq 1$ and $1 \leq i_k < j_k \leq |\Omega|$ for $1 \leq k \leq k_0$.
\end{lemma}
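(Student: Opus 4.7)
My plan is to mimic the classical Hardy--Littlewood--P\'olya (HLP) construction underlying Theorem~\ref{equi to weto uniform}, carefully adapted to the $d$-majorization setting with $d^\downarrow = (d_0,\dots,d_0,d_1)$. By Lemma~\ref{ordered d} I may take $d = d^\downarrow$, and by the WLOG reductions inherited from the outer proof of Theorem~\ref{wETO equiv} I may assume $p_1 \geq \cdots \geq p_{n-1}$ and $q_1 \geq \cdots \geq q_{n-1}$, where $n := |\Omega|$. Write $m := \Pi^d_p(n) = \Pi^d_q(n)$ and $\gamma := d_1/d_0$. The base case $n = 2$ is immediate from Lemma~\ref{equiv dim 2}.

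Starting from $p^{(0)} := p$, at each step I produce $p^{(k+1)} = T^d_{\lambda_k}(j_k,\ell_k)\,p^{(k)}$ satisfying two invariants: (i) $q \preceq_d p^{(k+1)}$, and (ii) $p^{(k+1)}$ agrees with $q$ in strictly more coordinates than $p^{(k)}$. As in HLP, the pair $(j_k,\ell_k)$ is selected by taking $j$ to be the smallest index at which $p^{(k)}$ and $q$ disagree (the $d$-majorization inequality at the corresponding breakpoint forces $p^{(k)}_j > q_j$) and $\ell$ to be the smallest index greater than $j$ on which they disagree (the same inequality, combined with $p^{(k)}_i = q_i$ for $j < i < \ell$, forces $p^{(k)}_\ell < q_\ell$). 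Then $\lambda_k \in [0,1]$ is chosen maximally so that $T^d_{\lambda_k}(j_k,\ell_k)\,p^{(k)}$ matches $q$ in at least one of the coordinates $j_k$ or $\ell_k$. Since invariant (ii) ensures termination in at most $n - 1$ steps, the resulting finite product of $T^d$-transforms sends $p$ to $q$.

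Two regimes come into play. When $j_k,\ell_k < n$, the weights $d_{j_k} = d_{\ell_k} = d_0$ make $P^d(j_k,\ell_k)$ a pure transposition and $T^d_{\lambda_k}(j_k,\ell_k)$ a classical uniform-weight $T$-transform; here both the admissibility $\lambda_k \in [0,1]$ and the preservation of $q \preceq_d$ follow verbatim from the classical HLP verification. When one of $j_k$ or $\ell_k$ equals $n$, the transform $T^d_{\lambda_k}(j_k,n)$ carries the asymmetric factor $\gamma$; the constraint inherited from $\Pi^d_p(n) = m$ (together with how the $d$-order of $p^{(k)}$ evolves under the procedure) controls the sign of $\gamma p^{(k)}_{j_k} - p^{(k)}_n$ according to whether the $d$-ordered position of $j_k$ lies before or after $m$, securing a valid $\lambda_k$ and guiding the mass transfer in the correct direction. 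In either regime, concavity of the $d$-curve combined with the maximality of $\lambda_k$ preserves $q \preceq_d p^{(k+1)}$.

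The main obstacle is checking invariant (i) after transforms that straddle position $n$. The $d$-curve has an asymmetric step of size $d_1$ at the $m$-th breakpoint (rather than $d_0$ everywhere), so the classical HLP argument for preservation of majorization---which leans on the equality $p'_j + p'_\ell = p_j + p_\ell$ together with uniform decrements of the relevant partial sums---must be refined to accommodate the $\gamma$ factor appearing in $T^d_{\lambda_k}(j_k,n)$. The case $m < n$ is the most intricate: the $d_1$-weighted coordinate is interleaved among the $d_0$-weighted ones in the $d$-order, so the partner index $j_k$ may lie on either side of $m$ in the $d$-order, which forces a careful case analysis to ensure that the transform both lowers the $d$-curve of $p^{(k)}$ at the relevant breakpoints and keeps it above the $d$-curve of $q$.
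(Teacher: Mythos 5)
Your proposal follows essentially the same route as the paper's proof: an HLP-style induction on the number of disagreeing coordinates, locating a consecutive pair of disagreement points whose signs are forced by the $d$-majorization partial sums, applying a maximally-weighted $T^d$-transform between them, and splitting into cases according to whether the indices involved carry weight $d_0$ or $d_1$ (the paper's cases (A.1)--(A.3)). The one step you flag as the "main obstacle" --- verifying $q \preceq_d p^{(k+1)}$ when the transform involves the $d_1$-weighted coordinate --- is exactly what the paper resolves via the inequality chain $\gamma p^d_k > \gamma q^d_k \geq q^d_l > p^d_l$ and the explicit partial-sum check, so your sketch is correct but leaves that decisive computation unexecuted.
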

\begin{proof}
The result follows by extending the argument in Lemma 2 of \cite[p. 47]{hardy1952inequalities}. More specifically, it follows by induction on
\begin{alignat*}{3}
h_1: \text{ }\mathbb P_\Omega \times& \mathbb P_\Omega &&\rightarrow &&
    \text{ }\{0,1,\dots,|\Omega|\} \\ (p,&q) &&\mapsto &&
    \text{ }|\{1 \leq i \leq n|p_i \neq q_i\}|.
\end{alignat*}
%$h_1: \mathbb P_\Omega \times \mathbb P_\Omega \to \{0,1,\dots,|\Omega|\}$, $(p,q) \mapsto |\{1 \leq i \leq n|p_i \neq q_i\}$.

If $h_1(p,q)=0$, then $p=q$ and we have finished.

If $h_1(p,q)=s+1$ for some $s \geq 0$, then, since $\sum_{i=1}^n (p^d_i-q^d_i)=0$ and $\sum_{i}^\ell q^d_i \leq \sum_{i}^\ell p^d_i$ for $1 \leq \ell \leq n$, there exist some indexes $1 \leq k<l \leq n$ such that 
\begin{equation*}
    p^d_k>q^d_k,\text{ } p^d_{k+1}=q^d_{k+1},\text{ } \dots,\text{ } p^d_{l-1}=q^d_{l-1},\text{ } p^d_l<q^d_l.
\end{equation*}
We distinguish three cases:
\begin{enumerate}[label=(A.\arabic*)]
\item $p^d_k,q^d_k,p^d_l$ and $q^d_l$ are associated with $d_0$. In this case, we can follow the proof of Lemma 2 in \cite[p. 47]{hardy1952inequalities} and obtain by $T^d$-transforms on $p$ some $p'$ such that $h_1(p',q) \leq s$ and $q \preceq_d p'$.
\item $p^d_k$ and $q^d_k$ are associated with $d_0$ and $p^d_l$ and $q^d_l$ are associated with $d_1$. In this case, we have that $\gamma p^d_k > \gamma q^d_k \geq q^d_l > p^d_l$. Hence, there exists some $0 \leq \lambda_0 \leq 1$ such that $\lambda_0 \gamma p^d_k + (1- \lambda_0) p^d_l = q^d_l$. If $(1-\lambda_0 \gamma) p^d_k + \lambda_0 p^d_l \geq q^d_k$, then we take $p'= T^d_{\lambda_0}(k,l) p$. Otherwise, we consider some $0 \leq \lambda_1 < \lambda_0$ such that $(1-\lambda_1 \gamma) p^d_k + \lambda_1 p^d_l = q^d_k$ and we take $p'= T^d_{\lambda_1}(k,l) p$. In any case, $p'$ is obtained by applying $T^d$-transforms on $p$ and $h_1(p',q) \leq s$. Moreover, we have
\begin{equation*}
\begin{split}
    \sum_{i=1}^\ell (p')^d_i =  \sum_{i=1}^\ell p^d_i \geq \sum_{i=1}^\ell q^d_i &\text{ for } \ell=1,\dots,k-1,\\
    (p')^d_k \geq q^d_k, \text{ } (p')^d_\ell = p^d_\ell = q^d_\ell  &\text{ for } \ell=k+1,\dots,l-1,\\
    \sum_{i=1}^\ell (p')^d_i =  \sum_{i=1}^\ell p^d_i \geq \sum_{i=1}^\ell q^d_i &\text{ for } \ell=l,\dots,n.
\end{split}
\end{equation*}
Hence, since $p'$ and $q$ have the same $d$-order by construction, $q \preceq_d p'$.

\item $p^d_l$ and $q^d_l$ are associated with $d_0$ and $p^d_k$ and $q^d_k$ are associated with $d_1$. In this case, we have that $p^d_k > q^d_k \geq \gamma q^d_l > \gamma p^d_l$ and we can follow an argument analogous to that in $(A.2)$.
\end{enumerate}
This concludes the proof.
\end{proof}

As a result of Lemma \ref{same d-order}, 
%the previous points, by induction, we have that $q$ can be achieved from $p$ via $T^d$-transforms provided $q \preceq_d p$ and $h_0(p,q)=0$.
to conclude, we only ought to show that, if $h_0(p,q)>0$, then, applying some $T^d$-transforms to $p$, we can obtain some $p'$ such that $q \preceq_d p'$ and $h_0(p',q)<h_0(p,q)$. We take, hence, $h_0(p,q)=m+1$ for some $m \geq 0$ and consider two cases:
    \begin{enumerate}[label=(B.\arabic*)]
\item $\Pi^d_p(n) > \Pi^d_q(n)$. %The component associated to $d_1$ in $p$ has a larger position than that in $q$.
In this scenario, we take $i_0$ the component for which $i_0+1 =\Pi^d_p(i_0)+1 = \Pi^d_p(n)$, note that $\gamma p_{i_0} > p_n$,
%immediately before $n$ in the $d$-ordering
and define $p'=T^d_\lambda(i_0,n) p$ for some $0 \leq \lambda \leq 1$. We distinguish two cases:
\begin{enumerate}[label=(B.1.\arabic*)]
\item $h_0(p,q)>1$. In this case, we ought to see that there exists such a $\lambda$ fulfilling 
\begin{equation}
\label{case 1}
    \begin{split}
        &p_n' \geq \gamma p_{i_0}' \text{ and} \\
        &p_n' + \Delta \geq \gamma q_{i_0-1}, 
    \end{split}
\end{equation}
where we take $\Delta= \sum_{j=1}^{i_0-1} p_j - (q_n + (1-\gamma) q_{i_0-1}+ \sum_{j=1}^{i_0-2} q_j)$ and the first equation assures that $h(p',q) = m$ while the second one assures that $q \preceq_d p'$. (This is the case since the Lorenz $d$-curve of $p'$ coincides with that of $p$ except for the components that we are modifying, given that $\gamma p_{i_0} > p'_n$ and $\gamma p'_{i_0}>p_n$. Moreover, the conditions in \eqref{case 1} suffice to assure that, in the region where it differs from that of $p$, the Lorenz $d$-curve of $p'$ is not below that $q$.)

Isolating $\lambda$, the equations in \eqref{case 1} are equivalent to
\begin{equation}
\label{case 2}
    \begin{split}
        \lambda &\geq \frac{\gamma p_{i_0} - p_n}{\gamma (1+ \gamma) p_{i_0} - (1+ \gamma) p_n} \text{ and} \\
        \lambda &\geq \frac{\gamma q_{i_0-1} - (\Delta+p_n)}{\gamma p_{i_0} - p_n}, 
    \end{split}
\end{equation}
respectively. It is not difficult to see that the right hand side of both inequalities is bounded by $1$: For the first one we use that $\gamma p_{i_0} >p_n$ by assumption and for the second that $\gamma q_{i_0-1} \leq \gamma p_{i_0}+ \Delta$ since $q \preceq_d p$. Hence, we can find some $0 \leq \lambda \leq 1$ fulfilling \eqref{case 1}.
\item $h_0(p,q)=1$. This case is analogous to $(B.1.1)$, substituting \eqref{case 1} by 
\begin{equation}
\label{case 1b}
    \begin{split}
        &p_n' \geq \gamma p_{i_0}' \text{ and} \\
        &p_n' + \Delta \geq q_n, 
    \end{split}
\end{equation}
with $\Delta= \sum_{j=1}^{i_0-1} (p_j - q_j)$, and \eqref{case 2} by
\begin{equation*}
    \begin{split}
        \lambda &\geq \frac{\gamma p_{i_0} - p_n}{\gamma (1+ \gamma) p_{i_0} - (1+ \gamma) p_n} \text{ and} \\
        \lambda &\geq \frac{q_n - (\Delta+p_n)}{\gamma p_{i_0} - p_n}. 
    \end{split}
\end{equation*}
It is not difficult to see that the right hand side of both inequalities is bounded by $1$: The first follows like \eqref{case 2} and the second since $q \preceq_d p$ and, hence, $q_n \leq \gamma p_{i_0}+ \Delta$. Thus, there exists some $0 \leq \lambda \leq 1$ fulfilling \eqref{case 1b}.
\end{enumerate} 
\item $\Pi^d_p(n) < \Pi^d_q(n)$.
%The component associated to $d_1$ in $p$ has a smaller position than that in $q$.
In this scenario, we take $i_0$ the component for which $\Pi^d_p(i_0)-1 = \Pi^d_p(n)$, note that $p_n>\gamma p_{i_0}$,
%the component immediately after $n$ in the $d$-ordering
and define $p'=T^d_\lambda(i_0,n) p$ for some $0 \leq \lambda \leq 1$. We can conclude, analogously to $(B.1.1)$, by finding such a $\lambda$ fulfilling 
\begin{equation}
%\label{case 1}
    \begin{split}
        &\gamma p_{i_0}' \geq p_n' \text{ and} \\
        &p_{i_0}' + \Delta \geq q_{i_0}, 
    \end{split}
\end{equation}
where we take $\Delta= \sum_{j=1}^{i_0-1} (p_j - q_j)$ and the first equation assures that $h(p',q) = m$ while the second one assures that $q \preceq_d p'$.
These equations are equivalent to
\begin{equation}
%\label{case 2}
    \begin{split}
        \lambda &\geq \frac{p_n-\gamma p_{i_0}}{(1+ \gamma) p_n-\gamma (1+ \gamma) p_{i_0}} \text{ and} \\
        \lambda &\geq \frac{q_{i_0} - (\Delta+p_{i_0})}{p_n-\gamma p_{i_0}}, 
    \end{split}
\end{equation}
respectively. It is not difficult to see that the right hand side of both inequalities is bounded by $1$: For the first one we use that $\gamma p_{i_0} <p_n$ by assumption and for the second that $q_{i_0} \leq \Delta + p_n+ (1-\gamma) p_{i_0}$
%$\gamma q_{i-1} \leq \gamma p_i+ \Delta$
since $q \preceq_d p$.
\end{enumerate}
   By induction, this concludes the proof.
\end{proof}

\begin{rem}
As a result of Theorem \ref{wETO equiv}, elementary thermal operations with deterministic protocols are weakly universal if and only if $d^\downarrow = (d_0,\dots,d_0,d_1)$.
     Hence, given the result in Theorem \ref{polytopes relation}, Theorem \ref{wETO equiv} shows that the conditioning our experimental protocols via random variables does not augment the cases where elementary thermal operations are weakly universal. Moreover, if $p,q \in \mathbb P_\Omega$ and $q \preceq_d p$, then Theorem \ref{wETO equiv} provides an algorithm to achieve $q$ via weak elementary thermal operations on $p$ provided $d^\downarrow =(d_0,\dots,d_0,d_1)$. Up to permutations in $d$, the algorithm can be summarized as follows:
    \begin{enumerate}[label=(\alph*)]
        \item Input $0<d =(d_0,\dots,d_0,d_1) \in \mathbb P_\Omega$ and $p,q \in \mathbb P_\Omega$ such that $q \preceq_d p$.
        \item While $h_0(p,q)>0$ and following Theorem \ref{wETO equiv}, calculate a recursive sequence of $T^d$-transforms that sequentially reduce $h_0$ and update $p$ by applying the sequence to it. 
        \item While $h_1(p,q)>0$ and following Theorem \ref{wETO equiv}, calculate a recursive sequence of $T^d$-transforms that sequentially reduce $h_1$ and update $p$ by applying the sequence to it.
        \item Output (in order) the sequence of $T^d$-transforms generated in (b) and (c). This sequence yields $q$ when applied to $p$. 
    \end{enumerate}
\end{rem}

Note that Theorem \ref{polytopes relation} follows as a direct corollary of Theorem \ref{wETO equiv}. As we will see in Section \ref{poly relas}, the tight relation between TO and WETO resource theories for $d^\downarrow= (d_0,\dots,d_0,d_1)$ breaks down at the polytope level. In fact, Theorem \ref{wETO equiv} cannot be extended to polytopes even for uniform $d$ by Theorem \ref{uniform weto = to}. This contrasts with the extension of Theorem \ref{polytopes relation}, which we will prove in Section \ref{poly relas}.

\section{Universality of elementary thermal operations}
\label{poly relas}

As a follow up to the previous section, we turn our attention to thermal polytopes, that is, to the universality of elementary thermal operations. Hence, in this section, we answer (Q2) by considering first non-determinisitic protocols in Section \ref{strong uni ETO} and then determinisitc ones in Section \ref{strong uni WETO}.

\subsection{Universality of strong elementary thermal operations}
\label{strong uni ETO}

 The first question we wish to answer is whether
Theorem \ref{polytopes relation} can be extended to polytopes, that is, we would like to know how does Proposition \ref{eto vs to} look when we consider $|\Omega| > 3$. In order to do so, we use the work by Jurkat and Ryser \cite{jurkat1967term} and Hartfiel \cite{hartfiel1974study}. We begin recalling an algorithm provided in \cite{jurkat1967term}.

\begin{defi}[Jurkat-Ryser $d$-algorithm and $d$-matrix {\cite{jurkat1967term}}]
\label{def: jr algo}
If $d \in \mathbb P_\Omega$, the Jurkat-Ryser $d$-algorithm is a procedure to construct a matrix of dimension $|\Omega| \times |\Omega|$ that begins with an empty $|\Omega| \times |\Omega|$ matrix $A_1$ and a couple of vectors $r^1=s^1=d$ and, for each step $m \geq 2$, does the following:
\begin{enumerate}[label=(\alph*)]
\item Select a position $(i_m,j_m)$ in $A_{m-1}$ that has not been assigned a value yet. If such a position does not exist, return $A_{m-1}$.
    \item Define $A_m$ as the matrix equivalent to $A_{m-1}$ with the addition of the $(i_m,j_m)$ entry, which equals $\min (r^m_i,s^m_j)$, and fill the rest of row $i$ (column $j$) with zeros provided $r^m_i = \min (r^m_i,s^m_j)$ ($s^m_j = \min (r^m_i,s^m_j)$).
    \item Define
    \begin{equation*}
        \begin{split}
        r^{m+1} &=(r^m_1,\dots,r^m_{i-1},r^m_i-\min (r^m_i,s^m_j),r^m_{i+1},\dots,r^m_{|\Omega|}), \\
        s^{m+1} &=(s^m_1,\dots,s^m_{j-1},s^m_j-\min (r^m_i,s^m_j),s^m_{j+1},\dots,r^m_{|\Omega|}).
        \end{split}
    \end{equation*}
   % $r^{m+1}=(r^m_1,\dots,r^m_{i-1},r^m_i-\min (r^m_i,s^m_j),r^m_{i+1},\dots,r^m_{|\Omega|})$ and $s^{m+1}=(s^m_1,\dots,s^m_{j-1},s^m_j-\min (r^m_i,s^m_j),s^m_{j+1},\dots,r^m_{|\Omega|})$.
   \item Return to step (a).
\end{enumerate}
%The algorithm finishes once all the $|\Omega| \times |\Omega|$ entries have been assigned a value.
We say a matrix $A$ is a Jurkat-Ryser $d$-matrix if it can be constructed following the Jurkat-Ryser algorithm with $r^1=s^1=d$. Moreover, for any $m \geq 1,$ we call $r^m$ and $s^m$ the $m$-th Jurkat-Ryser row and column $d$-vectors, respectively.
\end{defi}

The relevance of Definition \ref{def: jr algo} for our work here is encapsulated in the following theorem, where we denote by $\text{diag}(x_1,\dots,x_n)$ a diagonal matrix with entries $x_1,\dots,x_n$. 

\begin{theo}[Extremes TO polytope {\cite[Theorem 4.1]{jurkat1967term}, \cite[Lemma 1.1]{hartfiel1974study}}]
\label{extremes to}
    If $d \in \mathbb P_\Omega$, then the extreme points of the polytope of thermal operations take the form
    %can be decomposed as a convex combination of matrices of the form
    $A D^{-1}$, where $A$ is a Jurkat-Ryser $d$-matrix and $D=\text{diag}(d_1,d_2,\dots,d_{|\Omega|})$.
\end{theo}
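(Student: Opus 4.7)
The plan is to reduce the statement to the classical characterization of the vertices of a transportation polytope via a linear change of variables. First I would introduce the map $\Phi: M \mapsto MD$ on $\mathcal M_{|\Omega|,|\Omega|}(\mathbb R)$. Because $d>0$ componentwise, the matrix $D$ is invertible, so $\Phi$ is an affine bijection and therefore carries extreme points to extreme points. A direct computation then shows that $M$ is $d$-stochastic if and only if $\Phi(M)$ is entrywise non-negative with both row sums and column sums equal to $d$: the column sum identity $\sum_i (MD)_{i,j} = d_j \sum_i M_{i,j} = d_j$ encodes the stochasticity of $M$, while the row sum identity $\sum_j (MD)_{i,j} = (Md)_i = d_i$ encodes $Md=d$. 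Hence $\Phi$ identifies $\mathcal P_{\text{TO}}(d)$ with the transportation polytope $U(d,d)$ of non-negative $|\Omega|\times|\Omega|$ matrices with both row and column margins equal to $d$, and the theorem reduces to showing that the extreme points of $U(d,d)$ are exactly the Jurkat-Ryser $d$-matrices.

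Next I would invoke the standard combinatorial characterization of the vertices of a transportation polytope: a non-negative matrix $A$ with fixed margins is an extreme point of $U(d,d)$ if and only if its support $\{(i,j) : A_{i,j}>0\}$ contains no cycle in the bipartite graph on rows and columns, i.e., the support is a forest. I would then verify that the Jurkat-Ryser algorithm of Definition \ref{def: jr algo} outputs precisely such matrices and, conversely, that every such extreme can be produced by some admissible run. The key observation is that at each iteration, fixing a position $(i_m,j_m)$, assigning it the value $\min(r^m_{i_m}, s^m_{j_m})$, and zeroing out the remainder of whichever of row $i_m$ or column $j_m$ has just been exhausted amounts to selecting one edge of the support forest and reducing to a smaller transportation problem on the leftover rows and columns with the updated margins $r^{m+1}$ and $s^{m+1}$. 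An induction on the number of remaining unsaturated rows plus columns then delivers both directions, and applying $\Phi^{-1}$ yields the claimed description of the extremes of $\mathcal P_{\text{TO}}(d)$ as $A D^{-1}$.

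The main obstacle I anticipate is the degenerate case where $r^m_{i_m} = s^m_{j_m}$ at some step, so that the minimum is attained on both sides and the algorithm has a genuine choice about whether to close off row $i_m$, column $j_m$, or both. This is exactly the well-known degeneracy of the northwest-corner rule in linear programming, and dealing with it carefully is what allows the Jurkat-Ryser construction to enumerate every vertex of $U(d,d)$, not merely the generic ones. The right way to handle it is to permit all such completions as valid steps of the algorithm, which ensures that every spanning forest of the bipartite support graph arises as the support of some Jurkat-Ryser $d$-matrix and simultaneously that no spurious cycles are ever created in the support.
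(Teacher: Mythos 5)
Your proposal is correct, but it takes a genuinely different route from the paper. The paper's proof of Theorem \ref{extremes to} is essentially a citation: it defers the mathematical content to \cite[Theorem 4.1]{jurkat1967term} and \cite[Lemma 1.1]{hartfiel1974study}, and the only original observation is bookkeeping --- Hartfiel works in the transposed (row-stochastic) convention, so one must transpose his statement and check that the transpose of a Jurkat-Ryser $d$-matrix is again a Jurkat-Ryser $d$-matrix by transposing the indices in each step of its construction. You instead give a self-contained argument: the affine bijection $M \mapsto MD$ identifies $\mathcal P_{\text{TO}}(d)$ with the transportation polytope $U(d,d)$ (your margin computations are correct, and $0<d$ guarantees invertibility of $D$), you then invoke the classical acyclic-support characterization of the vertices of a transportation polytope, and you match the algorithm of Definition \ref{def: jr algo} to that characterization by the usual induction (each step saturates at least one line, so the support is a forest; conversely, peeling leaves of the support forest realizes any vertex as an algorithm output, since a leaf entry necessarily equals the corresponding residual margin and is therefore the minimum the algorithm assigns). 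Flagging the degenerate case $r^m_{i_m}=s^m_{j_m}$ is also the right instinct, as that is exactly where the algorithm branches. What your approach buys is independence from the precise statements in the cited references, and --- because $U(d,d)$ has equal row and column margins and is transpose-symmetric --- the convention issue the paper must patch disappears entirely. What it costs is that you remain responsible for the transportation-polytope vertex theorem, which you invoke rather than prove, so the argument is differently sourced rather than strictly more elementary.
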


\begin{proof}
    Most of the theorem is due to Jurkat and Ryser \cite[Theorem 4.1]{jurkat1967term} and its final form due to Hartfiel \cite[Lemma 1.1]{hartfiel1974study}. The only thing we ought to notice is that we have to take the transpose of the result in \cite{hartfiel1974study} (since it uses a different convention for stochastic matrices) and that, if a matrix $A$ can be constructed via the Jurkat-Ryser algorithm, then its transpose $A^T$ can as well. (We simply follow the steps in the construction of $A$ transposing the indexes we use in each of them.)
    %They work using the opposite convention for stochastic matrices. However, the result can be obatained noticing that the class of transportation matrices with a single vector is invariant under transposition.
\end{proof}

Note that Theorem \ref{extremes to} proves that the TO polytope is in fact a polytope in the sense of \cite{brondsted2012introduction}, since it is clearly convex.

%We have stated the previous result in a weak version which is sufficient for our purposes. The matrices are known to be convex independent \cite{jurkat1967term}.

If $A$ is a Jurkat-Ryser $d$-matrix such that $M= A D^{-1}$ for some extreme of the TO polytope, then we say $A$ is \emph{associated} to $M$.
%We call $A$ the Jurkat-Ryser $d$-matrix associated to $M$ an extreme point of the thermal operations polytope.
In order to determine the relation between the TO and ETO polytopes, a couple more definitions will prove to be useful. The aim of these definitions is to keep track of the previous choices in the Jurkat-Ryser $d$-algorithm, which will allow us to determine the future entries of a Jurkat-Ryser $d$-matrix. Hence, we begin by defining the \emph{history} of a Jurkat-Ryser $d$-matrix.

\begin{defi}[History]
If $d \in \mathbb P_\Omega$ and $A$ is a Jurkat-Ryser $d$-matrix,
we call a sequence of ordered pairs $((i_{k_l}^A,j_{k_l}^A))_{l=1}^{l_0}$ through which $A$ was constructed the history of $A$ and denote it by $H(A)$. Lastly, if $1 \leq m \leq l_0$, we call the subsequence $((i_{k_l}^A,j_{k_l}^A))_{l=1}^{m} \subseteq H(A)$ the history of $A$ until step $m$ and denote it by $H(A,m)$.
\end{defi}

Although there are several \emph{histories} for some Jurkat-Ryser $d$-matrix $A$, we fix here w.l.o.g. one instance of the Jurkat-Ryser $d$-algorithm generating $A$ and, hence, one history associated to $A$.

By Proposition \ref{difference to eto}, the equilibrium distribution we are interested in has a simple structure, $d=(d_0,\dots,d_0,d_1)$. Hence, the most important property to determine the entries in the Jurkat-Ryser $d$-matrix are what components in the Jurkat-Ryser $d$-vectors have been affected by $d_1$. We formalize this in the following definition.   

\begin{defi}[Row and column connection to $d_1$]
\label{connection def}
    If $d = (d_0,\dots,d_0,d_1) \in \mathbb P_\Omega$ and $A$ and $r^m$ are the Jurkat-Ryser $d$-matrix and $m$-th row $d$-vector, respectively, then we say $r^m_a$ is row-connected to $d_1$ if either $m=0$ and $a=|\Omega|$ or $m>0$, $r^m_a>0$ and there exists a subsequence of the history of $A$ until step $m-1$, $((i_{k_\ell}^A,j_{k_\ell}^A))_{\ell=1}^{\ell_0} \subseteq H(A,m-1)$, such that 
    \begin{alignat*}{2}
            &i^A_{k_{\ell_0}}=a &&\text{ and } i^A_{k_1}=|\Omega|,\\ &j^A_{k_{2\ell}}=j^A_{k_{2\ell-1}} &&\text{ and } i^A_{k_{2\ell+1}}=i^A_{k_{2\ell}} \text{ for } 1 \leq \ell, \text{ and} \\
            &a \neq i^A_{t} &&\text{ for } k_{\ell_0}<t \leq m-1.
            \end{alignat*}
    We call $\ell_0$ the row-connection length to $d_1$ and 
    we define the row-connection between $s^m_a$ and $d_1$, the column-connection to $d_1$ of both $r^m_a$ and $s^m_a$, and their respective lengths in the same vein. Furthermore, we say $r^m_a$ or $s^m_a$ is connected to $d_1$ if it is either row-connected or column-connected and refer to its connection length to $d_1$ in an analogous way. Lastly, we say a component $A_{i,j}$ is connected to $d_1$ if it was generated, for some $m \geq 1$, using $r^m_i$ and $s^m_j$ with at least one of them connected to $d_1$, and we naturally extend this definition to $M=AD^{-1}$.
\end{defi}

 Note that, as one can easily check by contradiction, we do not need to add to Definition \ref{connection def} constraints like
\begin{alignat*}{2}
&j_t \neq j_{k_{2\ell-1}} &&\text{ for } k_{2\ell-1}<t<k_{2\ell} ,\text{ or}\\
&i_t \neq i_{k_{2\ell}} && \text{ for } k_{2\ell}<t<k_{2\ell+1}.
\end{alignat*}

We are now ready to characterize the equivalence between the TO and ETO polytopes, which we address in the following theorem.

\begin{theo}[Equivalence TO and ETO polytopes]
\label{equi to eto poly}
If $0<d \in \mathbb P_\Omega$, then the following statements are equivalent:
\begin{enumerate}[label=(\alph*)]
     \item The thermal operations polytope is equal to the elementary thermal operations 
polytope
\begin{equation*}
    \mathcal P_{\text{TO}}(d) = \mathcal P_{\text{ETO}}(d).
\end{equation*}
    \item $d^\downarrow = (d_0,\dots,d_0,d_1)$.
    %up to permutations, where $d_0 \geq d_1$.
    %and the number of $d_0$ equal to $|\Omega|-1$.
 \end{enumerate}
\end{theo}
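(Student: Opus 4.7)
By Lemma \ref{ordered d}, the plan is to reduce to the case $d=d^\downarrow$. Necessity is then immediate: if $\mathcal P_{\text{TO}}(d)=\mathcal P_{\text{ETO}}(d)$, the induced resource theories also coincide, so Proposition \ref{difference to eto} forces $d=(d_0,\dots,d_0,d_1)$. Note that this step uses no extra work beyond invoking Proposition \ref{difference to eto}.

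For sufficiency, the key observation is that $\mathcal P_{\text{ETO}}(d)$ is convex by definition and contained in $\mathcal P_{\text{TO}}(d)$. Hence it suffices to show that every extreme point of $\mathcal P_{\text{TO}}(d)$ lies in $\mathcal P_{\text{ETO}}(d)$, and in fact it is enough to exhibit each extreme point as a single product of $d$-swaps. By Theorem \ref{extremes to}, an extreme of $\mathcal P_{\text{TO}}(d)$ has the form $M=AD^{-1}$ for some Jurkat--Ryser $d$-matrix $A$, with $D=\mathrm{diag}(d_1,\dots,d_{|\Omega|})$. The plan is to read off, from the history $H(A)$, an explicit sequence of $d$-swaps whose product equals $M$.

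The strategy is induction on $|\Omega|$ together with a careful analysis of the Jurkat--Ryser algorithm. Since $d$ has only one entry $d_1$ distinct from $d_0$, at each step $m$ at most one component of $r^m$ and one component of $s^m$ carries the residual $d_1$-contribution, and Definition \ref{connection def} was tailored precisely to track when that contribution propagates. I would identify the index $a$ for which $r^m_a$ or $s^m_a$ is the (unique) entry connected to $d_1$, apply $P^d(a,|\Omega|)$ in the appropriate slot of the product under construction, and thereby isolate the $d_1$-carrying row or column. The remaining rows and columns involve only $d_0$, so their entries can be permuted into place by $d$-swaps among the $d_0$-levels (which are just transpositions, by Theorem \ref{birkhoff} applied to the uniform sub-block). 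Peeling off the isolated row/column then reduces the dimension, and the induction hypothesis handles the smaller problem.

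The main obstacle, and the reason the connection-to-$d_1$ bookkeeping is needed, is verifying that this procedure produces exactly the entries $A_{i,j}/d_j$ prescribed by $M$. Each $d$-swap of the form $P^d(i,|\Omega|)$ introduces the factor $\gamma=d_1/d_0$ (or $1-\gamma$) in a very rigid pattern, and one must check that the alternating conditions $i^A_{k_{2\ell+1}}=i^A_{k_{2\ell}}$ and $j^A_{k_{2\ell}}=j^A_{k_{2\ell-1}}$ in the definition of row/column connection are precisely what the chaining of such $d$-swaps produces: each alternation corresponds to one hand-off of the $d_1$-contribution between a row and a column, and thus to one factor of $\gamma$ appearing in the resulting product. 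The connection length of Definition \ref{connection def} should match the number of non-permutation $d$-swaps appearing in the factorization of the corresponding entry. Once this matching is established, an induction on either $|\Omega|$ or the length of $H(A)$ closes the argument, and the convex-hull structure of $\mathcal P_{\text{ETO}}(d)$ is not even required. I expect the bulk of the technical work to lie in this bookkeeping, and in handling the edge cases where several entries of $A$ are generated simultaneously (when $r^m_i=s^m_j$ in step (b) of Definition \ref{def: jr algo}).
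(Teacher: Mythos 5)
Your high-level plan coincides with the paper's: reduce to $d=d^\downarrow$ via Lemma \ref{ordered d}, obtain (a)$\Rightarrow$(b) from Proposition \ref{difference to eto}, and for the converse use Theorem \ref{extremes to} plus convexity of $\mathcal P_{\text{ETO}}(d)$ to reduce to writing each Jurkat--Ryser extreme $M=AD^{-1}$ as a single product of $d$-swaps, tracking how the $d_1$-entry propagates. That much is sound and is exactly what the paper does.

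The gap is in your reduction step. You propose to locate ``the (unique) entry connected to $d_1$'' at a given stage, apply one swap $P^d(a,|\Omega|)$ to ``isolate the $d_1$-carrying row or column,'' and then peel off a dimension. This does not match the actual structure of the extremes. First, uniqueness fails: the $d_1$-contribution has two sources ($r^1_{|\Omega|}$ and $s^1_{|\Omega|}$), so at an intermediate step one component can be row-connected while a different one is column-connected (e.g.\ in the $|\Omega|=4$ example of Figure \ref{fig:matrix alg}, after the $1$ at $(1,4)$ is placed, both $r_1=d_0-d_1$ and $r_4=d_1$ are live carriers). Second, the $d_1$-connected entries of an extreme do not sit in a single row or column; they form a staircase chain $Q=((i_k,j_k))_{k=0}^{t_0}$ running from the $1$ in the last column down to the $\gamma$ in the last row, possibly traversing every row and every column, with the complementary submatrix a permutation. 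This is the content of the paper's Lemmas \ref{lemma J-R A}, \ref{lines of M} and \ref{distribution lines M}, and it is the real work of the proof; your bookkeeping intuition about ``hand-offs'' between rows and columns is pointing at it, but you have not stated or proved it. Third, once the chain is identified, deleting one row and one column does not leave a lower-dimensional extreme of a TO polytope for a distribution of the same shape (the square/stochastic structure is lost), so the induction on $|\Omega|$ as described has no well-formed inductive instance. The paper avoids this by a direct construction: the whole chain is realized at once by $N_0=\prod_{k=0}^{t_0}P^d(j_k,n)$, a product of $d$-swaps all involving the last level, and the remaining discrepancy is a permutation handled by Theorem \ref{birkhoff}. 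To repair your argument you would need to prove the chain lemma and then either adopt this direct product or set up the induction on the chain length $t_0$ rather than on $|\Omega|$.
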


\begin{comment}
\begin{theo}[Equivalence TO and ETO polytopes]
\label{equi to eto poly}
If $d \in \mathbb P_\Omega$,
then the elementary thermal operations polytope is equal to the thermal operations polytope if and only if $d = (d_0,\dots,d_0,d_1)$ up to permutations. 
\end{theo}
\end{comment}

\begin{proof}
By Lemma \ref{ordered d}, it suffices to assume that $d=d^\downarrow$ throughout the proof.

    Sufficiency follows as in Proposition \ref{eto vs to}, i.e., as a direct consequence of Proposition \ref{difference to eto}.
    %by contradiction. Assume both polytopes are equivalent for some $d \neq (d_0,\dots,d_0,d_1)$ with $0<d_1 \leq d_0$. By Proposition \ref{difference to eto}, there exists a pair $p,q \in \mathbb P_\Omega$ such that $q \preceq_d p$ and $q$ is not achievable from $p$ via elementary thermal operations. Since $q \preceq_d p$, there exists some $d$-stochastic matrix $M$ such that $q=Mp$. Since the polytopes are equivalent, $q= \sum \lambda_k M_k p$, with $M_k$ a finite product of elementary thermal operations. This contradicts Proposition \ref{difference to eto}.

    To prove necessity, we will construct a decomposition as a product of $d$-swaps of an arbitrary extreme point of the thermal operations polytope $M$ (whose form we know from Theorem \ref{extremes to}).
    %vertex of the thermal polytope.
    We take $\gamma=d_1/d_0$ and $n=|\Omega|$ for simplicity, and assume throughout that $d_1<d_0$. (The case with equality is well known by Theorem \ref{birkhoff}.) 

As a first step, we determine the possible entries of $M$. In particular, we note that, for $1 \leq i,j \leq n$, $M_{i,j} \in \{1,\gamma,1-\gamma,0\}$.
%By Theorem \ref{extremes to}, this is equivalent (since we can focus on columns $1$ to $n-1$ by assumption and, for those, the product with $D^{-1}$ amounts to dividing by $d_0$) to showing that, if $A$ is the Jurkat-Ryser matrix associated to $M$, then, for all $i,j$, $A_{i,j} \in \{d_0,d_1,d_0-d_1,0\}$.
We show this in the following lemma, where we start proving and analogous result for Jurkat-Ryser $d$-matrices.

\begin{lemma}
\label{lemma J-R A}
    If $d=(d_0,\dots,d_0,d_1) \in \mathbb P_\Omega$
    %with $d^\downarrow=(d_0,\dots,d_0,d_1)$
    and $\gamma=d_1/d_0$, then the following statements hold:
    \begin{enumerate}[label=(\alph*)]
\item If $A$ is a Jurkat-Ryser $d$-matrix, then 
    \begin{equation}
    \label{JR d0}
        A_{i,j} \in \left\{d_0,d_1,d_0-d_1,0\right\}
    \end{equation}
    for $1 \leq i,j \leq |\Omega|$.
    \item If $M$ is an extreme point of the thermal operations polytope, then 
    \begin{equation}
        M_{i,j} \in \left\{1,\gamma,1-\gamma,0\right\}
    \end{equation}
    for $1 \leq i,j \leq |\Omega|$.
    \end{enumerate}
\end{lemma}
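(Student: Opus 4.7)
The plan is to prove part (a) by induction on the step $m$ of the Jurkat--Ryser algorithm, showing that at every step each entry of the residual vectors $r^m$ and $s^m$---and therefore every entry placed in the matrix $A$---stays inside the four-element set $\{0, d_0, d_1, d_0-d_1\}$. Part (b) will then follow directly from Theorem~\ref{extremes to} by dividing the entries of $A$ by the corresponding $d_j \in \{d_0, d_1\}$.

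The naive induction fails precisely at pairs of the form $(r^m_i, s^m_j) = (d_1, d_0-d_1)$ or $(d_0-d_1, d_1)$, since taking the minimum then produces residuals $d_0 - 2d_1$ or $2d_1 - d_0$ which generically escape the four-element set. I would therefore strengthen the inductive hypothesis to track, at each step, (i) the multisets $T^r_m, T^s_m$ of \emph{special} entries (those equal to $d_1$ or $d_0-d_1$) in $r^m$ and $s^m$, and (ii) the fact that $r^m_{|\Omega|}, s^m_{|\Omega|} \in \{0, d_1\}$ throughout. A direct enumeration of the transitions that a single Jurkat--Ryser step induces on $(T^r_m, T^s_m)$ shows that this pair is always one of the five configurations $(\emptyset, \emptyset)$, $(\{d_1\}, \{d_1\})$, $(\{d_1, d_0-d_1\}, \emptyset)$, $(\emptyset, \{d_1, d_0-d_1\})$, and $(\{d_0-d_1\}, \{d_0-d_1\})$. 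Crucially, in none of these is a $d_1$ in $r$ simultaneously available with a $d_0-d_1$ in $s$ (or vice versa), so the forbidden selection is never made; in every remaining case, the minimum lies in $\{0, d_0, d_1, d_0-d_1\}$ and the updated residuals preserve (i)--(ii), since $r^m_{|\Omega|}$ and $s^m_{|\Omega|}$ can only decrease to $0$ when paired with $d_0$ or $d_1$.

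For part (b), Theorem~\ref{extremes to} writes any extreme point as $M = AD^{-1}$ with $A$ a Jurkat--Ryser $d$-matrix and $D = \text{diag}(d_0, \dots, d_0, d_1)$. For $j < |\Omega|$, $M_{i,j} = A_{i,j}/d_0 \in \{0, 1, \gamma, 1-\gamma\}$ is immediate from part (a). For $j = |\Omega|$, the strengthened invariant forces $A_{i,|\Omega|} \in \{0, d_1\}$: at the step when $(i, |\Omega|)$ is assigned one must have $r^m_i \neq d_0 - d_1$, because the state analysis shows that whenever $r$ contains a $d_0 - d_1$ the residual $s^m_{|\Omega|}$ has already been driven to $0$ (column $|\Omega|$ is exhausted). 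Thus $\min(r^m_i, s^m_{|\Omega|}) \in \{0, d_1\}$, and dividing by $d_1$ yields $M_{i, |\Omega|} \in \{0, 1\} \subseteq \{0, 1, \gamma, 1-\gamma\}$.

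The main obstacle is identifying the correct strengthened hypothesis: the naive claim that entries of $r^m, s^m$ lie in $\{0, d_0, d_1, d_0-d_1\}$ is not preserved by a single Jurkat--Ryser step, and only the five-state refinement, with its implicit coupling between the positions and values of the special entries in $r^m$ versus $s^m$, rules out the dangerous pair. Once this bookkeeping is in place, the remaining case analysis is mechanical but somewhat lengthy.
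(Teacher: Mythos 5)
Your proposal is correct, and it isolates exactly the right difficulty: the only way an entry outside $\{0,d_0,d_1,d_0-d_1\}$ could arise is from comparing a residual $d_1$ against a residual $d_0-d_1$, and the whole proof is to show this comparison never happens. The paper establishes this via the ``connection to $d_1$'' machinery (Definition \ref{connection def}): each non-$d_0$ residual entry is traced back through the history of the algorithm to the original $d_1$, a row-connection forcing the value $d_1$ and a column-connection forcing $d_0-d_1$ (with the roles swapped for $s^m$), and a complementarity claim (if $r^m_i$ is row-connected then $s^m_j$ is column-connected or unconnected) rules out the dangerous pair. Your five-state invariant on the multisets of special residuals is a global reformulation of the same combinatorial fact, and your transition analysis is easily checked to be closed under a Jurkat--Ryser step (the total number of special entries starts at two and can only stay constant or drop by two, and no reachable state places a $d_1$ in one vector together with a $d_0-d_1$ in the other). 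Your treatment of part (b) also matches the paper's: the extra observation that $A_{i,|\Omega|}\in\{0,d_1\}$, which is needed because column $|\Omega|$ is divided by $d_1$ rather than $d_0$, follows in both arguments from the fact that $s^m_{|\Omega|}$ stays equal to $d_1$ until it is zeroed out, during which time $r^m$ cannot contain a $d_0-d_1$. The main trade-off is that the paper's connection formalism, while heavier for this lemma alone, is reused in Lemmas \ref{lines of M} and \ref{distribution lines M} to locate where the non-trivial entries sit; your state machine proves the present lemma more transparently but would need to be supplemented with positional information to serve those later arguments.
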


\begin{proof}
\begin{enumerate}[label=(\alph*)]
\item By definition, the only non-zero components of $A$ consist of the minimum of $r^m_i$ and $s^m_j$ for some $m \geq 1$ and $1 \leq i,j \leq |\Omega|$. Moreover, aside from the zeros which are generated by the Jurkat-Ryser algorithm (which are not used again for any comparison later on) we have that $r^m_i=s^m_j=d_0$ unless $r^m_i$ or $s^m_j$ is connected to $d_1$.
In particular, one can see that
$r^m_i=d_1$ if it is row-connected to $d_1$ and $r^m_i=d_0-d_1$ if it is column-connected to $d_1$. Similarly, $s^m_j=d_1$ if it is column-connected to $d_1$ and $s^m_j=d_0-d_1$ if it is row-connected to $d_1$. 
%In particular, one can see that $r^m_i=d_1$ if its row-connection or column-connection length to $d_1$ is even and $r^m_i=d_0-d_1$ if its row-connection or column-connection length to $d_1$ is odd. Moreover, an analogous result holds for $s^m_j$.
To conclude, note that, if $r^m_i$ ($s^m_j$) is row-connected to $d_1$, then $s^m_j$ ($r^m_i$) is either column-connected to $d_1$ or not connected to $d_1$ at all, and vice versa. Hence, in case $r^m_i$ and $s^m_j$ are both connected to $d_1$, there are only two possible scenarios:
\begin{enumerate}[label=(a.\arabic*)]
    \item $r^m_i$ is row-connected to $d_1$ and $s^m_j$ is column-connected to $d_1$. In this case,
    %the connection length to $d_1$ of both $r^m_i$ and $s^m_j$ is even and
    we have $r^m_i=s^m_j=d_1$.
    \item $r^m_i$ is column-connected to $d_1$ and $s^m_j$ is row-connected to $d_1$. In this case, %the connection length to $d_1$ of both $r^m_i$ and $s^m_j$ is odd and
    we have $r^m_i=s^m_j=d_0-d_1$.
    \end{enumerate}
As a result, we have that 
 \begin{equation*}
    (r^m_i,s^m_j) = \begin{cases}
    (a,b), &\text {or}\\
   (b,a),&
    \end{cases}
\end{equation*}
 where 
 \begin{equation*}
(a,b) \in \{(d_0,d_0), (d_0,d_1), (d_0,d_0-d_1),(d_1,d_1), (d_0-d_1,d_0-d_1)\}.
\end{equation*}
Hence, \eqref{JR d0} holds.

\item This follows as a direct consequence of $(a)$ since, by Theorem \ref{extremes to}, $M=A D^{-1}$, where $A$ is a Jurkat-Ryser $d$-matrix and $D=diag(d_0,\dots,d_0,d_1)$. To conclude, it suffices to notice that $A_{i,|\Omega|} \in \{0,d_1\}$ for all $1 \leq i \leq |\Omega|$, since, provided it is not zero, we have for all $m \geq 1$ that $s^m_{|\Omega|} = d_1$ and hence, as we argued in $(a)$, $r^m_{i} \in \left\{ d_0,d_1\right\}$ for $1 \leq i \leq |\Omega|$.
\end{enumerate}
This concludes the proof.
\end{proof}

Now that we know what entries $M$ may have, we show, in the following lemma, how they may be distributed along its lines.\footnote{If $M \in \mathcal M_{|\Omega|,|\Omega|}(\mathbb R)$, then a \emph{line} of $M$ is either a row or a column \cite{jurkat1967term}.}
%In particular, we consider first three relevant types on lines in $M$:

\begin{lemma}
\label{lines of M}
If $d=(d_0,\dots,d_0,d_1) \in \mathbb P_\Omega$, $d_1 <d_0$, $\gamma=d_1/d_0$ and $M$ is an extreme point of the thermal operations polytope, then the lines of $M$ fulfill the following properties:
\begin{enumerate}[label=(\alph*)]
\item The last row has either a $1$ in the last column or a $\gamma$ in another column and the rest are zeros.
\item If a row which is not the last one has a $1$ in the last column, then it also has a $1-\gamma$ and the rest are zeros.
\item If a row which is not the last one has a $\gamma$, then it also has a $1-\gamma$ (both not in the last column) and the rest are zeros.
\item The last column has a single one and the rest are zeros.
\item If a column has a $1-\gamma$, then it also has a $\gamma$ and the rest are zeros.
\end{enumerate}
\end{lemma}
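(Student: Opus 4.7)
The plan is to work with $M = A D^{-1}$, where $A$ is the Jurkat-Ryser $d$-matrix associated with $M$ by Theorem~\ref{extremes to} and $D = \mathrm{diag}(d_0, \dots, d_0, d_1)$, so that each claim about $M$ reduces, via $M_{i,j} = A_{i,j}/d_j$, to a statement about the corresponding line of $A$. Three facts drive the analysis: the entry restriction $A_{i,j} \in \{0, d_0, d_1, d_0 - d_1\}$ from Lemma~\ref{lemma J-R A}(a); the column-sum constraint $\sum_i A_{i,j} = d_j$ (from column-stochasticity of $M$); and the row-sum constraint $\sum_j A_{i,j} = d_i$ (from $Md = d$). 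I would prove the five properties by enumerating, line by line, the non-negative integer combinations of admissible entries that realize the required marginal.

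For (d) and (a), the key observation is that the entries in the last column and last row are sharply bounded by $d_1$: specifically $A_{i,n} \leq s^1_n = d_1$ and $A_{n,j} \leq r^1_n = d_1$, so $d_0$ is excluded and only $\{0, d_1, d_0 - d_1\}$ remains. With line sum $d_1$, the diophantine equation $a\, d_1 + b\, (d_0 - d_1) = d_1$ admits the generic solution $(a, b) = (1, 0)$, i.e.\ a single $d_1$ entry. Dividing by $d_n = d_1$ produces a single $1$ in $M$'s last column (giving (d)); dividing the row-$n$ entry by $d_j$ produces either $M_{n,n} = 1$ or $M_{n,j} = \gamma$ for a unique $j < n$ (giving (a)).

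For (b), (c), and (e), I would apply the same arithmetic to lines of total $d_0$ after subtracting a known entry: an $A_{i,n} = d_1$ in (b), an $A_{i,j_0} = d_1$ with $j_0 < n$ in (c), and an $A_{i_0, j} = d_0 - d_1$ in (e). The residual sum is $d_0 - d_1$ (for (b) and (c)) or $d_1$ (for (e)), with entries drawn from the full set $\{0, d_0, d_1, d_0 - d_1\}$; the generic non-negative integer solutions are a single $d_0 - d_1$ or $d_1$ respectively. Translating back yields the stated companion entry ($1 - \gamma$ or $\gamma$), and its placement constraint (away from the last column in (b) and (c); in a different row in (e)) follows from the restriction that entries at index~$n$ lie in $\{0, d_1\}$.

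The main obstacle is ruling out the arithmetically admissible but extremally impossible decompositions that arise for special $d_0/d_1$ ratios (for example, two $(d_0 - d_1)$ entries summing to $d_1$ when $d_0 - d_1$ divides $d_1$, or multiple $d_1$ entries summing to $d_0$ when $d_0/d_1$ is an integer larger than $1$). I plan to dispose of these by invoking extremality of $M$: any such spurious configuration forces a $2 \times 2$ rectangular pattern of positive entries elsewhere in $A$ (required by the column- or row-sum constraints), producing a $4$-cycle in the bipartite support graph of $A$. Since extreme points of transportation polytopes have forest (acyclic) support, such a cycle yields a signed perturbation $N$ with vanishing row and column $d$-sums, so that $M \pm \varepsilon N$ is $d$-stochastic for small $\varepsilon > 0$, contradicting extremality.
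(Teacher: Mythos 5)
Your overall strategy --- pass to the Jurkat--Ryser matrix $A=MD$, combine the entry restriction of Lemma~\ref{lemma J-R A}(a) with the row- and column-sum constraints, and kill the leftover cases by extremality --- is genuinely different from the paper's proof, which instead tracks the \emph{history} of the Jurkat--Ryser algorithm: once a row residual becomes $d_0-d_1$ (i.e.\ column-connected to $d_1$ in the sense of Definition~\ref{connection def}), the structure established in the proof of Lemma~\ref{lemma J-R A} forces its very next nonzero entry to equal $d_0-d_1$ and to exhaust the row, which excludes a second $\gamma$ in a row (and, dually, a second $1-\gamma$ in a column) for \emph{every} ratio $d_0/d_1$ at once. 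Your treatment of (a) and (d) is fine provided you use the fact, proved inside Lemma~\ref{lemma J-R A}(b), that the last column of $A$ (and, by the transpose symmetry invoked in Theorem~\ref{extremes to}, the last row) has entries in $\{0,d_1\}$; with only the weaker bound $A_{n,j}\leq d_1$ you would additionally have to exclude last-row entries equal to $d_0-d_1$ whenever $d_0<2d_1$, which lands you back in the problematic degenerate cases.

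The genuine gap is in your final paragraph. The spurious decompositions you flag (e.g.\ a row of sum $d_0$ consisting of $k$ copies of $d_1$ when $d_0=kd_1$) do \emph{not} in general force a $2\times 2$ rectangle of positive entries: for $d^\downarrow=(3,3,3,3,1)/13$ and first row of $A$ equal to $(0,d_1,d_1,0,d_1)$, the completion $A_{2,2}=A_{3,3}=A_{4,4}=d_0-d_1$, $A_{2,4}=A_{3,1}=A_{4,1}=A_{5,1}=d_1$ has the correct margins and admissible entries, no two of its rows share two columns of common support, and its shortest support cycle has length eight. So no $4$-cycle need appear, and the signed-perturbation step cannot be run on a $2\times 2$ block. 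What your argument actually requires is that \emph{every} completion of such a spurious line to a matrix with margins $(d,d)$ and entries in $\{0,d_0,d_1,d_0-d_1\}$ contains \emph{some} cycle in its bipartite support graph; this is a nontrivial global claim (the columns receiving the extra $d_1$'s each still need $d_0-d_1$ of mass from the remaining rows, and one must show the contributing row sets cannot be kept disjoint all the way down without eventually closing a cycle). As written, the proposal asserts the conclusion of that combinatorial argument while supplying an incorrect mechanism in its place. Either provide the general cycle-forcing argument, or import the connection bookkeeping of Definition~\ref{connection def}, which disposes of all degenerate ratios uniformly and is how the paper proceeds.
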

\begin{proof}
\begin{enumerate}[label=(\alph*)]
\item This follows from Lemma \ref{lemma J-R A} and the fact that $Md=d$.
\item Take $R_i$ the $i$-th row for $1 \leq i < |\Omega|$ and assume it has a one in the last column. Since $Md=d$, there must be another entry in $R_i$ that is neither $0$ nor $1$. Moreover, there can be no $\gamma$ in $R_i$. To show this, we consider two cases:
\begin{enumerate}[label=(b.\arabic*)]
\item There is one $\gamma$ in $R_i$ that was generated before the $1$ at step $m$. Then, arguing as in Lemma \ref{lemma J-R A}, $\gamma$ must come from the comparison of $r^m_i=d_0$ and $s_j^m=d_1$ to avoid filling the row with zeros. As a result, we have $r^{m+1}_i=d_0-d_1$ and, by the argument in Lemma \ref{lemma J-R A}, the next non-zero entry in $R_i$ will be $1-\gamma$. However, this scenario is impossible since the existence of the $1$ contradicts the fact that $Md=d$.% $d$-stochasticity.
\item There is one $\gamma$ in $R_i$ that was generated after the $1$. If we assume the $1$ was generated at step $m$,
%Conversely, assume now that the $1$ was generated before $\gamma$.
then, arguing as in Lemma \ref{lemma J-R A},
it must come from the comparison of $r^m_i=d_0$ and $s_j^m=d_1$ to avoid filling the row with zeros and, furthermore, the next non-zero entry in the row will be $1-\gamma$. However, this scenario is impossible since the existence of the $\gamma$ contradicts the fact that $Md=d$.
%, with the existence of $\gamma$ contradicting $d$-stochasticity.
\end{enumerate}
In summary, by Lemma \ref{lemma J-R A} $(b)$, a row with the $1$ in the last column has a $1-\gamma$ somewhere and the rest are zeros.

\item Take $R_i$ the $i$-th row for $1 \leq i < |\Omega|$ and assume it has a $\gamma$, which cannot be in the last column by Lemma \ref{lemma J-R A} $(b)$. Since $R_i$ is not the last row and $Md=d$, then there must be another non-zero entry in $R_i$. Moreover, $R_i$ cannot have another $\gamma$. If there were another $\gamma$, consider the $\gamma$ that appeared first at step $m$. Arguing as in Lemma \ref{lemma J-R A}, it must come from the comparison of $r^m_i=d_0$ and $s_j^m=d_1$ to avoid filling the row with zeros. Thus, the next non-zero entry in the row will be a $1-\gamma$ and the existence of a second $\gamma$ contradicts the fact that $Md=d$. Lastly, $R_i$ cannot have a $1$. This is the case since, if the one was not on the last column, then this would contradict the fact that $Md=d$. Furthermore, if it were on the last column, we can follow the proof of $(b)$.
%it was not on the last component then we can consider two cases:
%\begin{enumerate}[label=(b.\arabic*)]
%\item The $1$ appeared before $\gamma$ at step $m$. In this case, it must come from the comparison of $r^m_i=d_0$ and $s_j^m=d_1$ to avoid filling the row with zeros. Hence, $r^{m+1}_i=d_0-d_1$ and, by the argument in Lemma \ref{lemma J-R A}, the next non-zero entry in $R_i$ is $1-\gamma$ generate
%\item 
%\end{enumerate}
In summary, since $Md=d$, a row which is not the last one and has $\gamma$ will also have another a $1-\gamma$ and the rest zeros.
%if it is not the last row and only zeros otherwise.
%If there is a $1-\gamma$, then we have finished by $d$-stochasticity. If there is another $\gamma$, then consider the one that appeared first. It must come from $(d_0,d_1)$ and it must become $(d_0-d_1,0)$ to avoid filling the row with zeros. Thus, the next non-zero entry in the row will be $1-\gamma$ and the existence of another $\gamma$ contradicts $d$-stochasticity. In summary, a row with a $\gamma$ entry will have another entry with a $1-\gamma$ and the rest zeros.

\item This follows from Lemma \ref{lemma J-R A} $(b)$ plus the fact that $M$ is a stochastic matrix.

\item Take $C_i$ the $i$-th column for $1 \leq i < |\Omega|$ a column with a $1-\gamma$. (It cannot be the last column by $(b)$.)  Since $M$ is stochastic, there must be another non-zero entry which cannot be a $1$. Moreover, there can be no other $1-\gamma$ in $C_i$. To show this, let us consider
%If there were one, consider
the $1-\gamma$ that appeared first at step $m$. Arguing as in Lemma \ref{lemma J-R A}, such $1-\gamma$ must come from the comparison of $r^m_i=d_0-d_1$ and $s_j^m=d_0$ to avoid filling the column with zeros. Thus, the next non-zero entry in the row will be a $\gamma$. However, the existence of a second $1-\gamma$ in $C_i$ contradicts the fact that $M$ is stochastic. In summary, by Lemma \ref{lemma J-R A} $(b)$, a column with a $1-\gamma$ entry will have another entry with a $\gamma$ and the rest zeros.
\end{enumerate}
This concludes the proof.
\end{proof}

Now that we have identified the possible lines $M$ may have, we will establish how they are positioned in $M$ relative to each other in the following lemma.

\begin{lemma}
\label{distribution lines M}
    If $d=(d_0,\dots,d_0,d_1) \in \mathbb P_\Omega$, $d_1<d_0$, $\gamma=d_1/d_0$ and $M$ is an extreme point of the thermal operations polytope, then one of the following holds:
    \begin{enumerate}[label=(\alph*)]
\item $M$ is a permutation matrix with $M_{|\Omega|,|\Omega|}=1$.
\item There exists a sequence of pairs $Q=((i_k,j_k))_{k=0}^{t_0}$ such that 
    \begin{equation}
    \label{recursion M}
  \begin{alignedat}{3}
&j_0=|\Omega|, \text{ }&&M_{i_0,j_0}=1, &&\\
 &M_{i_{k-1},j_{k}}= 1-\gamma , \text{ }&&M_{i_k,j_k}=\gamma, &&\text{ if } 1 \leq k < t_0,\\
 &M_{i_{t_0-1},j_{t_0}} = 1-\gamma, \text{ }&& M_{i_{t_0},j_{t_0}} = \gamma, &&  
  \end{alignedat}
\end{equation}
where $i_{t_0}=|\Omega|$, $1 \leq j_{t_0} < |\Omega|$, and $0<t_0<\infty$. Moreover, the submatrix $M_0 = M \setminus M[i_{0},\dots,i_{t_0}; j_0,\dots,j_{t_0}]$ is a permutation matrix.\footnote{If $M \in \mathcal M_{|\Omega|,|\Omega|}(\mathbb R)$, then $M_0$ is a \emph{submatrix} of $M$, denoted 
\begin{equation*}
M_0=M \setminus M[a_1,\dots,a_n;b_1,\dots,b_m],    
\end{equation*}
if it is equal to $M$ after eliminating rows $\{a_1,\dots,a_n\}$ and columns $\{b_1,\dots,b_m\}$ with $1 \leq a_i,b_j \leq |\Omega|$ for $1 \leq i \leq n$ and $1 \leq j \leq m$
\cite{jurkat1967term}.}
\end{enumerate}
\end{lemma}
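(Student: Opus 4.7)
The plan is to combine Lemma \ref{lines of M} with the constraint $Md=d$ and column stochasticity to catalogue possible rows, then split on the type of the last row and use extremality to forbid any closed alternating cycles. Putting the line classifications together, each non-last row of $M$ is of exactly one of three types: (R1) a single $1$ in some column $<|\Omega|$; (R2) a $1$ at column $|\Omega|$ together with a $1-\gamma$ elsewhere; or (R3) a $\gamma$ and a $1-\gamma$ in two columns $<|\Omega|$. The last row is either (L1) $M_{|\Omega|,|\Omega|}=1$ with the rest zero, or (L2) a single $\gamma$ at $(|\Omega|,j_{t_0})$ for some $j_{t_0}<|\Omega|$, and every non-last column is either a single $1$ or a $(\gamma,1-\gamma)$ pair.

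I would first treat case (L1), which should yield case $(a)$ of the statement. Here no R2 row exists, since the unique $1$ of column $|\Omega|$ is already placed at $(|\Omega|,|\Omega|)$. If some non-last row were R3, pick any of its $1-\gamma$ entries $M_{i_0,k_0}$; by Lemma \ref{lines of M}$(e)$ the column $k_0$ has a unique $\gamma$ at some row $i_1$, and by Lemma \ref{lines of M}$(c)$ row $i_1$ has a unique further $1-\gamma$ at some column $k_1$, and so on. Each visited row and column is saturated with exactly one $\gamma$ and one $1-\gamma$, so the walk is forced and by finiteness must close into an even cycle of alternating $1-\gamma,\gamma$ entries. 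Along such a cycle I can add $+\epsilon$ to the $\gamma$-entries and $-\epsilon$ to the $1-\gamma$-entries (and vice versa) to form $M^{\pm}$; each affected row and column receives a cancelling pair of perturbations, so column stochasticity and $M^{\pm}d=d$ are preserved, and for small $\epsilon>0$ both matrices are non-negative $d$-stochastic matrices with $M=\tfrac12(M^++M^-)$. This contradicts extremality, so no R3 row can exist and $M$ is a permutation matrix.

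For (L2), Lemma \ref{lines of M}$(d)$ places the unique $1$ of the last column at some $(i_0,|\Omega|)$ with $i_0<|\Omega|$, making row $i_0$ type R2 with $M_{i_0,j_1}=1-\gamma$ for a unique $j_1<|\Omega|$. Running the same deterministic chain construction from $(i_0,j_1)$ produces a sequence in which column $j_k$ forces $M_{i_k,j_k}=\gamma$, and, if $i_k<|\Omega|$, row $i_k$ is R3 and forces $M_{i_k,j_{k+1}}=1-\gamma$. The chain cannot revisit any previous row or column (each is saturated) and it cannot close onto itself without producing a cycle excluded by the perturbation argument of (L1); being confined to finitely many indices, it must therefore reach the last row, where the unique $\gamma$ of L2 closes it at $(|\Omega|,j_{t_0})$. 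This produces exactly the data $Q=((i_k,j_k))_{k=0}^{t_0}$ satisfying \eqref{recursion M}, with $j_0=|\Omega|$, $i_{t_0}=|\Omega|$, and $j_{t_0}<|\Omega|$.

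It remains to show that $M_0=M\setminus M[i_0,\ldots,i_{t_0};j_0,\ldots,j_{t_0}]$ is a permutation matrix. Any row of $M$ outside the chain cannot be R2 (the column $|\Omega|$ is used by $i_0$) and cannot be R3, for otherwise a new deterministic alternating chain would start from it; being unable to reach the last row (already consumed by L2) and unable to merge with the existing chain without reusing saturated indices, the new chain would close into a cycle, contradicting extremality exactly as in (L1). Hence every non-chain row is R1, and a parallel argument on the non-chain columns shows each has a single $1$; a dimension count then gives that $M_0$ is $(|\Omega|-t_0-1)\times(|\Omega|-t_0-1)$ with one $1$ per row and per column, i.e., a permutation matrix. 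The main obstacle throughout is the cycle-perturbation step: one must carefully define an alternating signed matrix supported on the cycle and verify that $M\pm\epsilon N$ lies strictly inside $\mathcal P_{\text{TO}}(d)$ for small $\epsilon>0$, which is what ultimately extracts the rigid combinatorial structure from the extremality hypothesis.
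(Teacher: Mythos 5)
Your proof is correct in substance but takes a genuinely different route from the paper's. The paper never invokes extremality directly at this stage: it defines the chain $Q$ deterministically from the uniqueness clauses of Lemma \ref{lines of M}, rules out cycles by the purely combinatorial observation that row $i_0$ contains no $\gamma$ (so the chain has an entry with no predecessor and can never close on itself), and then concludes that the off-chain block is a permutation matrix by appealing to the Jurkat--Ryser bookkeeping --- an entry is in $\{\gamma,1-\gamma\}$ only if it is ``connected to $d_1$'', and the connected entries are exactly those of $Q$. You instead extract the same rigidity from a Birkhoff-style alternating-cycle perturbation: any closed alternating $(\gamma,1-\gamma)$ cycle supported in columns $1,\dots,|\Omega|-1$ admits a $\pm\epsilon$ perturbation, and this preserves $\mathcal P_{\text{TO}}(d)$ precisely because all those columns carry the same weight $d_0$, so both the column sums and $Md=d$ survive --- a point you rightly flag as the crux and which does check out. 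Your approach buys independence from the ``connection to $d_1$'' machinery and is closer to the classical proof of Theorem \ref{birkhoff}; the paper's buys a shorter argument given that Theorem \ref{extremes to} has already been paid for. One small gloss to tighten: the trichotomy R1/R2/R3 for non-last rows is not literally forced by Lemma \ref{lines of M} alone --- for special values such as $\gamma=(k-1)/k$ a row consisting of $k$ copies of $1-\gamma$ and no $\gamma$ is not excluded by clauses (b)--(c) --- but your own perturbation argument closes this too, since every row and every column supported on non-integral entries in columns $<|\Omega|$ then has degree at least two in the associated bipartite graph except for row $i_0$ and row $|\Omega|$, forcing either a forbidden cycle or the single path $Q$.
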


\begin{proof}
%Following the argument in Lemma \ref{lemma J-R A}, we have that and,
By Lemma \ref{lines of M} $(d)$, $M$ must have a $1$ and the rest zeros in the last column. In case the $1$ is in the last row, then, following Lemma \ref{lemma J-R A}, it was introduced at some step $m_0$ comparing $r^{m_0}_n=d_1$ with $s^{m_0}_n=d_1$ and, for all $m \neq m_0$, both $r^m_i$ and $s^m_j$ are not connected to $d_1$. Hence, $M$ is a permutation matrix with $M_{n,n}=1$, as stated in $(a)$.

Assume now that the $1$ in the last column of $M$ is in some row $R_{i_0}$ with $1 \leq i_0 < n$. We will show that $(b)$ holds. In particular, in this scenario, we define the sequence of pairs $Q=((i_k,j_k))_{k=0}^{t_0}$, where we take $i_0$ as in the previous line and $j_0=n$. Moreover, we define, for all $k \geq 1$, $j_k$ such that the column $C_{j_k}$ has a $1-\gamma$ in row $i_{k-1}$, and $i_k$ such that the row $R_{i_k}$ has a $\gamma$ in column $C_{j_k}$. We follow this procedure until we reach some $t_0>0$ for which $j_{t_0+1}$ it is no longer defined. To conclude that \eqref{recursion M} holds, we check the following properties:
\begin{enumerate}[label=(b.\arabic*)]
\item There exists some $t_0>0$ for which $Q$ is well-defined. To show this, we rely on Lemma \ref{lines of M}. In particular, there exists a single $1-\gamma$ in row $R_{i_0}$ by Lemma \ref{lines of M} $(b)$. Moreover, for all $k \geq 1$, there exists a single $\gamma$ in column $C_{j_k}$ by Lemma \ref{lines of M} $(e)$. Furthermore, while $i_{k} < n$, there exists a single $1-\gamma$ in row $R_{i_k}$ by Lemma \ref{lines of M} $(c)$.
\item $t_0< \infty$. To show this, we first note that we never repeat a pair $(i_{k_0},j_{k_0})=(i_{k_1},j_{k_1})$ for $0 \leq k_0 < k_1$. If that were the case, then we would have $(i_{k_0-v},j_{k_0-v})=(i_{k_1-v},j_{k_1-v})$ for all $0 \leq v \leq k_0$ by the uniqueness properties in Lemma \ref{lines of M}. However, $i_0 \neq i_k$ for all $k>0$, since there is no $\gamma$ in the $i_0$ row by Lemma \ref{lines of M} $(b)$. Hence, since there are no repeated pairs in $Q$ and the number of $\gamma$ in $M$ is finite, there exists some $t_0 > 0$ such that $i_{t_0}=n$. (This is the case since we are not in scenario $(a)$ and Lemma \ref{lines of M} $(a)$ holds.) Thus, since there is no $1-\gamma$ in row $R_n$ by Lemma \ref{lines of M} $(a)$, $j_{t_0+1}$ is not defined.
\end{enumerate}   
To conclude the proof, we only ought to show that the submatrix $M_0$ is a permutation matrix. To show this, we first notice that $M_{i,j}$ is connected to $d_1$ if and only if $(i,j) \in Q$. Hence, $(M_0)_{i,j} \in \{0,1\}$. Moreover, whenever we have $i=i_k$ and $j \neq j_k$ (or vice versa) for some $0 \leq k \leq t_0$, then $M_{i,j}=0$. Hence, given that $M$ is a stochastic matrix, we have that $M_0$ is a permutation matrix.
\end{proof}

(We include an extreme of the thermal polytope that fulfills Lemma \ref{distribution lines M} $(b)$ in Figure \ref{fig:matrix alg}.)

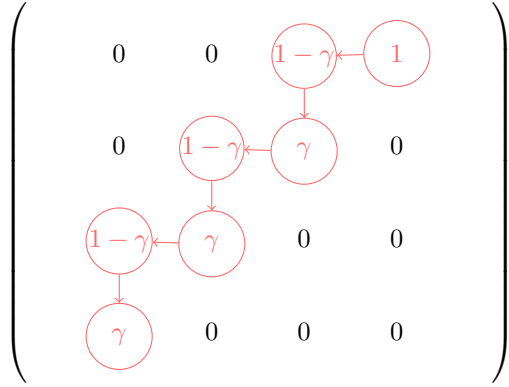
\begin{figure}[t]
    \centering
\begin{tikzpicture}[every node/.style={minimum size=3.5em}]
  \matrix (m) [matrix of math nodes, nodes in empty cells,
  left delimiter={(}, right delimiter={)}] at (0,0)
  {
    0 & 0 & |[draw,red!60,circle,inner sep=0pt,minimum size=2em]| 1-\gamma & |[draw,red!60,circle,inner sep=0pt,minimum size=2.5em]| 1\\
    0 & |[draw,red!60,circle,inner sep=0pt,minimum size=2em]| 1-\gamma & |[draw,red!60,circle,inner sep=0pt,minimum size=2.5em]| \gamma & 0 \\
    |[draw,red!60,circle,inner sep=0pt,minimum size=2.5em]| 1-\gamma & |[draw,red!60,circle,inner sep=0pt,minimum size=2.5em]| \gamma & 0 & 0 \\
    |[draw,red!60,circle,inner sep=0pt,minimum size=2.5em]| \gamma & 0 & 0 & 0 \\
  };

  \draw[red!60,->] (m-1-4) --  (m-1-3);
  \draw[red!60,->] (m-1-3) -- (m-2-3) ;
  \draw[red!60,->] (m-2-3) -- (m-2-2) ;
  \draw[red!60,->] (m-2-2) -- (m-3-2) ;
  \draw[red!60,->] (m-3-2) -- (m-3-1) ;
  \draw[red!60,->] (m-3-1) -- (m-4-1) ;
\end{tikzpicture}
    \caption{Extreme of the thermal polytope that fulfills Lemma \ref{distribution lines M} $(b)$ for $|\Omega|=4$. We highlight in red the components that correspond to the $Q$ associated to $M$ and connect them with arrows that join one step in the recursion \eqref{recursion M} with the following one. Several examples with $|\Omega|=3$ can be found in \eqref{extremes poly dim 3}.}
    \label{fig:matrix alg}
\end{figure}

Now that we know the structure of $M$, we conclude the proof defining a product of etos $N_2$ such that $M=N_2$. We consider two cases:
 \begin{enumerate}[label=(\alph*)]
\item If we are in the scenario of Lemma \ref{distribution lines M} $(a)$, then $M$ is a permutation matrix that acts as the identity on the last column. Hence, by Theorem \ref{birkhoff}, there exists a sequence of pairs $((x_k,y_k))_{k=0}^{t_2}$ with $x_k,y_k<n$ for $0\leq k \leq t_2$, such that
\begin{equation*}
    M=N_2 \coloneqq \prod_{k=0}^{t_2} P^d(x_k,y_k).
\end{equation*}

\item If we are in the scenario of Lemma \ref{distribution lines M} $(b)$, then we start by defining
\begin{equation*}
N_0 \coloneqq \prod_{k=0}^{t_0} P^d(j_k,n),
\end{equation*}
 % \begin{alignedat}{2}
%N_0 &= P^d(j_0,n), && \\
 % N_k &= P^d(j_k,n) N_{k-1}, &&\text{ if }1 \leq  k \leq t_0,
 % \end{alignedat}
%\end{equation}
    where $(j_k)_{k=0}^{t_0}$ are the second components of the pairs in $Q$ from Lemma \ref{distribution lines M}.
    
    It is not difficult to see by induction that any product
    \begin{equation*}
        \prod_{q=1}^{q_0} P^d(u_q,n),
    \end{equation*}
     where $u_q < n$ for $1 \leq q \leq q_0$ and $u_q \neq u_{q'}$ whenever $q \neq q'$, fulfills the following relation between rows:
Row $R_{u_1}$ has a $1$ in column $C_n$ and a $1-\gamma$ in column $C_{u_1}$. Moreover, row $R_{u_k}$ has a $\gamma$ in column $C_{u_{k-1}}$ and a $1-\gamma$ in column $C_{u_{k}}$
for $1<k\leq q_0$. Finally, row $R_n$ has a $\gamma$ in column $u_{q_0}$.
(Furthermore, the entries that are not mentioned for each row are zero.)

Hence, although potentially in different rows, $N_0$ possesses one horizontal line equivalent to each $R_{i_k}$ for $1 \leq k \leq t_0$, where $(i_k)_{k=0}^{t_0}$ are the first components of the pairs in $Q$ from Lemma \ref{distribution lines M}.
 Thus, there exists a sequence $(o_k)_{k=0}^{t_0-1}$ with $o_k<n$ for $0\leq k \leq t_0-1$ such that
\begin{equation*}
N_1 \coloneqq \left(\prod_{k=0}^{t_0-1} P^d(j_k,o_k)\right) N_0
\end{equation*}
coincides with $M$ inside $[i_{0},\dots,i_{t_0}; j_0,\dots,j_{t_0}]$.

To conclude, note that, since $N_1 \setminus N_1[i_{0},\dots,i_{t_0}; j_0,\dots,j_{t_0}]$ and $M_0$ are permutation matrices, there exists a sequence of pairs $((a_k,b_k))_{k=0}^{t_1}$ with $a_k,b_k<n$ for $0\leq k \leq t_1$, such that
\begin{equation*}
    M=N_2 \coloneqq \left(\prod_{k=0}^{t_1} P^d(a_k,b_k)\right) N_1.
\end{equation*}
\end{enumerate}
This concludes the proof.
\end{proof}

\begin{rem}
As a result of Theorem \ref{equi to eto poly}, elementary thermal operations with protocols that can be conditioned using random variables are universal if and only if $d^\downarrow = (d_0,\dots,d_0,d_1)$. Moreover, given some thermal operation $M \in \mathcal M_{|\Omega|, |\Omega|}(\mathbb R)$, then Theorem \ref{equi to eto poly} together with linear programming (see the comment below Theorem \ref{polytopes relation}) gives us an algorithm to realize $M$ via strong elementary thermal operation provided $d^\downarrow =(d_0,\dots,d_0,d_1)$. Up to permutations in $d$, the algorithm can be summarized as follows:
    \begin{enumerate}[label=(\alph*)]
        \item Input $0<d =(d_0,\dots,d_0,d_1) \in \mathbb P_\Omega$ and $M \in \mathcal P_{\text{ETO}}(d)$.
        \item Calculate the decomposition of $M$ in terms of the extremes of $\mathcal P_{\text{ETO}}(d)$ using linear programming.
        \item Calculate $Q$ as in the proof of Theorem \ref{equi to eto poly} for each extreme obtained in (b) and use it, following again Theorem \ref{equi to eto poly}, to obtain a sequence of $d$-swaps that (up to permutations) yield the extreme.    
        \item Find the sequence of swaps that are lacking in (c) in order to achieve each extreme via $d$-swaps. 
        \item Output a convex combination of products of $d$-swaps that yield $M$. 
    \end{enumerate}
\end{rem}

Note that Theorem \ref{equi to eto poly} can be used to obtain Theorem \ref{polytopes relation} directly, although it requires using the tools developed in \cite{jurkat1967term,hartfiel1974study}. Also, by using a permutation matrix $Q \in \mathcal M_{|\Omega|, |\Omega|}(\mathbb R)$ such that $d^\downarrow=Qd$, we can get results analogous to Lemmas \ref{lemma J-R A}, \ref{lines of M} and \ref{distribution lines M} provided $d^\downarrow = (d_0,\dots,d_0,d_1)$.

\subsection{Universality of weak elementary thermal operations}
\label{strong uni WETO}

Although, as we have shown, the relation between the ETO and TO polytopes is analogous to that of their polytopes, the situation is quite different when WETO enters the picture. This was already noticed for the uniform case in Theorem \ref{uniform weto = to}, which we generalize in the following theorem.  

\begin{theo}[Equivalence TO and WETO polytopes]
\label{equi to weto poly}
If $0<d \in \mathbb P_\Omega$, then the following statements are equivalent:
\begin{enumerate}[label=(\alph*)]
     \item The thermal operations polytope is equal to the weak elementary thermal operations 
polytope
\begin{equation*}
    \mathcal P_{\text{TO}}(d) = \mathcal P_{\text{WETO}}(d).
\end{equation*}
    \item $|\Omega| = 2$.
    %and the number of $d_0$ equal to $|\Omega|-1$.
 \end{enumerate}
\end{theo}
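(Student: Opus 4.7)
The direction $(b) \Rightarrow (a)$ is immediate from Lemma \ref{equiv dim 2}. For the reverse direction, I plan to argue by contrapositive, producing an explicit element of $\mathcal{P}_{\text{TO}}(d) \setminus \mathcal{P}_{\text{WETO}}(d)$ whenever $|\Omega| \geq 3$. By Lemma \ref{ordered d} I may assume $d = d^\downarrow$. Two subcases reduce to results already proved in the paper: if $d^\downarrow \neq (d_0, \ldots, d_0, d_1)$, then Theorem \ref{equi to eto poly} gives $\mathcal{P}_{\text{ETO}}(d) \subsetneq \mathcal{P}_{\text{TO}}(d)$, and the inclusion $\mathcal{P}_{\text{WETO}}(d) \subseteq \mathcal{P}_{\text{ETO}}(d)$ finishes that case; the uniform subcase $d_0 = d_1$ is Theorem \ref{uniform weto = to}. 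The remaining work is thus for $d = (d_0, \ldots, d_0, d_1)$ with $d_0 > d_1$ and $n := |\Omega| \geq 3$.

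For this residual case the candidate counterexample is
\begin{equation*}
M = \tfrac{1}{2}\bigl(P^d(1, 2) \, P^d(2, n) + P^d(2, n) \, P^d(1, 2)\bigr),
\end{equation*}
which is a convex combination of two sequences of $d$-swaps and therefore lies in $\mathcal{P}_{\text{ETO}}(d) = \mathcal{P}_{\text{TO}}(d)$ by Theorem \ref{equi to eto poly}. Setting $\gamma = d_1/d_0 \in (0, 1)$, direct multiplication shows that $M$ acts as the identity on rows $3, \ldots, n-1$ and, on the block spanned by levels $\{1, 2, n\}$, has vanishing diagonal together with off-diagonal entries $1 - \gamma/2$, $1/2$, and $\gamma/2$ in a cyclic-like pattern.

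The main obstacle is verifying $M \notin \mathcal{P}_{\text{WETO}}(d)$, which I would tackle by contradiction. Assume $M = T^d_{\lambda_1}(i_1, j_1) \cdots T^d_{\lambda_k}(i_k, j_k)$ with all factors nontrivial, and isolate the leftmost factor $T = T^d_{\lambda_1}(i_1, j_1)$, so $N := T^{-1} M$ must be $d$-stochastic (as a product of the remaining $T^d$-transforms). Computing $N$ row by row using the constraint that rows of $N$ outside $\{i_1, j_1\}$ coincide with those of $M$, while the two remaining rows are affine functions of $\lambda_1$, $\gamma_{i_1 j_1}$, and the corresponding rows of $M$, one sees that the many zeros of $M$ (notably the vanishing diagonal entries $M_{1,1}, M_{2,2}, M_{n,n}$ and the zero entries $M_{k, 1}, M_{k, 2}, M_{k, n}$ for $3 \leq k \leq n-1$) force $N$ to have some negative entry for every admissible $(i_1, j_1, \lambda_1)$ with the single exception $T = P^d(1, 2)$. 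In that surviving case, $N = P^d(1, 2) M$ exhibits the same structural pattern of zeros, and the same analysis applied to $N$ again forces its leftmost factor to be $P^d(1, 2)$, yielding an infinite regress that contradicts the finiteness of the original decomposition. For $n \geq 4$, factors $T^d_\lambda(i, j)$ with $\{i, j\} \cap \{3, \ldots, n-1\} \neq \emptyset$ need separate treatment: any such nontrivial factor would perturb one of the identity rows of $M$, and since no subsequent factor can restore the unit-vector row structure without introducing negative entries, these factors are also ruled out by the same support-based analysis.
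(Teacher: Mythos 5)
Your high-level structure matches the paper's (necessity from Lemma \ref{equiv dim 2}; contrapositive; reduction via Theorem \ref{equi to eto poly} and Theorem \ref{uniform weto = to} to the residual case $d^\downarrow=(d_0,\dots,d_0,d_1)$ with $d_0>d_1$), but your treatment of that residual case is genuinely different. The paper splits it in two: for $|\Omega|\geq 4$ it recycles the Marcus et al.\ block counterexample $M_0\bigoplus\mathbb I$ and shows every factor touching the last level must be trivial, reducing to Theorem \ref{uniform weto = to}; for $|\Omega|=3$ it uses the bespoke matrix \eqref{dim 3 counterex} together with the canonical form \eqref{product} obtained from \eqref{prod inte} and \eqref{commutation}. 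Your single counterexample $M=\tfrac12\bigl(P^d(1,2)P^d(2,n)+P^d(2,n)P^d(1,2)\bigr)$, with vanishing diagonal on the $\{1,2,n\}$ block, attacked by peeling off the leftmost factor and exploiting the zero pattern, is a uniform argument for all $n\geq 3$; I checked that for $n=3$ it does work: only $P^d(1,2)$ survives as a nontrivial leftmost factor, $P^d(1,2)M$ again only admits $P^d(1,2)$, and the regress contradicts the finiteness of the decomposition. This is arguably cleaner than the paper's two-case proof, so the approach is worth keeping.

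There are, however, two concrete problems in your verification. First, $T^d_\lambda(i,j)$ is singular at $\lambda=1/(1+\gamma_{i,j})$ (e.g.\ $T^d_{1/2}(1,2)$), so ``compute $N=T^{-1}M$'' is not always available. Write $M=TN$ instead and use only that $N$ is entrywise nonnegative: from $M_{i_1,\ell}=(1-\lambda\gamma_{i_1,j_1})N_{i_1,\ell}+\lambda N_{j_1,\ell}$ a zero of $M$ in row $i_1$ forces $N_{i_1,\ell}=N_{j_1,\ell}=0$ (when $\lambda\gamma_{i_1,j_1}<1$) and hence $M_{j_1,\ell}=0$; this covers the singular values of $\lambda$ as well. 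Second, and more seriously, your dismissal of factors meeting $\{3,\dots,n-1\}$ for $n\geq 4$ is incorrect: a transposition $P^d(i,j)$ between equal-weight levels $i,j\leq n-1$ passes every zero-pattern test, since it merely permutes two rows of $M$ and yields another nonnegative matrix, so such factors cannot be ``ruled out'' as leftmost factors and the claim that ``no subsequent factor can restore the unit-vector row structure'' is false ($P^d(3,4)P^d(3,4)=\mathbb I$). The correct conclusion of the peeling analysis is that \emph{every} nontrivial factor must be a transposition of levels in $\{1,\dots,n-1\}$ (the distinct zero sets of the rows of $M$ exclude everything else), whence the entire product would be a permutation matrix, contradicting the fact that $M$ is not one. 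With these two repairs your route closes.
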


\begin{comment}
\begin{theo}[Equivalence TO and WETO polytopes]
\label{equi to weto poly}
If $d \in \mathbb P_\Omega$, then the weak elementary thermal operations polytope is equal to the thermal operations polytope if and only if $|\Omega| = 2$.   
\end{theo}
\end{comment}

\begin{proof}
By Lemma \ref{ordered d}, it suffices to assume that $d=d^\downarrow$ throughout the proof.

Necessity follows from Lemma \ref{equiv dim 2}. We prove sufficiency by contrapositive.
In particular, if $d \neq (d_0,\dots,d_0,d_1)$ up to permutations, then the result holds already at the resource theory level by Proposition \ref{difference to eto}. We take, hence, $d = (d_0,\dots,d_0,d_1)$ with $d_1<d_0$. (The case where $d_1=d_0$ is known by Theorem \ref{uniform weto = to}.) To deal with this situation, we distinguish two cases
%$|\Omega|=3$ and $|\Omega| \geq 4$
and provide, for each of them, a $d$-stochastic matrix $M \in \mathcal M_{|\Omega|, |\Omega|}(\mathbb R)$ that cannot be decomposed as a product of $T^d$-transforms:
%we rely on \cite[Theorem 1]{marcus1984products}.
\begin{enumerate}[label=(\Alph*)]
\item $|\Omega| \geq 4$. In this scenario, the result follows as a consequence of the proof of Theorem \ref{uniform weto = to} (see \cite{marcus1984products}). In particular, we can take
\begin{equation*}
    M \coloneqq M_0 \bigoplus \mathbb{I}_{\setminus (1,\dots,|\Omega|-1|)},
\end{equation*}
with $M_0 \in \mathcal M_{|\Omega|-1, |\Omega|-1}(\mathbb R)$ 
%with $M_{|\Omega|,|\Omega|}=1$, $M_{i,j}=0$ if $i = |\Omega|$ ($j=|\Omega|$) and $j<|\Omega|$ ($i<|\Omega|$), and
a doubly stochastic matrix 
%$M_0 = M \setminus M[|\Omega|;|\Omega|]$
such that $(M_0)_{i,j}>0$ if $i \neq j$ and $(M_0)_{i,j}=0$ if $i = j$. (For instance, we could define $M_0$ as in \eqref{simple RW}.) It is clear that $M$ is a $d$-stochastic matrix. To conclude, assume that $M$ can be decomposed as a sequence of $T^d$-transforms
    \begin{equation}
    \label{M Td deco}
        M= \prod_{k=1}^{N} T^d_{\lambda_k}(i_k,j_k)
    \end{equation}
    and consider $N_0$ the first index such that $\lambda_{N_0}>0$ and $j_{N_0}=|\Omega|$. In this scenario, we have that 
    \begin{equation*}
      \left(\prod_{k=1}^{N_0} T^d_{\lambda_k}(i_k,j_k)\right)_{|\Omega|,|\Omega|}<1.  
    \end{equation*}
     It is then easy to see that this implies $(\prod_{k=1}^{N_1} T^d_{\lambda_k}(i_k,j_k))_{|\Omega|,|\Omega|}<1$ for $N_0 \leq N_1 \leq N$, which contradicts the definition of $M$. Hence, for $1 \leq k \leq N$, $j_k = |\Omega|$ implies $\lambda_k=0$, thus, $T^d_{\lambda_k}(i_k,j_k)= \mathbb I$.
    As a result, $M$ can be decomposed as a product of $T^d$ transforms if and only if $M_0$ can be decomposed as a product of $T$-transforms. The latter is, however, false due to the proof of Theorem \ref{uniform weto = to}.
\item $|\Omega|=3$. In this scenario, the argument above does not hold since Theorem \ref{uniform weto = to} requires $M_0$ to be at least a $3 \times 3$ matrix. In the spirit of the proof of Theorem \ref{uniform weto = to}, however, we take 
    \begin{equation}
    \label{dim 3 counterex}
        M \coloneqq \begin{pmatrix}
 1-\phi & \phi & 0 \\
\phi & 0 & \frac{1-\phi}{\gamma} \\
0 & 1-\phi & 1- \frac{1-\phi}{\gamma}
\end{pmatrix}
    \end{equation}
    with $1-\gamma < \phi < 1$. Note that $M$ is clearly a $d$-stochastic matrix.
    
    Since there exists some $0 \leq \lambda' \leq 1$ such that
\begin{equation}
\label{prod inte}
    \prod_{k=1}^{k_0} T^d_{\lambda_k}(i,j)= T^d_{\lambda'}(i,j),
\end{equation}
where $1 \leq i,j \leq 3$ and $0 \leq \lambda_k \leq 1$ for $1 \leq k \leq k_0$,
and, moreover, 
\begin{equation}
\label{commutation}
    T^d_\lambda (1,3)=P^d(1,2) T^d_\lambda (2,3) P^d(1,2)
\end{equation}
for any $0 \leq \lambda \leq 1$,
then $M$ can be decomposed as product of $T^d$-transforms if and only if there exist some $\ell,m \in \{0,1\}$ and $N \geq 0$ such that
\begin{equation}
\label{product}
    M= \left( T^d_{\lambda_A} (2,3) \right)^\ell \left(\prod_{k=1}^N T^d_{\lambda_k} (1,2) T^d_{\beta_k} (2,3) \right) \left( T^d_{\lambda_B} (1,2) \right)^m,
\end{equation}
where $0 \leq \lambda_A,\lambda_B,\lambda_k,\beta_k\leq 1$ for $1 \leq k \leq N$.
To conclude the argument, it is straightforward to check that this is not the case. (For instance, one can argue by direct calculation using the zeros in $M$.)

As a last remark, note that, in case $\phi=1$, then $M$ equals $P^d(1,2)$ and, hence, it belongs to the WETO polytope. Moreover, if $\phi=1-\gamma$, then $M$ also belongs to the WETO polytope. In particular, $M=P^d(1,2) P^d(2,3) P^d(1,3)$.
\end{enumerate}
This concludes the proof.
\end{proof}

\begin{rem}
As a result of Theorem \ref{equi to eto poly}, elementary thermal operations with deterministic protocols are universal if and only if $|\Omega|=2$. Hence, by Theorem \ref{equi to eto poly}, conditioning the protocols via random variables augments the instances where elementary thermal operations are universal.
\end{rem}

Note that, for each $1-\gamma< \phi <1$, $M$ in \eqref{dim 3 counterex} exemplifies that the WETO polytope is not convex in general since, aside from its extremes, the segment joining $P^d(1,2)$ and 
$P^d(1,2) P^d(2,3) P^d(1,3)$ does not belong to it by Theorem \ref{equi to weto poly}.

\section{Advantage of strong elementary thermal operations over their weak counterpart}
\label{advantage}

In this section, we make substantial progress regarding (Q3) and weak universality in Section \ref{weak uni adv} and fully answer it for the non-weak case in Section \ref{uni adv}.

\subsection{Weak universality advantage}
\label{weak uni adv}

%\subsection{Strong and weak elementary thermal resource theories}

As a result of Theorems \ref{polytopes relation} and \ref{wETO equiv}, we have the following relation between strong and weak elementary resource theories.

\begin{coro}
\label{equi rt eto weto}
    If $0<d \in \mathbb P_\Omega$ and $d^\downarrow = (d_0,\dots,d_0,d_1)$, then the resource theories of strong and weak elementary thermal operations are equal
    \begin{equation*}
    \text{RT}_{\text{ETO}}(d) = \text{RT}_{\text{WETO}}(d).
\end{equation*}
\end{coro}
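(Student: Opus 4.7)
The plan is to derive this corollary directly by chaining Theorems \ref{polytopes relation} and \ref{wETO equiv}, both of which are already proved in the excerpt. The hypothesis $d^\downarrow = (d_0,\dots,d_0,d_1)$ is exactly the condition appearing in both theorems, so each one separately identifies one of the elementary resource theories with the full thermal resource theory, and the corollary follows by transitivity.

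More concretely, first I would note the trivial inclusion $\text{RT}_{\text{WETO}}(d) \subseteq \text{RT}_{\text{ETO}}(d)$, which holds for every $0<d \in \mathbb P_\Omega$: any $T^d$-transform is by Definition \ref{d-swap td-trans} a convex combination of $\mathbb I$ and a $d$-swap, hence a (length-one) convex combination of products of $d$-swaps, and a finite product of such $T^d$-transforms is still in $\mathcal P_{\text{ETO}}(d)$ by distributing the products over the convex combinations; consequently any transition realizable in WETO is realizable in ETO. Next, under the hypothesis $d^\downarrow = (d_0,\dots,d_0,d_1)$, Theorem \ref{polytopes relation} yields $\text{RT}_{\text{ETO}}(d) = \text{RT}_{\text{TO}}(d)$ while Theorem \ref{wETO equiv} yields $\text{RT}_{\text{WETO}}(d) = \text{RT}_{\text{TO}}(d)$. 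Combining the two equalities with the inclusion above gives
\begin{equation*}
\text{RT}_{\text{WETO}}(d) \subseteq \text{RT}_{\text{ETO}}(d) = \text{RT}_{\text{TO}}(d) = \text{RT}_{\text{WETO}}(d),
\end{equation*}
forcing equality throughout.

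There is no real obstacle here since the heavy lifting was done in Section \ref{rt relas}; the only thing to double-check is that Lemma \ref{ordered d} is invoked (implicitly, via the two theorems) so that one may reduce to the case $d = d^\downarrow$ without loss of generality. The proof is therefore essentially a one-line deduction, and its main content is conceptual: under this special shape of the equilibrium distribution, the ability to form convex combinations of sequences of $d$-swaps provides no additional reachability beyond what deterministic sequences of $T^d$-transforms already achieve, even though at the polytope level (as Theorem \ref{equi to weto poly} shows) these two classes remain strictly different whenever $|\Omega| \geq 3$.
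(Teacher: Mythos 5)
Your proof is correct and matches the paper's own derivation: the corollary is stated there as a direct consequence of Theorems \ref{polytopes relation} and \ref{wETO equiv}, both of which identify their respective resource theory with $\text{RT}_{\text{TO}}(d)$ under the hypothesis $d^\downarrow = (d_0,\dots,d_0,d_1)$. The extra inclusion $\text{RT}_{\text{WETO}}(d) \subseteq \text{RT}_{\text{ETO}}(d)$ you note is harmless but not needed, since transitivity of the two equalities already gives the result.
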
 

To deal with the relation between these two theories more in general, we define the set of quasi-uniform distributions, which includes the hypothesis in Corollary \ref{equi rt eto weto}.

\begin{defi}[Quasi-uniform distribution]
\label{quasi-uniform def}
    $0<d \in \mathbb P_\Omega$ is quasi-uniform provided it has, at most, two different entries $d_i \in \{x,y\}$ for all $i \in \Omega$.
\end{defi}

Note that the distributions $d \in \mathbb P_\Omega$ fulfilling Definition \ref{quasi-uniform def} are those for which either $d^\downarrow = (d_0,\dots,d_0,d_1)$ with $d_0 \geq d_1$ or 
$d^\downarrow = (d_0,d_1,\dots,d_1)$ with $d_0 > d_1$.

The relevance of Definition \ref{quasi-uniform def} lies in the fact that, as we show in the following proposition, ETO and WETO resource theories do not coincide whenever $d$ is not quasi-uniform.

\begin{prop}[Difference ETO and WETO resource theories]
\label{difference weto eto}
If $0<d \in \mathbb P_\Omega$, then 
%any transition achievable by thermal operations is achievable by elementary thermal operations
the elementary thermal operations resource theory is equal to the weak elementary thermal operations 
resource theory
\begin{equation*}
    \text{RT}_{\text{ETO}}(d) = \text{RT}_{\text{WETO}}(d)
\end{equation*}
only if 
$d$ is quasi-uniform.
\end{prop}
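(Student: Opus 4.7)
My plan is to prove the contrapositive: assuming $d$ has at least three distinct values, I exhibit $p, q \in \mathbb P_\Omega$ with $q \in \mathcal C_d^{ETO}(p) \setminus \mathcal C_d^{WETO}(p)$. By Lemma~\ref{ordered d} I may take $d = d^\downarrow$, and I fix three indices $i < j < k$ with $d_i > d_j > d_k$.

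The starting point is the identity $\mathcal C_d^{ETO}(p) = \mathrm{conv}\bigl(\mathcal C_d^{WETO}(p)\bigr)$, valid for every $p$. Indeed, every product of $d$-swaps is a WETO sequence (set $\lambda = 1$ at each step), so $A(p) \coloneqq \{M p : M \text{ a product of } d\text{-swaps}\}$ is contained in $\mathcal C_d^{WETO}(p)$; by Definition~\ref{defi ETO} we have $\mathcal C_d^{ETO}(p) = \mathrm{conv}(A(p))$, and the inclusion $\mathcal C_d^{WETO}(p) \subseteq \mathcal C_d^{ETO}(p)$ closes the equality. Hence the proposition reduces to finding some $p$ whose WETO-reachable set fails to be convex. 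The construction I propose uses $p$ concentrated on level $i$ (or, if $|\Omega|>3$, on a small subset of $\{i,j,k\}$ after the obvious embedding), together with two short products $M_1, M_2$ of $d$-swaps involving all three indices in different orders; then $q = \tfrac12(M_1 p + M_2 p)$ lies in $\mathcal C_d^{ETO}(p)$ by construction. The three distinct values of $d$ make the three ratios $d_j/d_i$, $d_k/d_i$, $d_k/d_j$ pairwise incommensurate, and $q_1 = M_1 p, q_2 = M_2 p$ are chosen so that these asymmetries enter the coordinates of $q$ in incompatible ways.

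The main obstacle is showing $q \notin \mathcal C_d^{WETO}(p)$, i.e., ruling out every sequence of $T^d$-transforms mapping $p$ to $q$. The plan mirrors Case B of the proof of Theorem~\ref{equi to weto poly}: using commutation relations between $T^d$-transforms acting on overlapping level pairs and the restricted support of intermediate states (which is constrained because $p$ has small support and Lemma~\ref{monot eto} limits how support can evolve), any candidate WETO sequence reduces to a canonical alternating form in $T^d(i,j)$ and $T^d(j,k)$ transforms, whose few free parameters are then forced by the three nonzero components of $q$. With three distinct $d$-values these parameters become over-determined: matching two coordinates drives the third out of $[0,1]$, ruling out every sequence. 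In the quasi-uniform regime one of the ratios trivialises, the constraints become compatible, and Corollary~\ref{equi rt eto weto} recovers the equality. The hardest step, as in Theorem~\ref{equi to weto poly}, is closing the argument against arbitrarily long sequences rather than only short ones, which demands careful use of the commutation identities together with the support and $d$-majorization monotonicity that every $T^d$-transform obeys.
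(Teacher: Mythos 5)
Your reduction of the proposition to the non-convexity of $\mathcal C_d^{WETO}(p)$ for some $p$ is correct (the identity $\mathcal C_d^{ETO}(p)=\mathrm{conv}\bigl(\mathcal C_d^{WETO}(p)\bigr)$ holds for the reasons you give), and the contrapositive set-up matches the paper. The genuine gap is in the only hard step: ruling out \emph{every} sequence of $T^d$-transforms mapping $p$ to $q$. You propose to mirror Case~B of the proof of Theorem~\ref{equi to weto poly}, reducing an arbitrary WETO sequence to a canonical alternating form in $T^d(i,j)$ and $T^d(j,k)$ via commutation relations. But the identity used there, $T^d_\lambda(1,3)=P^d(1,2)\,T^d_\lambda(2,3)\,P^d(1,2)$ (equation \eqref{commutation}), holds precisely because two entries of $d$ coincide, so that $P^d(1,2)$ is a genuine transposition. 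In the regime you are in — three pairwise distinct values $d_i>d_j>d_k$ — none of the $d$-swaps on these levels is a permutation, the three families $T^d(i,j)$, $T^d(i,k)$, $T^d(j,k)$ cannot be collapsed to two, and no normal form with "few free parameters" exists. The over-determination argument therefore does not close against arbitrarily long sequences, which you yourself flag as the hardest step but do not resolve. A secondary concern: taking $p$ concentrated on a single level is at odds with the support-matching hypothesis needed to lift a three-level counterexample to larger $\Omega$ (the paper's Lemma~\ref{dim 3 to larger}(b) requires $\supp(p)=\supp(q)$), and the paper deliberately takes $p$ of full support subject to the inequalities \eqref{eto vs weto}.

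For comparison, the paper avoids the commutation route entirely. It first reduces to $|\Omega|=3$ by an embedding lemma, then chooses $p=(a,b,c)$ of full support satisfying \eqref{eto vs weto} so that the segment $\mathcal L(p)=\left[P^d(2,3)P^d(1,2)P^d(1,3)p,\;P^d(1,2)p\right]$ is an edge of $\mathcal C_d^{ETO}(p)$, and picks $q$ in the relative interior of that edge, not expressible as a pure product of $d$-swaps applied to $p$. The exclusion $q\notin\mathcal C_d^{WETO}(p)$ is then proved by an induction on the family $B_\ell$ of segments generated by applying swap products to the one genuinely mixing $T^d$-transform in a putative WETO path (Lemma~\ref{lemma weto vs eto}): none of these segments can coincide with $\mathcal L(p)$, so $q$ would have to be an extreme point of one of them, contradicting $0<\lambda<1$. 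Some argument of this geometric kind, rather than a normal-form reduction, is what your proof is missing.
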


\begin{proof}
    By Lemma \ref{ordered d}, it suffices to assume that $d=d^\downarrow$ throughout the proof.

    We argue by contrapositive, that is, we assume we have some $d \in \mathbb P_\Omega$ with three different entries and will show that, in this scenario, there exists some pair $p,q \in \mathbb P_\Omega$ such that $q$ is achievable from $q$ via (strong) elementary thermal operations but not by its weaker counterpart.

    We first note that we can restrict ourselves to the case where $|\Omega|=3$, as we show in the following lemma.

    \begin{lemma}
    \label{dim 3 to larger}
Take $0<d \in \mathbb P_\Omega$ and $0<d' \in \mathbb P_{\Omega'}$ such that $\Omega \subseteq \Omega'$ and $d'_i=d_i$ for all $i \in \Omega$. If $p,q \in \mathbb P_\Omega$, then the following statements hold:
\begin{enumerate}[label=(\alph*)]
\item If $q \in C^{ETO}_d(p)$, then $(q,0) \in C^{ETO}_{d'}((p,0))$, where $(r,0) \in \mathbb P_{\Omega'}$ and
    \begin{equation*}
         (r,0)_i \coloneqq \begin{cases}
    r_i, &\text {if } i \in \Omega\\
   0,&\text {if } i \in \Omega' \setminus \Omega.
    \end{cases}
    \end{equation*}
\item If $\text{supp} (p)=\text{supp} (q)$ and $q \not \in C^{WETO}_d(p)$, then $(q,0) \not \in C^{WETO}_{d'}((p,0))$.
\end{enumerate}
\end{lemma}

    \begin{proof}
        \begin{enumerate}[label=(\alph*)]
\item Straightforward.
\item For simplicity, we assume throughout the proof that $\text{supp} (p)=\text{supp} (q)=\Omega$. The same method can be followed whenever $\text{supp} (p)=\text{supp} (q) < \Omega$. Moreover,  by Lemma \ref{ordered d}, we assume $d'=(d')^\downarrow$ and $d=d^\downarrow$ .

We begin assuming $(q,0) \in C^{WETO}_{d'}((p,0))$ and argue by contradiction. By assumption, we have that
    \begin{equation}
    \label{path}
    (q,0) = \left( \prod_{k=1}^\ell T_{\lambda_k}^d(i_k,j_k)\right) (p,0) 
\end{equation}
for some $\ell \geq 0$ and $1 \leq i_k<j_k \leq |\Omega'|$ for $1 \leq k \leq \ell$. Note that, in the remainder of the proof, we use the notation
\begin{equation}
    t^m \coloneqq
    \begin{cases}
    (p,0), &\text {if } m=0\\
     \left( \prod_{k=1}^m T_{\lambda_k}^d(i_k,j_k)\right) (p,0),&\text{if } 1 \leq m \leq \ell.
    \end{cases}
\end{equation}
%for $1 \leq m \leq \ell$.

Since $q \not \in C^{WETO}_d(p)$, there must be some $1 \leq k_0 \leq \ell$ such that $i_{k_0} \not \in \Omega$ (or $j_{k_0} \not \in \Omega$). We fix $k_0$ the first index with this property such that $d'_{i_{k_0}} \neq d_m$ ($d'_{j_{k_0}} \neq d_m$) for all $m \in \Omega$. (We deal with the remaining cases later on.) Note we can assume w.l.o.g. that
\begin{equation}
\label{non-zero mass}
  t^{k_0}_s >0 \text{ for } s = i_{k_0} \text{ }(s = j_{k_0}).
\end{equation}
%\begin{equation}
%\label{non-zero mass}
 %  \left( \left( \prod_{k=1}^{k_0} T_{\lambda_k}^d(i_k,j_k)\right) (p,0) \right)_s >0
%\end{equation}
In particular, since $k_0$ is minimal by construction, we can assume that $d'_{j_{k_0}} = d_m$ ($d'_{i_{k_0}} = d_m$) for some $m \in \Omega$.

The existence of such a $k_0$ leads to contradiction. To show this, we consider the following two cases:
%To conclude the proof, we first deal with the case where $d'_s \neq d_j$ for all $j \in \Omega$. This assumption can be fulfilled in two different scenarios, which both lead to contradiction:
%The  Under this assumption, there are two possible subcases that we can show are impossible and first show the impossibility of the remaining two cases:
\begin{enumerate}[label=(\Alph*)]
\item $j_{k_0} \not \in \Omega$. By assumption, we have that $d'_{i_{k_0}} > d'_{j_{k_0}}$. In particular, by \eqref{non-zero mass}, we obtain that
\begin{equation*}
|\text{supp} (t^{k_0})|> |\text{supp} ((p,0))| = |\text{supp} ((q,0))|.    
\end{equation*}
This, together with \eqref{path}, contradicts Lemma \ref{monot eto}.
\item $i_{k_0} \not \in \Omega$. In this case, we have that $T_{\lambda_k}^d(i_k,j_k)$ is a $d$-swap to avoid contradicting Lemma \ref{monot eto}. Moreover, to avoid contradicting \eqref{path}, there must be some $k_0 < m_0 \leq \ell$ such that
\begin{equation}
\label{0 outside omega}
    t^m_{i_{k_0}}=0 \text{ for all } m \geq m_0.
\end{equation}
Hence, there must be some $i_{k_0}>w_0 \in \Omega$ and two sequences $(n_p)_{p=1}^N$, with $k_0 <n_p<n_{p+1}\leq \ell$ for all $p$, and $(v_p)_{p=1}^N \subseteq \Omega'$ such that $t^{n_{p+1}}_{v_{p+1}}>0$ and $t^{n_{p+1}}_{v_p}=0$, where $v_0=k_0$, $d'_{v_{p+1}} \geq d'_{v_p}$ by (A) and $v_N = w_0$. However, to avoid contradicting \eqref{0 outside omega}, we must have $t^{n_{N-1}}_{v_N}=0$. Despite this, $t^0_{v_N}>0$ since $\text{supp} (p)=\Omega$. Thus, we can argue analogously that there must be some $w_0>w_1 \in \Omega$ with similar properties to those of $w_0$ and $d_{w_1} > d_{w_0}$. (Note that we could have $d_{w_1} = d_{w_0}$. However, we could argue in a similar fashion that this would imply the existence of some $w_1>w_1' \in \Omega$ with equivalent properties and such that $d_{w'_1} > d_{w_1}=d_{w_0}$.) Following the argument recursively, we obtain an unbounded strictly decreasing sequence $(w_n)_{n \geq 0} \subseteq \Omega$. This contradicts the finiteness of $\Omega$.
%we must translate the mass from $i_{k_0}$ to a component $v \in \Omega$. Moreover, we can argue as in (A) that $d_{i_{k_0}} > d_v$ is impossible. Hence, since $d'_{i_{k_0}} \neq d_m$ for all $m \in \Omega$, we must have $d_v > d_{i_{k_0}}$. However, ruling out some potential intermediate mass transfers that do not affect the argument in any meaningful way, the only process that completely transfers the mass from $i_{k_0}$ to $v$
%If $i_k,j_k \neq i_{k_0}$ for $k_0 < k \leq \ell$, then we contradict \eqref{path}. Hence, one of them must coincide with $i_{k_0}$.
\end{enumerate}
By (A) and (B), if \eqref{path} holds, then, for all $1 \leq k \leq \ell$,
\begin{equation*}
    d'_{i_k},d'_{j_k} \in \{d_i| i \in \Omega\}.
\end{equation*}
We conclude noting two properties. First, whenever $d_{i_k} = d_{j_k}$ with $t^{k-1}_{i_k}>0$ and $t^{k-1}_{j_k}=0$
%$i_k \in \Omega$ and $j_k \not \in \Omega$ 
or vice versa, then, for all $1 \leq k \leq \ell$, $T_{\lambda_k}^d(i_k,j_k)$ in \eqref{path} must be a swap to avoid contradicting Lemma \ref{monot eto}. Second, whenever $d_{i_k} > d_{j_k}$, we must have $t^{k-1}_{i_k},t^{k-1}_{j_k}>0$  by the same reason. (If $t^{k-1}_{j_k}=0$ we can argue as in (A) and, if $t^{k-1}_{i_k}=0$, as in (B).) Using these properties, it is easy to see that \eqref{path} implies $q \in C^{WETO}_d(p)$. (In particular, for all $0 \leq m \leq \ell$, there exists some $r^m \in \mathbb P_\Omega$ such that $t^m=(r^m,0)$ up to permutations and $r^m \in \mathcal C^{WETO}_d(p)$.) This yields the desired contradiction and concludes the proof of (b).
\end{enumerate}
This concludes the proof.
    \end{proof}

(Note that one may not need the assumption on the support in Lemma \ref{dim 3 to larger}. However, this is not important for our purposes here.)
    
As a result of Lemma \ref{dim 3 to larger}, it suffices to fix $|\Omega|=3$ and find a pair $p,q \in \mathbb P_\Omega$ with $\text{supp} (p)=\text{supp} (q)=\Omega$ such that $q \in C^{ETO}_d(p) \setminus C^{WETO}_d(p)$. We conclude the proof showing such a pair exists.
    
In the following, we fix $|\Omega|=3$ and take  $d = (d_0,d_1,d_2)$ with $d_0>d_1>d_2$ (we can do so by assumption). Moreover,
we fix $\alpha= d_1/d_0$, $\beta=d_2/d_0$ and $\gamma=d_2/d_1$, and take some $p = (a,b,c) \in \mathbb P_\Omega$ such that
   \begin{equation}
   \label{eto vs weto}
   \begin{split}
   &\gamma b < c < \beta a, \text{ and } \\
   &\beta ((1-\alpha )a+ (1-\gamma) b +c) < c.
\end{split}
 \end{equation}
 (Note that both conditions in \eqref{eto vs weto} can be simultaneously satisfied. This is easy to see by assuming $b=0$ and noticing that, in this scenario, the conditions are fulfilled provided $1/(1+\beta)<a<1/(1+\tau \beta)$, where $\tau= (1-\alpha)/(1-\beta)$. It is then easy to get some $p$ fulfilling \eqref{eto vs weto} with $0<b$, that is, with $\text{supp}(p)=\Omega$.)

To conclude, we argue by contradiction that there exists some $q \in C^{ETO}_d(p)$ that cannot be achieved via weak elementary thermal operations. In order to do so, we first note that $C^{ETO}_d(p)$ is given by Figure \ref{covex hull last}. (Note that this can be achieved via direct calculation using \cite[Theorem 6]{lostaglio2018elementary}, or \cite[Theorem 5]{son2022catalysis} for simplicity,  and eliminating the non-extreme points later on.)

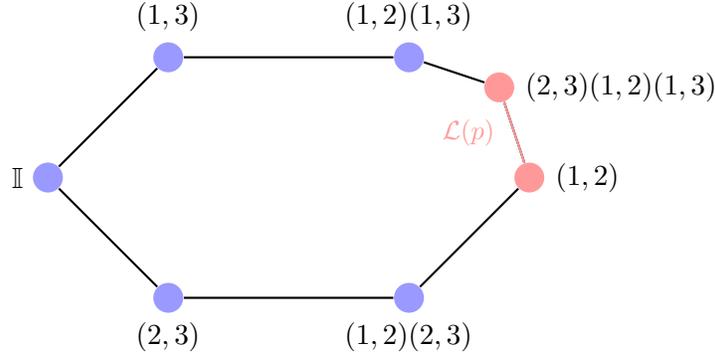
\begin{figure}[ht]
\centering
\begin{tikzpicture}[scale=0.8]
\node[small dot new,label={[yshift=0.01cm, thick, font=\fontsize{11}{11}\selectfont, thick]270:$(2,3)$}] at (-2,0) (1) {};
\node[small dot new,label={[yshift=0.01cm, thick, font=\fontsize{11}{11}\selectfont, thick]270:$(1,2) (2,3)$}] at (2,0) (2) {};
\node[small dot new,label={[xshift=0.01cm, thick, font=\fontsize{11}{11}\selectfont, thick]180:$\mathbb I$}] at (-4,2) (3) {};
\node[small dot new 3,label={[xshift=0.01cm, thick, font=\fontsize{11}{11}\selectfont, thick]0:$(1,2)$}] at (4,2) (4) {};
\node[small dot new,label={[yshift=0.01cm, thick, font=\fontsize{11}{11}\selectfont, thick]90:$(1,3)$}] at (-2,4) (5) {};
\node[small dot new,label={[yshift=0.01cm, thick, font=\fontsize{11}{11}\selectfont, thick]90:$(1,2) (1,3)$}] at (2,4) (6) {};
\node[small dot new 3,label={[xshift=0.01cm, thick, font=\fontsize{11}{11}\selectfont, thick]0:$(2,3)(1,2)(1,3)$}] at (3.5,3.5) (7) {};
\path[draw,thick,-]
    (1) edge node {} (2)
    (1) edge node {} (3)
    (2) edge node {} (4)
    (3) edge node {} (5)
    (4) edge node {} (7)
    (5) edge node {} (6)
    (7) edge node {} (6)
    ;

    \path[draw,thick,-,color=red!40]
    (4) edge node [midway, label=left:$\mathcal L(p)$] {} (7);
\end{tikzpicture}
\caption{Rough representation of the set $C^{ETO}_d(p)$ for $d=(d_0,d_1,d_2)$ with $d_1>d_1>d_2$ and $p=(a,b,c)$ fulfilling \eqref{eto vs weto}. We include in red $\mathcal L(p)$ as defined in \eqref{def L(p)}. Note that we use the notation in Figure \ref{covex hull 1}.}
\label{covex hull last}
\end{figure}

We will show there exists some 
\begin{equation}
\label{def L(p)}
    q \in \mathcal L(p) \coloneqq \left[P^d(2,3)P^d(1,2)P^d(1,3)p,P^d(1,2)p\right] \subseteq C^{ETO}_d(p)
\end{equation}
such that $q$ cannot be achieved from $p$ via weak elementary thermal operations, where $[A,B]$ stands for the segment joining points $A$ and $B$. (Note that, for any $q \in \mathcal L (p)$, we have $\text{supp}(q)=\Omega=\text{supp}(p)$. Moreover, $\mathcal L (p)\setminus \text{ext}(\mathcal L (p)) \neq \emptyset$, where $\text{ext}(\cdot)$ denotes the set of extreme points.)
%without including $A$ and $B$. 

We begin fixing some $q \in \mathcal L(p)$ such that
\begin{equation}
\label{extremes weto eto}
    q \neq \left( \prod_{k=1}^m P^d(i_k,j_k)\right) p 
\end{equation}
for all $m \geq 1$ and $1 \leq i_k<j_k \leq 3$ for $1 \leq k \leq m$. (This can be done since there is a continuum number of points points in $\mathcal L(p)$.) Hence, if we assume that $q \in C^{WETO}_d(p)$, then there exists at least one $T^d$-transform that is not a $d$-swap in the sequence of $T^d$-transforms that yield $q$ when applied to $p$. That is, we have 
\begin{equation}
\label{contra weto eto}
 \begin{split}
   &q = \left(\prod_{k=m+2}^\ell P^d(i_k,j_k)\right) \left((1-\lambda) \mathbb I + \lambda P^d(i_{m+1},j_{m+1})\right) p_0, \text{ with} \\
   &p_0 \coloneqq \left(\prod_{k=1}^m T_{\lambda_k}^d(i_k,j_k)\right)  p,
\end{split}
\end{equation}
for some $\ell \geq m+1$, $m \geq 0$, $0<\lambda<1$ and $1 \leq i_k<j_k \leq 3$ for $1 \leq k \leq \ell$.

We will show \eqref{contra weto eto} leads to a contradiction, implying $q \not \in C^{WETO}_d(p)$. In order to do so, we prove a stronger result in the following lemma.
\begin{lemma}
\label{lemma weto vs eto}
If we define, for all $r,s \in \mathbb P_\Omega$ and $M \in \mathcal M_{|\Omega|, |\Omega|}(\mathbb R)$, $M[r,s] \coloneqq [Mr,Ms]$, and, for any segment segment $L \subseteq \mathbb R^3$, $E(L) \coloneqq L' \bigcap \mathcal C^{ETO}_d(p)$ with $L' \subseteq \mathbb R^3$ the line such that $L \subseteq L'$, then
    \begin{equation}
    \label{L not belongs}
      \mathcal  L(p) \not \in \bigcup_{\ell \geq 0} B_\ell,
    \end{equation}
    where
    \begin{equation*}
    \begin{split}
        B_\ell \coloneqq & \Biggl\{ E\left(\left(\prod_{k=1}^\ell P^d(i_k,j_k)\right) \left[p_0, P^d(i_{0},j_{0}) p_0\right]\right) : 1 \leq i_k <j_k \leq 3 \\
        &\text{for } 1 \leq k \leq \ell \text{ and } p_0 \text{ is given by \eqref{contra weto eto}} \Biggl\}.
        \end{split}
    \end{equation*}
\end{lemma}

\begin{proof}
    We will prove that $\mathcal L (p) \not \in B_\ell$ for all $\ell \geq 0$ by induction.

If $\ell=0$, then each element in $B_\ell$ has one constant component by definition. However, it is not difficult to see that the extremes of $\mathcal L(p)$ differ in all their components. Hence, $\mathcal L(p) \not \in B_0$.

Assume now that $\mathcal L(p) \not \in B_\ell$ for some $\ell \geq 0$. To show this implies 
$L(p) \not \in B_{\ell+1}$, we note that 
\begin{equation}
\label{recursion B}
B_{\ell+1}=\left\{E\left(P^d(i_{\ell+1},j_{\ell+1})L\right) : L \in B_\ell \text{ and } 1 \leq i_{\ell +1} <j_{\ell +1} \leq 3\right\},  
\end{equation}
and prove separately the following three claims:
\begin{enumerate}[label=(\Alph*)]
\item $\mathcal L(p) \neq E(P^d(1,2) L)$ for all $L \in B_{\ell}$. Since $P^d (1,2)$ leaves the third component of every probability distribution unchanged, it is clear that we only need to consider the cases where $L$ belongs to the area with horizontal blue lines in Figure \ref{areas covex hull last}. Given that $\mathcal L(p) \neq L$ for all $L \in B_{\ell}$ by assumption, it is then easy to show that $\mathcal L(p) \neq E(P^d(1,2) L)$ using that $P^d(1,2)P^d(1,3)p,p \not \in \mathcal L(p)$ and that $P^d(1,2)P^d(1,2)= \alpha \mathbb I + (1-\alpha) P^d(1,2)$. 

\item $\mathcal L(p) \neq E(P^d(1,3) L)$ for all $L \in B_{\ell}$. Since $P^d (1,3)$ leaves the second component of every probability distribution unchanged, it is clear that we only need to consider the cases where $L$ belongs to the shaded blue area in Figure \ref{areas covex hull last}. Given that $\mathcal L(p) \neq L$ for all $L \in B_{\ell}$ by assumption, it is then easy to show that $\mathcal L(p) \neq E(P^d(1,2) L)$ using that
\begin{equation*}
\begin{split}
     &P^d(1,3)p, P^d(1,3) P^d(2,3)p, P^d(1,3) P^d(1,2) P^d(2,3)p, \\
     & P^d(1,3) P^d(1,2)p \not \in \mathcal L(p)
\end{split}
\end{equation*}
and that $P^d(1,3)P^d(1,3)= \beta \mathbb I + (1-\beta) P^d(1,2)$.

\item $\mathcal L(p) \neq E(P^d(2,3) L)$ for all $L \in B_{\ell}$. Since $P^d (2,3)$ leaves the first component of every probability distribution unchanged, it is clear that we only need to consider the cases where $L$ belongs to the area with vertical red lines in Figure \ref{areas covex hull last}. Given that $\mathcal L(p) \neq L$ for all $L \in B_{\ell}$ by assumption, it is then easy to show that $\mathcal L(p) \neq E(P^d(2,3) L)$ using that $P^d(2,3)P^d(1,2)p, P^d(1,2)P^d(1,3)p \not \in \mathcal L(p)$ and that $P^d(2,3)P^d(2,3)= \gamma \mathbb I + (1-\gamma) P^d(2,3)$. 
\end{enumerate}

\begin{figure}[ht]
\centering
\begin{tikzpicture}[scale=0.8]
\fill [color=blue!60] (-3.5,2.5)--(-4,2)--(-2,0)--(2,0)--(4,2)--(3.5,3.5);
\fill [white, pattern=horizontal lines, pattern color=blue] (-2.5,3.5)--(-4,2)--(4,2)--(3.5,3.5);
\fill [white, pattern=vertical lines, pattern color=red] (2,4)--(3.5,3.5)--(4,2)--(1,4);
\node[small dot new] at (-2,0) (1) {};
\node[small dot new] at (2,0) (2) {};
\node[small dot new] at (-4,2) (3) {};
\node[small dot new 3] at (4,2) (4) {};
\node[small dot new] at (-2,4) (5) {};
\node[small dot new] at (2,4) (6) {};
\node[small dot new 3] at (3.5,3.5) (7) {};

\node[small dot new 2,label={[xshift=.25cm, yshift=0.01cm, thick, font=\fontsize{11}{11}\selectfont, thick]90:$(2,3)(1,2)$}] at (1,4) (8) {};
\node[small dot new 2,label={[xshift=-.25cm, yshift=0.01cm, thick, font=\fontsize{11}{11}\selectfont, thick]90:$(1,3)(2,3)$}] at (-1,4) (9) {};
\node[small dot new 2,label={[xshift=.025cm, thick, font=\fontsize{11}{11}\selectfont, thick]0:$(1,3)(1,2)$}] at (2.7,0.7) (10) {};
\node[small dot new 2,label={[xshift=.025cm, thick, font=\fontsize{11}{11}\selectfont, thick]0:$(1,3)(1,2)(2,3)$}] at (3.3,1.3) (11) {};

\path[draw,thick,-]
    (1) edge node {} (2)
    (1) edge node {} (3)
    (2) edge node {} (4)
    (3) edge node {} (5)
    (4) edge node {} (7)
    (5) edge node {} (6)
    (7) edge node {} (6)
    ;
\end{tikzpicture}
\caption{Rough representation of the set $C^{ETO}_d(p)$ for $d=(d_0,d_1,d_2)$ with $d_1>d_1>d_2$ and $p=(a,b,c)$ fulfilling \eqref{eto vs weto}. We include labels for the black points which, although not plotted in Figure \ref{covex hull last} since they are not extremal, are useful when proving Lemma \ref{lemma weto vs eto}. Moreover, we include the areas which are mentioned in (A)-(C) in that lemma. Note that we use the notation in Figure \ref{covex hull 1}.}
\label{areas covex hull last}
\end{figure}
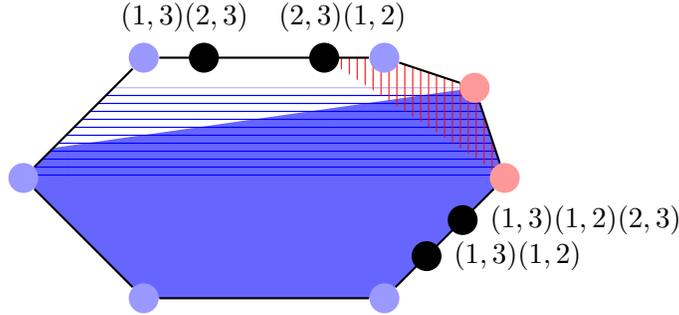

By the recursive relation in \eqref{recursion B}, we can conclude from (A)-(C) that $L(p) \not \in B_{\ell+1}$. By induction, \eqref{L not belongs} holds. This concludes the proof.
\end{proof}

\begin{comment}
\begin{lemma}
If $\ell \geq m+1$ and
 \begin{equation}
       L_\ell(\{(i_k,j_k)\}_{k=0}^\ell,\lambda) \coloneqq \left(\prod_{k=m+2}^\ell P^d(i_k,j_k)\right) \left[(1-\lambda) p_0, \lambda P^d(i_{m+1},j_{m+1}) p_0\right],
    \end{equation}
%\begin{equation*}
 %  L_\ell(p_0,\{(i_k,j_k)\}_{k=0}^\ell,\lambda) \coloneqq  \left(\prod_{k=1}^\ell P^d(i_k,j_k)\right) \left((1-\lambda) \mathbb I + \lambda P^d(i_0,j_0)\right) p_0
%\end{equation*}
  then $E(L_\ell(p_0)) \neq \mathcal L (p)$ for all $\ell \geq 0$, where if $L \subseteq \mathbb R^3$ is come segment, then $E(L) \coloneqq L' \cap \mathcal C_d(p)$ with $L' \subseteq \mathbb R^3$ the line such that $L \subseteq L'$.
\end{lemma}
\end{comment}
By Lemma \ref{lemma weto vs eto}, we have that,
%Since $E(L_\ell(p_0)) \neq \mathcal L (p)$, we have that,
if $q \in E(L)$ for some $E(L) \in B_\ell$ and $\ell \geq 0$, then $q \in \text{ext} (E(L))$, where $\text{ext}(\cdot)$ denotes the set of extreme points. This contradicts the fact that $0< \lambda <1$ in \eqref{contra weto eto}. In particular, \eqref{contra weto eto} implies that
$q \in L_0 \setminus \text{ext} (L_0)$, where $L_0 \in B_\ell$ for some $\ell \geq 0$.
%However, $\text{ext} (L) = \emptyset$ by definition. Thus, $q \not \in L$.
This yields the desired contradiction and concludes the proof.
\end{proof}

\begin{rem}
    As a result of Proposition \ref{difference weto eto}, conditioning our experimental protocols on the realization of some random variable results in the elementary thermal operations being closer to weak universality provided the system has at least three different energy levels.
\end{rem}

Note that Proposition \ref{difference weto eto} agrees with Lemma \ref{equiv dim 2} since any distribution in quasi-uniform for $|\Omega|=2$.

Since it seems we cannot extend the discrepancy between WETO and ETO resource theories in  Proposition \ref{difference weto eto} beyond quasi-uniform distributions, we consider the low-dimensional case $|\Omega|=3$ in the following proposition. As it turns out, these distributions characterize the equivalence between WETO and ETO resource theories whenever $|\Omega|=3$.

\begin{prop}
     [Equivalence ETO and WETO resource theories for $|\Omega| = 3$]
\label{weto eto low dim}
    If $0<d \in \mathbb P_\Omega$ and $|\Omega| = 3$, then the following statements are equivalent:
 \begin{enumerate}[label=(\alph*)]
     \item The strong elementary thermal operations resource theory is equal to the weak elementary thermal operations 
resource theory
\begin{equation*}
    \text{RT}_{\text{ETO}}(d) = \text{RT}_{\text{WETO}}(d).
\end{equation*}
    \item $d$ is quasi-uniform.
    %up to permutations, where $d_0 \geq d_1$.
    %and the number of $d_0$ equal to $|\Omega|-1$.
 \end{enumerate}
\end{prop}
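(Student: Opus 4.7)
The implication $(a) \Rightarrow (b)$ is the contrapositive of Proposition~\ref{difference weto eto}, so it comes for free. For the converse, by Lemma~\ref{ordered d} I may assume $d = d^\downarrow$, which leaves two sub-cases for a quasi-uniform $d$ with $|\Omega|=3$. If $d = (d_0, d_0, d_1)$ with $d_0 \geq d_1 > 0$, the desired equality is already a special case of Corollary~\ref{equi rt eto weto}, which actually yields $\text{RT}_{\text{ETO}}(d) = \text{RT}_{\text{WETO}}(d)$ for every $d^\downarrow = (d_0,\dots,d_0,d_1)$ in any dimension. Thus the only substantive case to address is $d = (d_0, d_1, d_1)$ with $d_0 > d_1$, and I plan to treat it in the spirit of Proposition~\ref{Td transfrom low dim}.

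Fix $\gamma = d_1/d_0 \in (0,1)$, $p = (a,b,c)$ and $q \in \mathcal C^{ETO}_d(p)$. The crucial structural feature of this $d$ is that $d_2 = d_3$, so $P^d(2,3)$ is an ordinary transposition and simultaneously a $d$-swap and a $T^d$-transform. Consequently, I may reorder $p$ so that $b \geq c$ and similarly impose $q_2 \geq q_3$ by prepending or appending $P^d(2,3)$, since that factor is itself in $\text{RT}_{\text{WETO}}(d)$. Moreover, the identity
\begin{equation*}
    P^d(1,3) = P^d(2,3)\, P^d(1,2)\, P^d(2,3)
\end{equation*}
lets me rewrite any sequence of $d$-swaps using only $P^d(1,2)$ and $P^d(2,3)$, which drastically cuts down the enumeration of extremes of $\mathcal C^{ETO}_d(p)$.

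With these reductions in hand, the plan is to enumerate the extreme points of $\mathcal C^{ETO}_d(p)$ explicitly (analogously to the enumeration underlying Proposition~\ref{eto vs to}), split into three regimes according to the relative order of $\gamma a$, $b$ and $c$, and in each regime draw the resulting convex polygon with its extremes labelled by the short $d$-swap sequences that produce them, in the style of Figures~\ref{covex hull 1}--\ref{covex hull 3}. Once the polygon is in this form, any point $q$ in it lies on a segment joining two adjacent extremes that share a common pre-sequence $S$, and that segment is precisely of the form $\{T^d_\lambda(i,j)\, S p : \lambda \in [0,1]\}$ for the appropriate indices $i,j$. This exhibits $q$ as the image of $p$ under a finite product of $T^d$-transforms, which is exactly membership in $\mathcal C^{WETO}_d(p)$.

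The main obstacle I expect is the combinatorial bookkeeping: one has to check that every edge of $\mathcal C^{ETO}_d(p)$ is realized by a single $T^d$-transform applied to a reachable base point, and not merely by a convex segment inside the TO polytope. I expect the $P^d(2,3)$ symmetry to collapse what would otherwise be a six-fold symmetric picture into a tractable three-case analysis, and the identity above to guarantee that any $P^d(1,3)$ appearing along an edge can be expanded into a sequence of $P^d(2,3)$ and $P^d(1,2)$ factors that stays within $\text{RT}_{\text{WETO}}(d)$.
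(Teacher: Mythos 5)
Your overall strategy coincides with the paper's: sufficiency of $(a)\Rightarrow(b)$ via Proposition~\ref{difference weto eto}, reduction to $d=d^\downarrow$ by Lemma~\ref{ordered d}, disposal of the case $d=(d_0,d_0,d_1)$ by the earlier equivalences, and a three-case analysis of $\mathcal C^{ETO}_d(p)$ for $d=(d_0,d_1,d_1)$ ordered by the relative sizes of $\gamma a$, $b$ and $c$, exactly as in Figures~\ref{covex hull weto 1}--\ref{covex hull weto 3}. The identity $P^d(1,3)=P^d(2,3)P^d(1,2)P^d(2,3)$ you invoke is correct for this $d$ and is the analogue of \eqref{commutation}.

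The one step that fails as written is the covering argument: you claim that any $q$ in the polygon ``lies on a segment joining two adjacent extremes,'' i.e.\ on an edge, and is therefore of the form $T^d_\lambda(i,j)\,Sp$ for a single free parameter $\lambda$. This is false for interior points of the two-dimensional region $\mathcal C^{ETO}_d(p)$, and a one-parameter family $\{T^d_\lambda(i,j)\,Sp\}_{\lambda}$ only traces a one-dimensional segment, so it cannot exhibit a generic $q$. The correct mechanism --- and the one the paper uses --- is a product of $T^d$-transforms with \emph{two} free parameters: the hexagon splits along a diagonal into two quadrilaterals, and each quadrilateral is swept out as $\{T^d_{\lambda_2}(i_2,j_2)\,T^d_{\lambda_1}(i_1,j_1)\,Rp : \lambda_1,\lambda_2\in[0,1]\}$ for a fixed (possibly empty) swap $R$ and suitable index pairs; e.g.\ in the regime $\gamma a>b>c$ one uses $T^d(2,3)T^d(1,2)$ on one side of the dashed line and $T^d(2,3)T^d(1,3)P^d(1,2)$ on the other. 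The intermediate point $T^d_{\lambda_1}(i_1,j_1)\,Rp$ is reachable from $p$ but is generally not an extreme of $\mathcal C^{ETO}_d(p)$. With that correction your argument goes through and is essentially the paper's proof.
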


\begin{proof}
By Lemma \ref{ordered d}, it suffices to assume that $d=d^\downarrow$ throughout the proof.

We only prove necessity since sufficiency follows from Proposition \ref{difference weto eto}. Since $d$ is quasi-uniform, then we either have $d = (d_0,d_0,d_1)$ with $d_0 \geq d_1$ or $d = (d_0,d_1,d_1)$ with $d_0>d_1$. Since the statement is true for the first instance (by putting together Propositions \ref{eto vs to} and \ref{Td transfrom low dim}), we fix $d = (d_0,d_1,d_1)$ with $d_0>d_1$ and $\gamma = d_1/d_0$ in the following. We will show that, for any $p=(a,b,c) \in \mathbb P_\Omega$ (which we can assume fulfills $b \geq c$ w.l.o.g.), we have $\mathcal C^{ETO}_d(p) \subseteq \mathcal C^{WETO}_d(p)$. In order to do so, we distinguish the following cases (the instances where some equality holds follow easily from these):

 \begin{enumerate}[label=(\Alph*)]
\item $\gamma a > b > c$. In this case, $\mathcal C^{ETO}_d (p)$ is (roughly) given by Figure \ref{covex hull weto 1},
with $q \in C^{ETO}_d (p)$ being achievable by a sequence $T^d(2,3)T^d(1,2)$ if it lies to the left of the dashed line and by $T^d(2,3)T^d(1,3)P^d(1,2)$ if it lies to the right.
\end{enumerate}

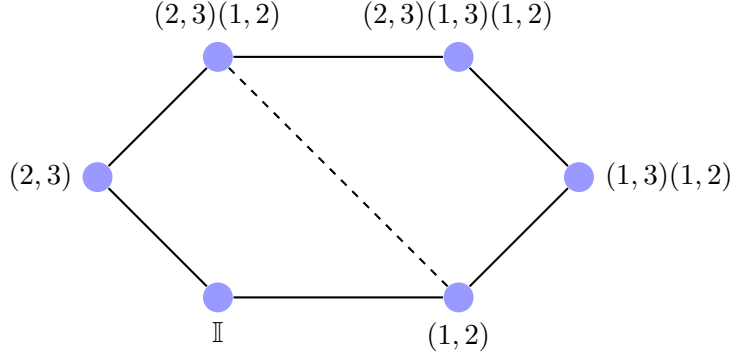
\begin{figure}[ht]
\centering
\begin{tikzpicture}[scale=0.8]
\node[small dot new,label={[yshift=0.01cm, thick, font=\fontsize{11}{11}\selectfont, thick]270:$\mathbb I$}] at (-2,0) (1) {};
\node[small dot new,label={[yshift=0.01cm, thick, font=\fontsize{11}{11}\selectfont, thick]270:$(1,2)$}] at (2,0) (2) {};
\node[small dot new,label={[xshift=0.01cm, thick, font=\fontsize{11}{11}\selectfont, thick]180:$(2,3)$}] at (-4,2) (3) {};
\node[small dot new,label={[xshift=0.01cm, thick, font=\fontsize{11}{11}\selectfont, thick]0:$(1,3)(1,2)$}] at (4,2) (4) {};
\node[small dot new,label={[yshift=0.01cm, thick, font=\fontsize{11}{11}\selectfont, thick]90:$(2,3)(1,2)$}] at (-2,4) (5) {};
\node[small dot new,label={[yshift=0.01cm, thick, font=\fontsize{11}{11}\selectfont, thick]90:$(2,3)(1,3) (1,2)$}] at (2,4) (6) {};
\path[draw,thick,-]
    (1) edge node {} (2)
    (1) edge node {} (3)
    (2) edge node {} (4)
    (3) edge node {} (5)
    (4) edge node {} (6)
    (5) edge node {} (6)
    ;
     \path[draw,dashed,thick,-]
    (2) edge node {} (5)
    ;
\end{tikzpicture}
\caption{Rough representation of $\mathcal C^{ETO}_d(p)$ for $d=(d_0,d_1,d_1)$ with $0<d_1 < d_0$ and $p=(a,b,c)$ with $\gamma a \geq b \geq c$. Note that we use the notation in Figure \ref{covex hull 1}.}
\label{covex hull weto 1}
\end{figure}

\begin{enumerate}[label=(\Alph*)]
\setcounter{enumi}{1}
\item $b>\gamma a > c$. In this case, $\mathcal C^{ETO}_d (p)$ is (roughly) given by Figure \ref{covex hull weto 2}, with $q \in C^{ETO}_d (p)$ being achievable by a sequence $T^d(2,3)T^d(1,3)$ if it lies to the right of the dashed line and by $T^d(2,3)T^d(1,2)$ if it lies to the left.
\end{enumerate}

\begin{figure}[ht]
\centering
\begin{tikzpicture}[scale=0.8]
\node[small dot new,label={[yshift=0.01cm, thick, font=\fontsize{11}{11}\selectfont, thick]270:$(1,2)$}] at (-2,0) (1) {};
\node[small dot new,label={[yshift=0.01cm, thick, font=\fontsize{11}{11}\selectfont, thick]270:$\mathbb I$}] at (2,0) (2) {};
\node[small dot new,label={[xshift=0.01cm, thick, font=\fontsize{11}{11}\selectfont, thick]180:$(2,3)(1,2)$}] at (-4,2) (3) {};
\node[small dot new,label={[xshift=0.01cm, thick, font=\fontsize{11}{11}\selectfont, thick]0:$(1,3)$}] at (4,2) (4) {};
\node[small dot new,label={[yshift=0.01cm, thick, font=\fontsize{11}{11}\selectfont, thick]90:$(2,3)$}] at (-2,4) (5) {};
\node[small dot new,label={[yshift=0.01cm, thick, font=\fontsize{11}{11}\selectfont, thick]90:$(2,3)(1,3)$}] at (2,4) (6) {};
\path[draw,thick,-]
    (1) edge node {} (2)
    (1) edge node {} (3)
    (2) edge node {} (4)
    (3) edge node {} (5)
    (4) edge node {} (6)
    (5) edge node {} (6)
    ;
     \path[draw,dashed,thick,-]
    (2) edge node {} (5)
    ;
\end{tikzpicture}
\caption{Rough representation of $\mathcal C^{ETO}_d(p)$ for $d=(d_0,d_1,d_1)$ with $0<d_1 < d_0$ and $p=(a,b,c)$ with $b \geq \gamma a \geq c$. Note that we use the notation in Figure \ref{covex hull 1}.}
\label{covex hull weto 2}
\end{figure}
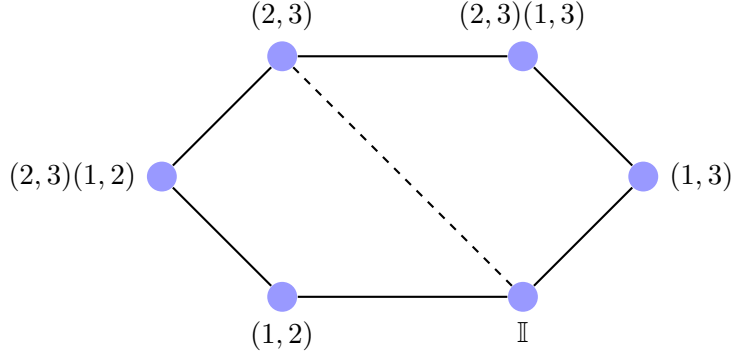

\begin{enumerate}[label=(\Alph*)]
\setcounter{enumi}{2}
\item $b > c >  \gamma a$. In this case, $\mathcal C^{ETO}_d (p)$ is (roughly) given by Figure \ref{covex hull weto 3}, with $q \in C^{ETO}_d (p)$ being achievable by a sequence $T^d(2,3)T^d(1,3)$ if it lies to the right of the dashed line and by $T^d(2,3)T^d(1,2)P^d(1,3)$ if it lies to the left.
\end{enumerate}

\begin{figure}[ht]
\centering
\begin{tikzpicture}[scale=0.8]
\node[small dot new,label={[yshift=0.01cm, thick, font=\fontsize{11}{11}\selectfont, thick]270:$(1,2)(1,3)$}] at (-2,0) (1) {};
\node[small dot new,label={[yshift=0.01cm, thick, font=\fontsize{11}{11}\selectfont, thick]270:$(1,3)$}] at (2,0) (2) {};
\node[small dot new,label={[xshift=0.01cm, thick, font=\fontsize{11}{11}\selectfont, thick]180:$(2,3)(1,2)(1,3)$}] at (-4,2) (3) {};
\node[small dot new,label={[xshift=0.01cm, thick, font=\fontsize{11}{11}\selectfont, thick]0:$\mathbb I$}] at (4,2) (4) {};
\node[small dot new,label={[yshift=0.01cm, thick, font=\fontsize{11}{11}\selectfont, thick]90:$(2,3)(1,3)$}] at (-2,4) (5) {};
\node[small dot new,label={[yshift=0.01cm, thick, font=\fontsize{11}{11}\selectfont, thick]90:$(2,3)$}] at (2,4) (6) {};
\path[draw,thick,-]
    (1) edge node {} (2)
    (1) edge node {} (3)
    (2) edge node {} (4)
    (3) edge node {} (5)
    (4) edge node {} (6)
    (5) edge node {} (6)
    ;
     \path[draw,dashed,thick,-]
    (2) edge node {} (5)
    ;
\end{tikzpicture}
\caption{Rough representation of $\mathcal C^{ETO}_d(p)$ for $d=(d_0,d_1,d_1)$ with $0<d_1 < d_0$ and $p=(a,b,c)$ with $b \geq c \geq \gamma a$. Note that we use the notation in Figure \ref{covex hull 1}.}
\label{covex hull weto 3}
\end{figure}
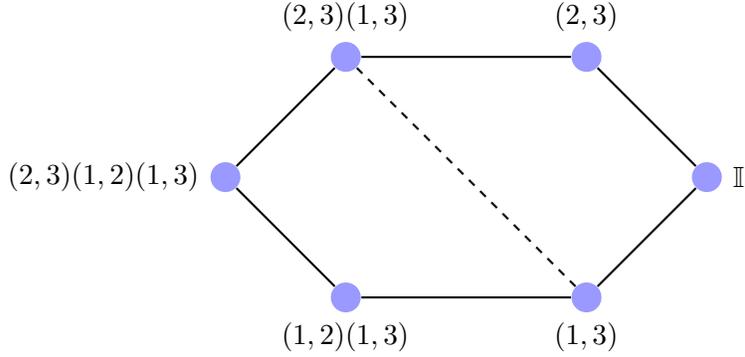

This concludes the proof.
\end{proof}

\begin{rem}
As a result of Proposition \ref{weto eto low dim}, when $|\Omega|=3$, conditioning our experimental protocols on the realization of some random variable results in the elementary thermal operations being closer to weak universality provided the system has exactly three different energy levels. 
\end{rem}

As a consequence of the results in this section, we pose the following conjecture regarding the general equivalence between ETO and WETO resource theories.

\begin{conj}
    [Equivalence ETO and WETO resource theories]
    \label{conjecture}
    If $0<d \in \mathbb P_\Omega$, then the following statements are equivalent:
 \begin{enumerate}[label=(\alph*)]
     \item The strong elementary thermal operations resource theory is equal to the weak elementary thermal operations 
resource theory
\begin{equation*}
    \text{RT}_{\text{ETO}}(d) = \text{RT}_{\text{WETO}}(d).
\end{equation*}
    \item $d$ is quasi-uniform.
    %up to permutations, where $d_0 \geq d_1$.
    %and the number of $d_0$ equal to $|\Omega|-1$.
 \end{enumerate}
\end{conj}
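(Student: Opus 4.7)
The direction that $\text{RT}_{\text{ETO}}(d) = \text{RT}_{\text{WETO}}(d)$ forces $d$ to be quasi-uniform is already established by Proposition \ref{difference weto eto}, so the plan is to prove the converse: that quasi-uniformity of $d$ is sufficient for equivalence of the two resource theories. Quasi-uniform distributions split into two types: $d^\downarrow = (d_0,\dots,d_0,d_1)$ with $d_0 \geq d_1$, which is already settled by Corollary \ref{equi rt eto weto}, and $d^\downarrow = (d_0,d_1,\dots,d_1)$ with $d_0 > d_1$. Only the latter remains, and by Lemma \ref{ordered d} we may assume $d = d^\downarrow$. Fix such $d$ and $p, q \in \mathbb P_\Omega$ with $q \preceq_d p$; the goal is to produce a finite sequence of $T^d$-transforms mapping $p$ to $q$.

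The natural approach is to mirror the inductive proof of Theorem \ref{wETO equiv}, swapping the role of the ``special'' last level used there with that of the ``special'' first level here. Since $d_2 = \dots = d_n$, every $T^d$-transform $T^d_\lambda(i,j)$ with $i,j \geq 2$ reduces to a plain transposition, so we may assume w.l.o.g.\ that $p_2 \geq \dots \geq p_n$ and $q_2 \geq \dots \geq q_n$. Set $h_0(p,q) \coloneqq |\Pi^d_p(1) - \Pi^d_q(1)|$, resolving any uncertainty in $\Pi^d_\cdot(1)$ so as to minimize $h_0$, and induct on $h_0$.

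For the inductive step $h_0(p,q) > 0$, choose $i_0$ as the index whose position in the $d$-order is one above or one below $\Pi^d_p(1)$, depending on the sign of $\Pi^d_p(1) - \Pi^d_q(1)$, and define $p' = T^d_\lambda(1, i_0) p$. The parameter $\lambda \in [0,1]$ is constrained by two linear inequalities: one enforces the swap of relative $d$-order between components $1$ and $i_0$ so that $h_0(p', q) = h_0(p,q) - 1$, and one preserves $q \preceq_d p'$ at the affected vertex of the Lorenz $d$-curve. The triggering inequality $\gamma p_1 \gtrless p_{i_0}$ together with the hypothesis $q \preceq_d p$ should force both right-hand sides to lie in $[0,1]$, in direct analogy with \eqref{case 1}--\eqref{case 1b}. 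For the base case $h_0(p,q) = 0$, one needs a lemma analogous to Lemma \ref{same d-order}: a secondary induction on $h_1(p,q) \coloneqq |\{i : p_i \neq q_i\}|$, following Lemma 2 of \cite[p.~47]{hardy1952inequalities}, should at each step produce a $T^d_\lambda(1, \ell)$ repairing exactly one component; only one nontrivial subcase arises, namely when one of the excess and deficit indices equals $1$ and the other lies in $\{2,\dots,n\}$.

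The main obstacle will be this base-case lemma. In Theorem \ref{wETO equiv}, the asymmetric subcases still left the algorithm some ``room'' to choose which of the two offending components to fix, because there were many indices associated with $d_0$. Here, since only index $1$ has the value $d_0$, every corrective $T^d$-transform must pass through that single level, which simultaneously alters two consecutive vertices of the Lorenz $d$-curve. Verifying that a single $\lambda \in [0,1]$ exists such that $p'$ still $d$-majorizes $q$ at \emph{every} vertex---not merely at the one being repaired---is the delicate step, and may require a careful case analysis depending on the location of $\Pi^d_p(1)$ relative to the ``knee'' of $c_q^d$.
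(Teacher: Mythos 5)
This statement is left open in the paper (it is posed as a conjecture, and the conclusion explicitly lists settling it as future work), so there is no paper proof to compare against; a correct argument here would be a genuine new contribution. Your reduction of the problem is fine up to a point: the direction (a)$\Rightarrow$(b) is indeed Proposition \ref{difference weto eto}, and the case $d^\downarrow=(d_0,\dots,d_0,d_1)$ is Corollary \ref{equi rt eto weto}, so only $d^\downarrow=(d_0,d_1,\dots,d_1)$ with $d_0>d_1$ remains.

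However, for that remaining case you have set up the wrong target, and the statement you propose to prove is false. You write: ``Fix such $d$ and $p,q\in\mathbb P_\Omega$ with $q\preceq_d p$; the goal is to produce a finite sequence of $T^d$-transforms mapping $p$ to $q$.'' That would establish $\text{RT}_{\text{TO}}(d)=\text{RT}_{\text{WETO}}(d)$, which by Theorem \ref{wETO equiv} holds \emph{only} when $d^\downarrow=(d_0,\dots,d_0,d_1)$ --- and $(d_0,d_1,\dots,d_1)$ with $d_0>d_1$ is not of that form once $|\Omega|\geq 3$. Concretely, Proposition \ref{difference to eto} produces, for such $d$, pairs with $q\preceq_d p$ but $q\notin\mathcal C^{ETO}_d(p)\supseteq\mathcal C^{WETO}_d(p)$ (e.g.\ the support-shrinking pair in case (A) or the pair $p=(1,0,\dots,0)$, $q=(0,a,b,0,\dots,0)$ in case (B)), so no sequence of $T^d$-transforms can exist for those pairs. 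What the conjecture actually requires is $\mathcal C^{ETO}_d(p)\subseteq\mathcal C^{WETO}_d(p)$, i.e.\ the hypothesis on $q$ must be membership in the ETO cone, not thermo-majorization. This is not a repairable detail of your induction: the entire inductive scheme of Theorem \ref{wETO equiv} is anchored to the Lorenz $d$-curve characterization of $\{q: q\preceq_d p\}$, and no analogous curve characterization of $\mathcal C^{ETO}_d(p)$ is available for $d^\downarrow=(d_0,d_1,\dots,d_1)$ and general $|\Omega|$ (the paper notes that the extreme points of this set are not well understood beyond $|\Omega|=3$, where Proposition \ref{weto eto low dim} computes them explicitly). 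A proof of the conjecture would first need some handle on $\mathcal C^{ETO}_d(p)$ itself, which is precisely why the question remains open.
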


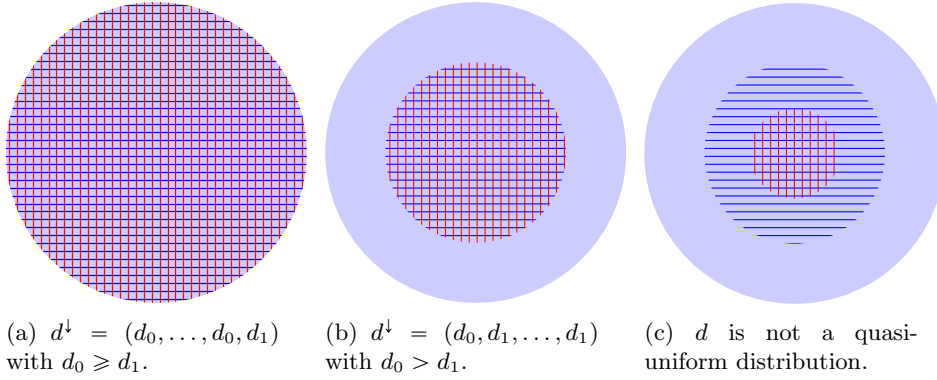
\begin{figure}[tb!]
\centering

\begin{subfigure}{0.3\textwidth}
  \centering
  \begin{tikzpicture}
    % First TikZ picture
    \fill [blue!20] (0,0) circle [radius=2cm];
    \fill [white, pattern=horizontal lines, pattern color=blue] (0,0) circle [radius=2cm];
    \fill [white, pattern=vertical lines, pattern color=red] (0,0) circle [radius=2cm];
  \end{tikzpicture}
  \caption{$d^\downarrow = (d_0,\dots,d_0,d_1)$ with $d_0 \geq d_1$.}
\end{subfigure}
\hfill
\begin{subfigure}{0.3\textwidth}
  \centering
  \begin{tikzpicture}
    % Second TikZ picture
    \fill [blue!20] (0,0) circle [radius=2cm];
    \fill [white, pattern=horizontal lines, pattern color=blue] (0,0) circle [radius=1.2cm];
    %\fill [white, pattern=north west lines, pattern color=red] (0,0) circle [radius=2cm];
    \fill [white, pattern=vertical lines, pattern color=red] (0,0) circle [radius=1.2cm];
  \end{tikzpicture}
  \caption{$d^\downarrow=(d_0,d_1,\dots,d_1)$ with $d_0>d_1$.}
\end{subfigure}
\hfill
\begin{subfigure}{0.3\textwidth}
  \centering
  \begin{tikzpicture}
    % Third TikZ picture
    \fill [blue!20] (0,0) circle [radius=2cm];
    \fill [white, pattern=horizontal lines, pattern color=blue] (0,0) circle [radius=1.2cm];
    \fill [white, pattern=vertical lines, pattern color=red] (0,0) circle [radius=0.6cm];
  \end{tikzpicture}
  \caption{$d$ is not a quasi-uniform distribution.}
\end{subfigure}
\caption{Venn diagram representation (for $|\Omega|=3$) of the relation among the thermal operations resource theories we have considered in this work depending on $0<d \in \mathbb P_\Omega$. In particular, the thermal polytope is filled in light blue and the strong (weak) elementary thermal operations has horizontal (vertical) blue (red) lines. The diagram looks the same for any $|\Omega| >3$ except for the fact we do not know whether weak elementary thermal operations are equivalent to strong elementary thermal operations in (b) in that scenario.
%since we do not know how they relate to the strong elementary thermal operations whenever $d^\downarrow=(d_0,d_1,\dots,d_1)$ with $d_0>d_1$.
This figure summarizes Propositions \ref{difference weto eto} and \ref{weto eto low dim}, and Theorems \ref{polytopes relation} and \ref{wETO equiv}.}
\end{figure}

\subsection{Universality advantage}
\label{uni adv}

To conclude this section, we would like to determine the relation between the ETO and WETO polytopes. As we state in the following corollary, we know partly how they are related as a consequence of Theorems \ref{equi to eto poly} and \ref{equi to weto poly}.

\begin{coro}
\label{weak equi eto weto poly}
     If $0<d \in \mathbb P_\Omega$ with $d^\downarrow = (d_0,\dots,d_0,d_1)$,
     %up to permutations,
     then the 
 following statements are equivalent:
 \begin{enumerate}[label=(\alph*)]
     \item The (strong) elementary thermal operations polytope is equal to the weak elementary thermal operations 
polytope
\begin{equation*}
    \mathcal P_{\text{ETO}}(d) = \mathcal P_{\text{WETO}}(d).
\end{equation*}
    \item $|\Omega| = 2$.
 \end{enumerate}
\end{coro}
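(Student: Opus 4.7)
The plan is to deduce this corollary directly from the two preceding polytope-level characterizations, Theorems \ref{equi to eto poly} and \ref{equi to weto poly}, together with the low-dimensional equivalence Lemma \ref{equiv dim 2}. Since both implications are short, I would structure the argument as a pair of clean deductions rather than a long calculation.

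For the direction (b) $\Rightarrow$ (a), I would simply invoke Lemma \ref{equiv dim 2}: when $|\Omega|=2$, all three polytopes $\mathcal P_{\text{TO}}(d)$, $\mathcal P_{\text{ETO}}(d)$ and $\mathcal P_{\text{WETO}}(d)$ coincide, so in particular $\mathcal P_{\text{ETO}}(d) = \mathcal P_{\text{WETO}}(d)$. No additional work is needed here.

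For the direction (a) $\Rightarrow$ (b), I would argue by contrapositive. Assume $|\Omega| \geq 3$. By hypothesis $d^\downarrow = (d_0,\dots,d_0,d_1)$, so Theorem \ref{equi to eto poly} applies and yields
\begin{equation*}
\mathcal P_{\text{ETO}}(d) = \mathcal P_{\text{TO}}(d).
\end{equation*}
On the other hand, Theorem \ref{equi to weto poly} asserts that $\mathcal P_{\text{WETO}}(d) = \mathcal P_{\text{TO}}(d)$ forces $|\Omega|=2$; since we are assuming the opposite, we have
\begin{equation*}
\mathcal P_{\text{WETO}}(d) \subsetneq \mathcal P_{\text{TO}}(d).
\end{equation*}
Chaining the two displays gives $\mathcal P_{\text{WETO}}(d) \subsetneq \mathcal P_{\text{ETO}}(d)$, contradicting (a). Hence (a) implies $|\Omega|=2$.

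There is essentially no obstacle here, since the heavy lifting has already been done in Theorems \ref{equi to eto poly} and \ref{equi to weto poly}. The only minor subtlety worth flagging is that one should cite Lemma \ref{ordered d} implicitly (or note that these theorems have been stated directly for arbitrary $0<d$, not only for $d=d^\downarrow$), so that the application under the hypothesis $d^\downarrow = (d_0,\dots,d_0,d_1)$ is unambiguous. A short remark after the proof could also emphasize that this shows the strict inclusion $\mathcal P_{\text{WETO}}(d) \subsetneq \mathcal P_{\text{ETO}}(d)$ persists for every $|\Omega|\geq 3$ in this family, in agreement with the example $M$ built in \eqref{dim 3 counterex} and its higher-dimensional extensions constructed in the proof of Theorem \ref{equi to weto poly}.
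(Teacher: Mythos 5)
Your proposal is correct and follows exactly the route the paper intends: the corollary is presented there as a direct consequence of Theorems \ref{equi to eto poly} and \ref{equi to weto poly} (with Lemma \ref{equiv dim 2} handling the $|\Omega|=2$ direction), which is precisely the deduction you carry out.
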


In the following theorem, we characterize the equivalence between the ETO and WETO polytopes, improving thus on Corollary \ref{weak equi eto weto poly}.

\begin{theo}[Equivalence ETO and WETO polytopes]
\label{equi eto weto poly}
    If $0<d \in \mathbb P_\Omega$, then the 
 following statements are equivalent:
\begin{enumerate}[label=(\alph*)]
     \item The (strong) elementary thermal operations polytope is equal to the weak elementary thermal operations 
polytope
\begin{equation*}
    \mathcal P_{\text{ETO}}(d) = \mathcal P_{\text{WETO}}(d).
\end{equation*}
    \item $|\Omega| = 2$.
    %and the number of $d_0$ equal to $|\Omega|-1$.
 \end{enumerate}
\end{theo}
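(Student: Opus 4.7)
The plan is to combine the earlier partial results with a new construction. The direction $|\Omega|=2 \Rightarrow$ polytopes equal is immediate from Lemma \ref{equiv dim 2}, so all the work is in the converse, for which by Lemma \ref{ordered d} it suffices to take $d = d^\downarrow$. I would organise the argument by the shape of $d$, reducing as much as possible to results already proved.

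Two shapes are dispatched at once. If $d$ is not quasi-uniform, Proposition \ref{difference weto eto} gives $\text{RT}_{\text{ETO}}(d) \neq \text{RT}_{\text{WETO}}(d)$, so the polytopes already differ. If $d^\downarrow=(d_0,\dots,d_0,d_1)$, including the uniform case $d_0=d_1$, Corollary \ref{weak equi eto weto poly} directly yields that polytope equality fails whenever $|\Omega|\geq 3$. This leaves only the quasi-uniform shape $d^\downarrow=(d_0,d_1,\dots,d_1)$ with $d_0>d_1$ and $|\Omega|\geq 3$.

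For this shape when $|\Omega|\geq 4$, I would take $M$ acting as the identity on level $1$ and as some $M_0 \in \mathcal M_{|\Omega|-1,|\Omega|-1}(\mathbb R)$ on the lower levels $\{2,\dots,|\Omega|\}$ (all of weight $d_1$), with $M_0$ doubly stochastic but not expressible as a product of $T$-transforms on the uniform distribution; such an $M_0$ exists by Theorem \ref{uniform weto = to}, since $|\Omega|-1\geq 3$. By Birkhoff's theorem $M_0$ is a convex combination of permutations of $\{2,\dots,|\Omega|\}$, each a product of transpositions $P^d(i,j)$ with $2\leq i<j\leq|\Omega|$ (which are $d$-swaps here, since $d_i=d_j=d_1$), so $M\in\mathcal P_{\text{ETO}}(d)$. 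To rule out $M\in\mathcal P_{\text{WETO}}(d)$, I would use the positivity identity
\begin{equation*}
(A\,T^d_\mu(1,j))_{1,1} = (1-\mu\gamma)\, A_{1,1} + \mu\gamma\, A_{1,j},
\end{equation*}
valid for any $d$-stochastic $A$: the right-hand side is a convex combination in $[0,1]$ with weights summing to $1$, so it can equal $1$ only if $A_{1,1}=A_{1,j}=1$, but $A_{1,1}=1$ together with $Ad=d$ forces $A_{1,j}=0$, a contradiction. Hence if $\mu>0$ the $(1,1)$-entry drops strictly below $1$ at that step and cannot be restored by any later $T^d$-transform, so $M_{1,1}=1$ forces any WETO decomposition to use only $T^d_\mu(i,j)$ with $i,j\geq 2$. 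These are precisely $T$-transforms on the uniform distribution over $\{2,\dots,|\Omega|\}$, and they would express $M_0$ as a product of $T$-transforms, contradicting the choice of $M_0$.

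The remaining sub-case, $|\Omega|=3$ with $d=(d_0,d_1,d_1)$ and $d_0>d_1$, is the main obstacle: the block construction collapses since every $2\times 2$ doubly stochastic matrix is a single $T$-transform, and neither Proposition \ref{difference weto eto} nor Corollary \ref{weak equi eto weto poly} applies. Here I would exhibit an explicit counterexample such as $M = \tfrac{1}{2}(P^d(1,2)+P^d(1,3))$, manifestly in $\mathcal P_{\text{ETO}}(d)$. To rule out $M\in\mathcal P_{\text{WETO}}(d)$, I would first use $T^d_\mu(1,3) = P^d(2,3)\,T^d_\mu(1,2)\,P^d(2,3)$ together with $P^d(2,3)=T^d_1(2,3)$ to reduce any putative decomposition to a word in the two generators $T^d_\mu(1,2)$ and $T^d_\mu(2,3)$, and then change basis to $(e_1,\,e_2+e_3,\,e_2-e_3)$, in which $M$ becomes block diagonal with blocks $\bigl(\begin{smallmatrix}1-\gamma & 1\\ \gamma/2 & 1/2\end{smallmatrix}\bigr)$ and $(1/2)$, $T^d_\mu(2,3)$ stays diagonal, and $T^d_\mu(1,2)$ carries off-block entries with a definite sign pattern. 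Tracking these off-block entries along the product should contradict the block-diagonal form of $M$. Getting this tracking argument tight enough is the step I expect to demand most of the technical effort of the whole theorem.
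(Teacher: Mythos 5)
Your reduction covers most of the ground correctly, and where it overlaps with the paper it is essentially the same construction: the block doubly stochastic matrix for a block of at least three equal $d$-values (the paper's first case), and the convex combination $\tfrac12(P^d(1,2)+P^d(1,3))$ for $d=(d_0,d_1,d_1)$, which is exactly the paper's counterexample $(1-\phi_0)P^d(\alpha,\alpha+1)\oplus\mathbb I + \phi_0 P^d(\alpha,\alpha+2)\oplus\mathbb I$ with $\phi_0=1/2$. Your diagonal-entry monotonicity identity for freezing level $1$ is sound and is a cleaner way of saying what the paper asserts about "the number of ones along the main diagonal". Where you genuinely diverge is in dispatching the non-quasi-uniform case through Proposition \ref{difference weto eto} (a resource-theory-level argument) rather than building explicit matrices; the paper explicitly notes this shortcut is available but chooses not to use it. For the $|\Omega|=3$, $(d_0,d_1,d_1)$ verification, your basis-change/sign-tracking plan differs from the paper's route (commutation relations reducing any decomposition to an alternating word in $T^d(1,2)$ and $T^d(2,3)$, then using the zeros of $M$); both are left at sketch level, and the paper's canonical-word reduction, which you also mention, is probably the less laborious of the two to complete.

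There is, however, a genuine gap in your case partition. Definition \ref{quasi-uniform def} makes $d$ quasi-uniform whenever it has at most two distinct entries, so for $|\Omega|\geq 4$ the quasi-uniform class contains shapes such as $d^\downarrow=(d_0,d_0,d_1,d_1)$, $(d_0,d_0,d_0,d_1,d_1)$, etc., in which \emph{both} values occur at least twice. These are neither $(d_0,\dots,d_0,d_1)$ nor $(d_0,d_1,\dots,d_1)$, so Corollary \ref{weak equi eto weto poly} does not apply; and they are quasi-uniform, so Proposition \ref{difference weto eto} does not apply either (its proof explicitly requires three distinct entries of $d$). The remark in the paper immediately after Definition \ref{quasi-uniform def} identifying quasi-uniform distributions with the two listed shapes is only accurate for $|\Omega|\leq 3$, and your sentence "this leaves only the quasi-uniform shape $(d_0,d_1,\dots,d_1)$" inherits that inaccuracy. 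The paper covers these leftover shapes in its cases (A), (B) and (D). The gap is fillable with tools you already have: if one of the two value-blocks has size at least three, your doubly stochastic block construction applies verbatim after freezing the other block via the diagonal-entry monotonicity; if both blocks have size two (e.g.\ $(d_0,d_0,d_1,d_1)$), freeze one level of the larger-value block and run your three-level counterexample $\tfrac12\bigl(P^d(2,3)+P^d(2,4)\bigr)\oplus\mathbb I$ on the remaining levels, whose restricted equilibrium distribution is of the form $(d_0,d_1,d_1)$. But as written these distributions fall through your case analysis and the proof is incomplete for $|\Omega|\geq 4$.
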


\begin{proof}
By Lemma \ref{ordered d}, it suffices to assume that $d=d^\downarrow$ throughout the proof. Furthermore,
   we take $d_0 \coloneqq 1$ and $d_{|\Omega|+1} \coloneqq 0$.
   
   Necessity is straightforward by Lemma \ref{equiv dim 2}. To prove sufficiency, we argue by contrapositive. Hence, we take $|\Omega| \geq 3$ and argue on the number of jumps in $d$. In particular, we consider the following cases:
     \begin{enumerate}[label=(\Alph*)]
\item There exist some $1 \leq \alpha < |\Omega|$ and $2 \leq k \leq |\Omega|-\alpha$ such that $d_{\alpha-1} > d_\alpha= d_{\alpha+1} = \dots = d_{\alpha+k} > d_{\alpha+(k+1)}$. In this scenario, following the proof of Theorem \ref{equi to weto poly} for $|\Omega| \geq 4$, we have that
\begin{equation*}
    M \coloneqq M_0 \bigoplus \mathbb{I}_{\setminus (\alpha,\dots,\alpha+k)},
\end{equation*}
with $M_0 \in \mathcal M_{k+1, k+1}(\mathbb R)$ a doubly stochastic matrix such that $(M_0)_{i,j}>0$ if $i \neq j$ and $(M_0)_{i,j}=0$ if $i = j$, does not belong to the WETO polytope. (To show this, aside from Theorem \ref{equi to weto poly}, one can consider the number of ones along the main diagonal to conclude that, whenever we have some $M \in \mathcal M_{|\Omega|, |\Omega|}(\mathbb R)$ acting as the identity on some subset $\Omega' \subseteq \Omega$ and the entries of $d$ in $\Omega'$ differ from those in $\Omega \setminus \Omega'$, then we can restrict ourselves to the weak elementary thermal operations on $\Omega \setminus \Omega'$ to prove this result. This is also useful for proving the other cases.) However, by Theorem \ref{birkhoff}, $M$ does belong to the ETO polytope. 

\item There exist some $1 \leq \alpha \leq |\Omega|-2$ such that $d_{\alpha-1}> d_{\alpha} = d_{\alpha+1} > d_{\alpha+2} > d_{\alpha+3}$ or $d_{\alpha-1}> d_{\alpha} > d_{\alpha+1} = d_{\alpha+2} > d_{\alpha+3}$.
%\footnote{The ordering of $d$ here is not necessarily that of $d^\downarrow$ as most of the instances in this work.}
In this scenario, the first instance follows by Theorems \ref{equi to weto poly} (in the case $|\Omega|=3$) and \ref{difference to eto}. For the second instance, we take $0<\phi_0<1$ (sometimes a more restrictive choice of $\phi_0$ can make the argument easier) and consider 
\begin{equation}
\label{matrix eto weto}
    M \coloneqq \left(1-\phi_0 \right) P^d(\alpha,\alpha+1) \bigoplus \mathbb{I}_{\setminus (\alpha,\alpha+1)} + \phi_0 P^d(\alpha,\alpha+2) \bigoplus \mathbb{I}_{\setminus (\alpha,\alpha+2)}.
\end{equation} 
It is clear, by definition, that $M$ belongs to the ETO polytope.
To conclude this case, we can profit from \eqref{prod inte} and, analogously to \eqref{commutation}, from the fact that, whenever $d_{\alpha+1}=d_{\alpha+2}$ and  $0 \leq \lambda \leq 1$, then 
\begin{equation*}
    T^d_\lambda (\alpha,\alpha+2)=P^d(\alpha+1,\alpha+2) T^d_\lambda (\alpha,\alpha+1) P^d(\alpha+1,\alpha+2).
\end{equation*}
This allows us to conclude that $M$ is part of the WETO polytope if and only if there exist some $\ell,m \in \{0,1\}$ and $N \geq 0$ such that
\begin{equation*}
\begin{split}
    M=& \left( T^d_{\lambda_A} (\alpha+1,\alpha +2) \right)^\ell \left(\prod_{k=1}^N T^d_{\lambda_k} (\alpha,\alpha +1) T^d_{\beta_k} (\alpha+1,\alpha +2) \right) \\ &\times \left( T^d_{\lambda_B} (\alpha,\alpha +1) \right)^m,
    \end{split}
\end{equation*}
where $0 \leq \lambda_A,\lambda_B,\lambda_k,\beta_k\leq 1$ for $1 \leq k \leq N$.
It is then straightforward to check that the last equation is never satisfied. (For instance, one can argue by direct calculation using the zeros in $M$.)

\item There exist some $1 \leq \alpha \leq |\Omega|-2$ such that $d_{\alpha-1}> d_{\alpha} > d_{\alpha+1} > d_{\alpha+2} > d_{\alpha+3}$.
%\footnote{Again, the ordering of $d$ is not necessarily that of $d^\downarrow$.}
In this case, we can also consider $M$ as in \eqref{matrix eto weto} and argue similarly as in $(b)$ for most of the instances. (This is the case since in $(b)$ we usually argue on the number of zeros and, if certain components of some product of $T^d$-transforms are non-zero when $d_{\alpha+1}=d_{\alpha+2}$, then they will also be non-zero whenever we consider the analogous product $d_{\alpha+1}>d_{\alpha+2}$.) For the rest of the instances, one can directly check that the result holds.

\item $|\Omega|=2m$ for some $m \geq 2$ and $d_{2p-2} > d_{2p-1} = d_{2p} > d_{2p+1}$ for $1 \leq p\leq m$.
%This case can be shown analogously to case $(b)$ since it only differs in the introduction of a permutation matrix.
This case can be reduced to case $(b)$ since it only differs in the introduction of copies of a permutation matrix to \eqref{product}. In particular, we take $1 \leq \alpha < m$ and
\begin{equation*}
    M \coloneqq M_0 \bigoplus \mathbb{I}_{\setminus (2\alpha-1,2\alpha,2\alpha+1)},
\end{equation*}
with $M_0 \in \mathcal M_{3, 3}(\mathbb R)$ defined as in \eqref{dim 3 counterex}. Note that $M$ belongs to the ETO polytope by Theorem \ref{equi to eto poly}. To conclude, assume it belongs to the WETO polytope. In this case, it is easy to see that $M$ must have a decomposition like \eqref{product} with the possible inclusion of permutations $P \coloneqq P^d(2\alpha+1,2\alpha+2)$ along the sequence. We conclude showing that any such sequence reduces to \eqref{product} and, hence, $M$ is not in the WETO polytope by Theorem \ref{equi to weto poly}. Consider, thus, the first $P$ appearing in the sequence. Since $P$ commutes with $T^d_\lambda(2 \alpha-1,2\alpha)$, then it is followed either by some $T^d_\lambda(2 \alpha,2\alpha+1)$ or it is the leftmost matrix in the product. In any case, we can consider separately the case where the matrices to the right of $P$ acted trivially and non-trivially on the $2\alpha+1$ component and, using that $M_{2 \alpha+2,2 \alpha+2}=1$, conclude that either the sequence does not yield $M$ or it is equivalent to a sequence with a fewer number of $P$ matrices. Following this argument recursively we reach a sequence like \eqref{product} and obtain the desired conclusion.
\end{enumerate}
This concludes the proof.
\end{proof}

\begin{rem}
As a result of Theorem \ref{equi eto weto poly}, conditioning our experimental protocols on the realization of some random variable results in the elementary thermal operations being closer to weak universality provided $|\Omega| \geq 3$. 
\end{rem}

Note that, although some instances of the proof of Theorem \ref{equi eto weto poly} could be simplified via Proposition \ref{difference weto eto}, we take a simpler approach similar to the one we used when proving Theorem \ref{equi to weto poly}.

\begin{figure}[tb!]
\centering

\begin{subfigure}{0.3\textwidth}
  \centering
  \begin{tikzpicture}
    % First TikZ picture
    \fill [blue!20] (0,0) circle [radius=2cm];
    \fill [white, pattern=horizontal lines, pattern color=blue] (0,0) circle [radius=2cm];
    \fill [white, pattern=vertical lines, pattern color=red] (0,0) circle [radius=2cm];
  \end{tikzpicture}
  \caption{$|\Omega|=2$.}
\end{subfigure}
\hfill
\begin{subfigure}{0.3\textwidth}
  \centering
  \begin{tikzpicture}
    % Second TikZ picture
    \fill [blue!20] (0,0) circle [radius=2cm];
    \fill [white, pattern=horizontal lines, pattern color=blue] (0,0) circle [radius=2cm];
    %\fill [white, pattern=north west lines, pattern color=red] (0,0) circle [radius=2cm];
    \fill [white, pattern=vertical lines, pattern color=red] (0,0) circle [radius=1.2cm];
  \end{tikzpicture}
  \caption{$d^\downarrow=(d_0,\dots,d_0,d_1)$.}
\end{subfigure}
\hfill
\begin{subfigure}{0.3\textwidth}
  \centering
  \begin{tikzpicture}
    % Third TikZ picture
    \fill [blue!20] (0,0) circle [radius=2cm];
    \fill [white, pattern=horizontal lines, pattern color=blue] (0,0) circle [radius=1.2cm];
    \fill [white, pattern=vertical lines, pattern color=red] (0,0) circle [radius=0.6cm];
  \end{tikzpicture}
  \caption{$d^\downarrow \neq (d_0,\dots,d_0,d_1)$.}
\end{subfigure}
\caption{Venn diagram representation of the relation among the thermal polytopes we have considered in this work depending on $0<d \in \mathbb P_\Omega$. In particular, the thermal polytope is filled in light blue and the strong (weak) elementary thermal polytope has horizontal (vertical) blue (red) lines. This figure summarizes Theorems \ref{equi to eto poly}, \ref{equi to weto poly} and \ref{equi eto weto poly}.}
\end{figure}
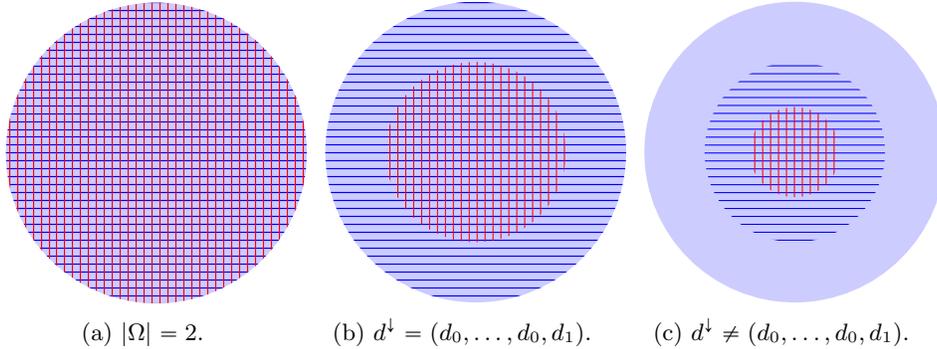

\section{Convexity of weak elementary thermal operations}
\label{sec: convex weto}

To conclude, we use the results in the previous section to address the convexity of weak elementary thermal operations.

As a consequence of our work on the relation between weak and strong elementary thermal operations resource theories in Propositions \ref{difference weto eto} and \ref{weto eto low dim}, we obtain the following straightforward result regarding convexity of weak elementary thermal operations.

\begin{coro}
[WETO resource theory convexity]
\label{convex WETO rt}
    If $0<d \in \mathbb P_\Omega$, then the 
 following statements hold:
\begin{enumerate}[label=(\alph*)]
     \item $\mathcal C_d^{WETO}(p)$ is convex for all $p \in \mathbb P_\Omega$ only if $d$ is quasi-uniform.
    \item If $|\Omega| = 3$, then $\mathcal C_d^{WETO}(p)$ is convex for all $p \in \mathbb P_\Omega$ if and only if $d$ is quasi-uniform.
 \end{enumerate}
\end{coro}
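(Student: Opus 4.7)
The plan is to reduce the convexity question to the coincidence of the ETO and WETO resource theories, which is exactly what Propositions \ref{difference weto eto} and \ref{weto eto low dim} address. The key preliminary observation is that $\mathcal C_d^{ETO}(p)$ is always convex: by definition, if $q_1,q_2\in\mathcal C_d^{ETO}(p)$ then $q_k = M_k p$ with each $M_k$ a convex combination of products of $d$-swaps, and any convex combination $\mu M_1 + (1-\mu)M_2$ remains a convex combination of products of $d$-swaps, so $\mu q_1 + (1-\mu) q_2 \in \mathcal C_d^{ETO}(p)$. Since also $\mathcal C_d^{WETO}(p) \subseteq \mathcal C_d^{ETO}(p)$, the inclusion is tight whenever the WETO future equals the ETO future, and strict whenever it does not.

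For part (a), I would argue by contrapositive. Suppose $d$ is not quasi-uniform. By Proposition \ref{difference weto eto} there exist $p,q\in\mathbb P_\Omega$ with $q\in \mathcal C_d^{ETO}(p)\setminus \mathcal C_d^{WETO}(p)$. Write $q=\sum_{k} \lambda_k M_k p$, where the $\lambda_k\geq 0$ sum to one and each $M_k$ is a product of $d$-swaps; each summand $M_k p$ lies in $\mathcal C_d^{WETO}(p)$ (a product of $d$-swaps is in particular a product of $T^d$-transforms). Hence $q$ is a convex combination of elements of $\mathcal C_d^{WETO}(p)$ but not itself an element, so $\mathcal C_d^{WETO}(p)$ fails to be convex. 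This is exactly the desired contrapositive.

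For part (b), the necessity direction is a direct consequence of (a), so I only need to address sufficiency. If $|\Omega|=3$ and $d$ is quasi-uniform, Proposition \ref{weto eto low dim} yields $\text{RT}_{\text{ETO}}(d) = \text{RT}_{\text{WETO}}(d)$, which translates at the level of reachable sets into $\mathcal C_d^{WETO}(p) = \mathcal C_d^{ETO}(p)$ for every $p\in\mathbb P_\Omega$. By the preliminary observation the latter is convex, and thus so is the former.

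There is no real obstacle in this argument; the work has already been done in the two cited propositions, and all that remains is the elementary translation between the statement ``the reachable set is convex'' and the statement ``the WETO reachable set agrees with the (always convex) ETO reachable set.'' The only minor care needed is to ensure, in part (a), that the convex-combination witness for failure of convexity really has its summands inside $\mathcal C_d^{WETO}(p)$, which is immediate because $d$-swaps are themselves $T^d$-transforms with $\lambda=1$.
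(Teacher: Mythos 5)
Your proposal is correct and follows essentially the same route as the paper: part (a) is proved by contrapositive using the witness pair from Proposition \ref{difference weto eto} (the paper exhibits the explicit segment $\mathcal L(p)$ whose endpoints are products of $d$-swaps applied to $p$, whereas you use the generic decomposition of $q$ as a convex combination of WETO-reachable points, which is the same idea), and part (b) combines (a) with Proposition \ref{weto eto low dim} and the convexity of $\mathcal C_d^{ETO}(p)$. No gaps.
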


\begin{proof}
    \begin{enumerate}[label=(\alph*)]
        \item We can argue by contrapositive, assuming we have a non quasi-uniform distribution $d$ such that $d_1>d_2>d_3$ w.l.o.g. and noting that, if we take $(p,0), (q,0) \in \mathbb P_\Omega$ as in Proposition \ref{difference weto eto}, then
        \begin{equation*}
%\label{def L(p)}
    (q,0) \in (\mathcal L(p),0) \coloneqq \left[\left(P^d(2,3)P^d(1,2)P^d(1,3)p,0\right),\left(P^d(1,2)p,0\right)\right]
\end{equation*}
with $\text{ext}((\mathcal L(p),0)) \subseteq \mathcal C_d^{WETO}((p,0))$ and $(q,0) \not \in \mathcal C_d^{WETO}((p,0))$.
        \item Necessity follows by (b), while sufficiency follows by Proposition \ref{weto eto low dim} since $\mathcal C_d^{WETO}(p)=\mathcal C_d^{ETO}(p)$ and the latter is convex by definition.
    \end{enumerate}
\end{proof}

Regarding polytopes, we  can use
%As an application, in the following corollary, we use 
Theorem \ref{equi eto weto poly} to  obtain a full characterization of the convexity of the WETO polytope.

\begin{coro}
[Characterization WETO polytope convexity]
\label{convex WETO}
    If $0<d \in \mathbb P_\Omega$, then the 
 following statements are equivalent:
\begin{enumerate}[label=(\alph*)]
     \item The polytope of weak elementary thermal operations is convex.
    \item $|\Omega| = 2$.
    %and the number of $d_0$ equal to $|\Omega|-1$.
 \end{enumerate}
\end{coro}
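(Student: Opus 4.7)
The plan is to deduce the corollary almost entirely from \textbf{Theorem \ref{equi eto weto poly}}, once we record two elementary containments between the ETO and WETO polytopes and observe that $\mathcal P_{\text{ETO}}(d)$ is, by construction, the convex hull of a certain set of matrices.

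For the easy direction $(b) \Rightarrow (a)$, I would invoke \textbf{Lemma \ref{equiv dim 2}}, which collapses all three polytopes whenever $|\Omega|=2$. Since $\mathcal P_{\text{ETO}}(d)$ is defined as a set of convex combinations of products of $d$-swaps, it is convex by construction; hence so is $\mathcal P_{\text{WETO}}(d) = \mathcal P_{\text{ETO}}(d)$ in this case.

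For the main direction $(a) \Rightarrow (b)$, I would argue by contrapositive, assuming $|\Omega| \geq 3$. Let $S$ denote the set of finite products of $d$-swaps. The two key observations are:
\begin{enumerate}[label=(\roman*)]
    \item $\mathcal P_{\text{WETO}}(d) \subseteq \mathcal P_{\text{ETO}}(d)$. Indeed, any product of $T^d$-transforms can, by expanding \eqref{td-trans def} and using distributivity of multiplication over convex combinations, be rewritten as a convex combination of products of $d$-swaps, and thus lies in $\mathcal P_{\text{ETO}}(d)$.
    \item $S \subseteq \mathcal P_{\text{WETO}}(d)$, because every $d$-swap $P^d(i,j)$ equals the $T^d$-transform $T^d_1(i,j)$ and the identity equals $T^d_0(i,j)$, so any element of $S$ is in particular a product of $T^d$-transforms.
\end{enumerate}
Now by \textbf{Definition \ref{defi ETO}}, $\mathcal P_{\text{ETO}}(d)$ is precisely the convex hull of $S$. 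Hence, if $\mathcal P_{\text{WETO}}(d)$ were convex, (ii) would force $\mathcal P_{\text{ETO}}(d) = \operatorname{conv}(S) \subseteq \mathcal P_{\text{WETO}}(d)$, which combined with (i) yields $\mathcal P_{\text{WETO}}(d) = \mathcal P_{\text{ETO}}(d)$. This contradicts \textbf{Theorem \ref{equi eto weto poly}} under the assumption $|\Omega| \geq 3$, completing the contrapositive.

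There is essentially no substantive obstacle here, since the heavy lifting has already been done in Theorem \ref{equi eto weto poly}. The only care required is to be explicit that $\mathcal P_{\text{ETO}}(d)$ is literally the convex hull of the set of products of $d$-swaps (a reading of Definition \ref{defi ETO}), and to justify the two containments (i) and (ii) cleanly so that convexity of $\mathcal P_{\text{WETO}}(d)$ really does upgrade the inclusion $S \subseteq \mathcal P_{\text{WETO}}(d)$ to $\operatorname{conv}(S) \subseteq \mathcal P_{\text{WETO}}(d)$.
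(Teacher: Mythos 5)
Your proposal is correct and follows essentially the same route as the paper: both directions rest on Lemma \ref{equiv dim 2} for $|\Omega|=2$ and, for $|\Omega|\geq 3$, on the observation that every product of $d$-swaps lies in $\mathcal P_{\text{WETO}}(d)$ while $\mathcal P_{\text{ETO}}(d)$ is the convex hull of such products, so convexity of $\mathcal P_{\text{WETO}}(d)$ would force $\mathcal P_{\text{WETO}}(d)=\mathcal P_{\text{ETO}}(d)$, contradicting Theorem \ref{equi eto weto poly}. The paper phrases this by exhibiting a specific $M\in\mathcal P_{\text{ETO}}(d)\setminus\mathcal P_{\text{WETO}}(d)$ and noting that its convex decomposition uses points of $\mathcal P_{\text{WETO}}(d)$, but the logic is identical to yours.
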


\begin{proof}
By Lemma \ref{ordered d}, it suffices to assume that $d=d^\downarrow$.
 Necessity follows from Lemma \ref{equiv dim 2} since the TO polytope is convex and, in this instance, equal to the WETO polytope. To prove sufficiency, note that, provided $|\Omega|>2$, there exists some $M \in \mathcal M_{|\Omega|, |\Omega|}(\mathbb R)$ that belongs to the ETO polytope and not to the WETO polytope by Theorem \ref{equi eto weto poly}. Moreover, by definition,
    \begin{equation*}
        M = \sum_{k=1}^{k_0} \lambda_k \prod_{\ell=1}^{\ell_0} P^d(i_{k,\ell},j_{k,\ell}),
    \end{equation*}
    where $\sum_{k=1}^{k_0} \lambda_k =1$ and $\lambda_k \geq 0$ and $1 \leq i_{k,\ell} < j_{k,\ell} \leq |\Omega|$  for $1 \leq k \leq k_0$ and  $1 \leq \ell \leq \ell_0$. Lastly, since $\prod_{\ell=1}^{\ell_0} P^d(i_{k,\ell},j_{k,\ell})$ belongs to it for $1 \leq k \leq k_0$, the WETO polytope does not contain a convex combination of points in it. Hence, it is not convex. 
\end{proof}

As a direct consequence of Corollary \ref{convex WETO}, the WETO polytope is a polytope according to the definition in \cite{brondsted2012introduction} if and only if $|\Omega|=2$. Note that we can show, provided $d$ is not quasi-uniform, that $\mathcal P_{\text{WETO}}(d)$ is not convex using Corollary \ref{convex WETO rt}. This is the case since, for any such $d$, we can use Corollary \ref{convex WETO rt} to find some $p\in \mathbb P_\Omega$ such that $\mathcal C^{WETO}_d(p)$ is not convex, which is impossible provided $\mathcal P_{\text{ETO}}(d)$ is convex. (If $r,q \in \mathcal C^{WETO}_d(p)$ and $\mathcal P_{\text{ETO}}(d)$ is convex, then $(1-\lambda) q + \lambda r = (1-\lambda) M_0 p + \lambda M_1 p = M_2 p \in \mathcal C^{WETO}_d(p)$ for all $0 \leq \lambda \leq 1$, where $M_0,M_1,M_2 \in \mathcal P_{\text{ETO}}(d)$.)

\section{Conclusion}

The answers to our main questions are the following:

\begin{enumerate}[label=(Q\arabic*)]
    \item When are elementary thermal operations \emph{weakly} universal?

Both strong and weak elementary thermal operations are weakly universal if and only if $d^\downarrow=(d_0,\dots,d_0,d_1)$. Interestingly, the addition of random variables to our experimental protocols does not offer any advantage regarding the instances where elementary thermal operations are weakly universal.
    
    \item When are elementary thermal operations universal?

Strong elementary thermal operations are universal if and only if $d^\downarrow=(d_0,\dots,d_0,d_1)$. However, weak elementary thermal operations are universal if and only if $|\Omega|=2$. This illustrates the advantage of adding random variables to our experimental protocols, in contrast with the situation when dealing with weak universality.
    
    \item When do the incorporation of  random variables to the elementary thermal operations offer an advantage regarding weak or non-weak universality?

    Our answers to (Q1) and (Q2) show that the incorporation of random variables is not advantageous regarding the instances where elementary thermal operations are weakly universal, although it does augment the instances where they are universal. Moreover, even if they are not universal, protocols with random variables can realize more thermal operations than those without them in most cases. In particular, they cover more thermal operations if and only if $|\Omega|\geq 3$. Lastly, even if they are not weakly universal, protocols with random variables can reproduce more input-output pairs that are connected via thermal operations provided there are at least three different energy levels. In fact, these are the only situations where they offer such advantage provided $|\Omega| \leq 3$. However, it is still unknown whether there are instances where the advantage still holds with $|\Omega| > 3$ and only two different energy levels.  
\end{enumerate}

To conclude, let us point out a few directions of future research. The most immediate question is whether Conjecture \ref{conjecture} holds. This would conclude the classification of the thermal resource theories we have considered here and, from the technical side, potentially provide an easier proof of the classical result by Hardy et al. in Theorem \ref{equi to weto uniform}.

The study of thermal operations \eqref{quan to} for general $\rho$ continues to constitute a main challenge in the field, with only a few known results \cite{gour2018quantum}. Closer to our approach, the study of the evolution of coherence under elementary thermal operations also constitutes an area where little is known \cite{lostaglio2018elementary}. Moreover, the extremes of the ETO polytope and resource theory are still not well-understood, with only some (potentially bad) upper bounds being known \cite{lostaglio2018elementary,son2022catalysis}. Algorithmic cooling \cite{lostaglio2018elementary,alhambra2019heat} may benefit from out work, since some known algorithms \cite[Theorem 1]{alhambra2019heat} rely on thermal operations. (In particular, on \emph{$\beta$-permutations}, which are extremes of the TO polytope from Theorem \ref{extremes to}.)

We have characterized the main case of interest where we only allow two levels to act non-trivially. An obvious open question is how this changes whenever we progressively increase the number of levels where we are allowed to simultaneously act non-trivially. A first result showing that one cannot achieve all thermal operations on a system with $|\Omega|$ levels using only operations that act non-trivially on $|\Omega|-1$ of them was reported in \cite[Section III]{mazurek2018decomposability}, where a setup analogous to that in Proposition \ref{difference to eto} (a) (which we inherit from \cite[Corollary 5]{lostaglio2018elementary}) was used. The experimental relevance of doing so is, however, still unclear, since the experimental realization of such models becomes harder as the number of levels where non-trivial action is permitted increases.   

An application of our work here could be the establishment of the function characterization of $\preceq_d$ for $d^\downarrow = (d_0,\dots,d_0,d_1)$, that is, the characterization of all the functions that are allowed to be involved in a second law for $\preceq_d$. The key technical tool we have developed that could contribute in doing so is the so-called \emph{path} result in Theorem \ref{wETO equiv} (see \cite[p. 586 and p. 81]{marshall_inequalities:_2011} and also \cite[p. 45]{bhatia2013matrix}). This becomes more interesting when put together with the question in the previous paragraph, since the establishment of results analogous to the path one but involving larger proper subsets of $\Omega$ may lead us to the characterization of any function involved in a second law for arbitrary $d$. Lastly, the approach in \cite[p. 7]{de2022geometric} may also be useful to study the second law for $\preceq_d$ in general.

\paragraph{Acknowledgments}
This research is part of the Munich Quantum Valley, which is supported by the Bavarian state government with funds from the Hightech Agenda Bayern Plus.

\newpage
\bibliographystyle{unsrt}
\bibliography{main.bib}

\end{document}